\def\ps@headings{%
\def\@oddhead{\mbox{}\scriptsize\rightmark \hfil \thepage}%
\def\@evenhead{\scriptsize\thepage \hfil \leftmark\mbox{}}%
\def\@oddfoot{}%
\def\@evenfoot{}}
\newtheorem{theorem}{Theorem}
\newtheorem{proposition}{Proposition}
\newtheorem{corollary}{Corollary}
\DeclareMathOperator*{\argmax}{arg\,max}
\begin{document}

\title{Fundamental Limits of Routing Attack on Network Overload}

\author{Xinyu Wu, Eytan Modiano \\
Laboratory for Information and Decision Systems, MIT, USA \\
\{xinyuwu1,modiano\}@mit.edu}

\maketitle

\IEEEpeerreviewmaketitle


\begin{abstract}
We quantify the threat of network adversaries to inducing \emph{network overload} through \emph{routing attacks}, where a subset of network nodes are hijacked by an adversary. We develop routing attacks on the hijacked nodes for two objectives related to overload: \emph{no-loss throughput minimization} and \emph{loss maximization}. The first objective attempts to identify a routing attack that minimizes the network's throughput that is guaranteed to survive. We develop a polynomial-time algorithm that can output the optimal routing attack in multi-hop networks with global information on the network's topology, and an algorithm with an approximation ratio of $2$ under partial information. The second objective attempts to maximize the throughput loss. We demonstrate that this problem is NP-hard, and develop two approximation algorithms with multiplicative and additive guarantees respectively in single-hop networks. We further investigate the adversary's optimal selection of nodes to hijack that can maximize network overload. We propose a heuristic polynomial-time algorithm to solve this NP-hard problem, and prove its optimality in special cases. We validate the near-optimal performance of the proposed algorithms over a wide range of network settings. Our results demonstrate that the proposed algorithms can accurately quantify the risk of overload given an arbitrary set of hijacked nodes and identify the critical nodes that should be protected against routing attacks.
\end{abstract}

\section{Introduction}


Recent years witnessed the escalating prevalence and severity of network attacks, evidenced by the growing number of incidents and their devastating impact \cite{sermpezis2018survey}. These attacks often result in substantial degradation in network performance, such as lower throughput and higher latency. For example, the 2018 Pakistan Telecom hijacking incident caused considerable network disruption and extensive delays \cite{sun2021securing}, while the 2016 Dyn DDoS attack led to widespread outages that significantly impaired platforms such as Twitter and Netflix \cite{mansfield2016ddos}. 

In this paper, we focus on the degradation of network performance due to \emph{routing attacks}, wherein adversaries hijack  network servers and change their routing decisions \cite{Securing2021}. Routing attacks are a notable form of network attacks with broad impact, which can last for several hours or longer  before being resolved \cite{sermpezis2018survey,al2016bgp}. 
Examples of routing attacks include BGP hijacking, where an attacker falsely claims ownership of an IP prefix to affect routing \cite{schlamp2016heap,cho2019bgp}; routing table poisoning, where false routing information is injected into a victim's routing table \cite{sert2015impacts,song2017novel}; OSPF attacks, which involve fabricating topology information to control routing \cite{nakibly2012persistent}; and blackhole attacks, which diverts the traffic to non-existent destinations \cite{bang2022assessment}. Routing attacks have been detected in a wide range of networks, including data center networks \cite{feldmann2016netco}, software-defined networks \cite{wang2012routing}, and wireless ad hoc networks \cite{sert2015impacts}.

This paper aims to quantify the threat of network adversaries applying routing attacks to
induce \emph{network overload}, where traffic injected into the network cannot be fully transmitted to the destination,
resulting in throughput loss due to network link saturation \cite{li2014dynamic,georgiadis2006optimal,como2012robust}.
A high overload level contributes to performance degradation such as queue instability \cite{shah2011fluid,perry2016chattering} and high latency \cite{venkataramanan2013queue,zhang2022aequitas}, which has become increasingly prevalent in enterprise-level data center networks \cite{li2014dynamic,poutievski2022jupiter,zhang2022aequitas}, due to the demand surge of machine learning workloads \cite{zheng2023traffic,wang2023topoopt} and the slower growth of electrical switch capacity \cite{poutievski2022jupiter, ballani2020sirius}. We develop routing attack strategies for two objectives related to overload: (i) no-loss throughput minimization and (ii) loss maximization. The no-loss throughput of a network represents the maximum traffic arrival rate at the source that can be delivered to the destination without loss. 
The difference between the two objectives is that minimizing the no-loss throughput is equivalent to maximizing the range of traffic arrival rates to the network that will cause overload, while maximizing the loss is equivalent to minimizing the network throughput  given the traffic arrival rate.

Evaluating the routing attacks' ability to inducing network overload remains a relatively unexplored research area \cite{sermpezis2018survey}, unlike other attack types including denial-of-service \cite{fu2019network}, link removal \cite{como2012robust}, and node removal  \cite{tian2017articulation}. 
Routing attacks have presented high risks due to their low implementation cost for adversaries \cite{feldmann2016netco}, inadequate design of detection and defense
mechanisms \cite{sermpezis2018survey}, and broader impacts in software-defined networks where a hijacked controller in the control plane may affect the routing decisions of multiple nodes in the data plane \cite{poutievski2022jupiter,ferguson2021orion}. 
We are motivated to develop algorithms for  network service providers to evaluate the potential impact of network overload that can be caused by routing attacks. We demonstrate that our proposed algorithms can well approximate the maximum level of overload given arbitrary sets of hijacked nodes, and identify the critical nodes that are of highest priority to be protected from routing attacks. 

We summarize the contributions of this work as follows. (i) \emph{No-Loss Throughput Minimization}: We develop an exact polynomial-time routing attack algorithm which can minimize the no-loss throughput in general multi-hop networks with arbitrary sets of hijacked nodes. We further propose a 2-approximation routing attack algorithm purely based on the partial network information downstream to the hijacked nodes. (ii) \emph{Loss Maximization}: We establish the NP-hardness of the loss maximization problem. We develop two approximation algorithms with multiplicative and additive performance guarantee respectively in single-hop networks. (iii) \emph{Optimal Selection of Nodes to Hijack}: We investigate the adversary's optimal selection of nodes to hijack over a set of candidate nodes to optimize the aforementioned two objectives via routing attacks. We prove its NP-hardness in general networks, propose heuristic algorithms and prove the optimality for no-loss throughput minimization when candidate nodes for the adversary are parallel to each other in the network. (iv) \emph{Performance Evaluation}: We evaluate the proposed algorithms and demonstrate their near-optimal performance under a wide range of network settings, including different network densities, default routing policies, and numbers of hijacked nodes. To the best of our knowledge, this is the first quantitative study of routing attacks on network overload with both theory and empirical validation.

The rest of this paper is organized as follows: Section \ref{sec:model_definition} introduces the network models we study and defines the optimal routing attack problem. Section \ref{sec:no_loss_throughput} investigates the no-loss throughput minimization problem. Section \ref{sec:max_loss} investigates the loss maximization problem. Section \ref{sec:optimal_node_selection} investigates the adversary's optimal selection of nodes to hijack and conduct routing attack for the aforementioned two metrics; Section \ref{sec:evaluation} presents the evaluation of the proposed routing attack algorithms under various network settings.

\section{Network Models and Problem Definition}
\label{sec:model_definition}

In this section, we introduce the multi-hop and single-hop network models, and then define the problems of finding the optimal routing attack to minimize no-loss throughput and maximize loss. 

\subsection{Network Models}

\subsubsection{Multi-hop network}

We define a multi-hop network as a graph $\mathcal{G}=(\mathcal{V},\mathcal{E})$, where $\mathcal{V}$ denotes the set of nodes and $\mathcal{E}$ denotes the set of links, with
$|\mathcal{V}| = N$ and $|\mathcal{E}| = M$. We denote the node indices by $1,2,\cdots,N$, and  $(i,j)\in \mathcal{E}$ if there is a link from node $i$ to $j$. We consider a single commodity with node $1$ as the source and node $N$ as the destination. Denote the external traffic arrival rate to the network at the source by $\lambda$, and the traffic transmission rate over link $(i,j)$ by $f_{ij}$, defined as the amount of traffic over $(i,j)$ in a time unit.
Each link has a capacity value, denoted by $c_{ij}$ which is the maximum transmission rate over $(i,j)$, i.e., $f_{ij}\in [0, c_{ij}]$. {We treat $(i,j)$ and $(j,i)$ as different links}. 

The network traffic may have multiple available paths from node $1$ to $N$, where each node dispatches traffic to its connected nodes. We define the dispatch ratio vector at a non-destination node $i$ as a vector $\mathbf{x}_i \in \mathbb{R}^N$, where each element $x_{ij}$ denotes the fraction of traffic sent from node $i$ to its connected node $j$ among the total traffic at node $i$, with $x_{ij} = 0$ for $(i,j)\notin \mathcal{E}$, and $\sum_{j=1}^N x_{ij} = 1$, where the sum of dispatch ratios of node $i$ should be $1$. 
We call $\mathbf{x}_i$ the \emph{routing policy} at node $i$, and $\mathbf{X} = [\mathbf{x}_i]_{i=1,\cdots,N}$ denotes the \emph{routing matrix} of the network\footnote{$\mathbf{x}_N = \boldsymbol{0}$. We put node $N$ in to keep the dimension equal to network size.}. 

Examples of the routing policies defined above include random packet spraying \cite{dixit2013impact}, where a node dispatches traffic uniformly to its connected nodes, i.e., $x_{ij} = \frac{1}{|k\in \mathcal{V} \mid (i,k)\in \mathcal{E}|},~\forall (i,j)\in \mathcal{E}$, and weighted spraying based on link capacity where $x_{ij} = \frac{c_{ij}}{\sum_{k: (i,k)\in \mathcal{E}} c_{ik}},~\forall (i,j)\in \mathcal{E}$. 
We further show in Proposition \ref{prop:equivalence} the equivalence between the above dispatch ratio characterization and the multi-path characterization for a commodity  \cite{fu2019network}, for example the shortest-path and equal-cost multi-path routing \cite{zhou2014wcmp}. 
Henceforth, we will solely use the dispatch ratio vector to model the routing policy. We give an example of a routing policy in Fig.~\ref{fig:Paper_example_multihop_networks_and_routing_attack}(a) where at node $1$, $(x_{12},x_{13})=(0.5,0.5)$.

\begin{proposition}
\label{prop:equivalence}
{Suppose there are $P$ paths 
$\{Path_p\}_{p=1}^P$ of a commodity, 
where a fraction $\theta_p$ of the traffic takes $Path_p$. The corresponding routing matrix $\mathbf{X}$ satisfies $x_{ij} = \beta_{ij} / \sum_{k:(i,k)\in \mathcal{E}}\beta_{ik},~\forall (i,j)\in \mathcal{E}$, where $\beta_{ij} = \sum_{p: (i,j)\in Path_p} \theta_p$, and $x_{ij}=0,~\forall (i,j)\notin \mathcal{E}$, where  $(i,j) \in Path_p$ if link $(i,j)$ is on $Path_p$.} 
\end{proposition}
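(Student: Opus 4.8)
The plan is to read the asserted ``equivalence'' as the statement that the two descriptions of a commodity's routing induce the \emph{same per-link traffic rates}, and to establish the direction made explicit in the proposition: given paths $\{Path_p\}_{p=1}^{P}$ carrying fractions $\theta_p$ with $\sum_{p}\theta_p=1$ (each $Path_p$ a simple path from the source $1$ to the destination $N$), the matrix $\mathbf{X}$ with $x_{ij}=\beta_{ij}/\sum_{k:(i,k)\in\mathcal{E}}\beta_{ik}$ is a valid routing matrix, and the flow it induces is $f_{ij}=\lambda\beta_{ij}$ on every link, which is precisely the flow carried by the path decomposition. It is convenient to set $\rho_i:=\sum_{p:\,i\in Path_p}\theta_p$, the fraction of traffic routed through node $i$; in particular $\rho_1=1$.

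First I would check that $\mathbf{X}$ is admissible. Since each $Path_p$ is simple and ends at $N$, every node $i\neq N$ lying on $Path_p$ has exactly one outgoing link on $Path_p$; summing over $p$ gives $\sum_{k:(i,k)\in\mathcal{E}}\beta_{ik}=\rho_i$. Hence for each node with $\rho_i>0$ the normalizing denominator is positive and $\sum_{k}x_{ik}=\sum_{k}\beta_{ik}/\rho_i=1$, while $x_{ik}\geq 0$ and $x_{ik}=0$ for $(i,k)\notin\mathcal{E}$ are immediate from the definition of $\beta_{ik}$. Nodes with $\rho_i=0$ lie on no path and carry no traffic, so their routing rows may be fixed arbitrarily (consistent with the convention $\mathbf{x}_N=\boldsymbol 0$) without affecting the induced flow.

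Next I would identify the flow induced by $\mathbf{X}$. Under the dispatch-ratio model the link rates solve $f_{ij}=x_{ij}g_i$ with $g_1=\lambda+\sum_{j:(j,1)\in\mathcal{E}}f_{j1}$ and $g_i=\sum_{j:(j,i)\in\mathcal{E}}f_{ji}$ for $i\neq 1$, where $g_i$ is the total traffic at node $i$. Because a simple path from $1$ to $N$ never returns to node $1$, we have $\beta_{j1}=0$ for all $j$, so $g_1=\lambda$ and the system reduces to the standard source--sink conservation system. I would then verify directly that $g_i=\lambda\rho_i$, $f_{ij}=\lambda\beta_{ij}$ is a solution: the dispatch rule gives $x_{ij}g_i=(\beta_{ij}/\rho_i)\lambda\rho_i=\lambda\beta_{ij}$, and for $i\neq 1$ the incoming rate is $\sum_{j}f_{ji}=\lambda\sum_{j}\beta_{ji}=\lambda\sum_{p:\,i\in Path_p}\theta_p=\lambda\rho_i$; moreover $\sum_{j}f_{jN}=\lambda\sum_{p}\theta_p=\lambda$, so all injected traffic is delivered. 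Uniqueness of this solution --- needed to conclude it is indeed the flow induced by $\mathbf{X}$ --- follows because every node carrying positive traffic has a directed path to $N$ within the support of $\mathbf{X}$, obtained by following any $Path_p$ through that node. The opposite direction, that any routing matrix can be realized by a weighted set of paths, is the classical flow-decomposition theorem applied to $f_{ij}=x_{ij}g_i$, so it needs no separate argument.

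The step I expect to require the most care is this uniqueness claim: one must restrict the conservation equations to the subgraph of links with $\beta_{ij}>0$ (equivalently, to nodes with $\rho_i>0$), argue that this subgraph has the ``reach-the-sink'' property so the restricted system is uniquely solvable, and separately check that every node with $\rho_i=0$ carries zero flow regardless of how its routing row is set. The remaining steps are routine counting based on the simplicity of the paths.
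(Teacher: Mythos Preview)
Your argument is correct and, in fact, more careful than the paper's own proof. The paper simply writes down the two conversion formulas: from paths to dispatch ratios it sets $x_{ij}=\beta_{ij}/\beta_i$ with $\beta_i=\sum_{k}\beta_{ik}$, and from a routing matrix back to paths it declares $\theta_p=\prod_{(i,j)\in Path_p} x_{ij}$, without verifying that either construction reproduces the same link flows. You, by contrast, take ``equivalence'' to mean equality of induced link rates and actually prove it: you check that $\mathbf{X}$ is a valid stochastic matrix via $\sum_k\beta_{ik}=\rho_i$, exhibit $f_{ij}=\lambda\beta_{ij}$ as a solution of the dispatch-ratio conservation system, and then argue uniqueness from the reach-the-sink property. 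This is the substantive content the paper's sketch omits. The one place the paper is more explicit is the reverse direction, where it gives the concrete product formula for $\theta_p$ rather than invoking flow decomposition; your appeal to the decomposition theorem is valid but less constructive. Your caution about uniqueness on the $\beta$-support subgraph is well placed and is exactly the point the paper glosses over.
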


\begin{proof}
We denote $i \overset{p}{\rightarrow} j$ for a path $Path_p$ if node $j$ is the next node to be visited after node $i$ in a path $Path_p$. Given the available paths, the total fraction of traffic that goes through link $(i,j)$ is $\beta_{ij} := \sum_{p: i,j \in Path_p, i \overset{p}{\rightarrow} j} \theta_p$, and total fraction of traffic that takes paths containing node $i$ is $\beta_{i} := \sum_{j:(i,j)\in \mathcal{E}}\beta_{ij}$. Then we can construct the dispatch ratio $x_{ij} = \beta_{ij} / \beta_{i}$ for any $(i,j)\in \mathcal{E}$. Meanwhile, given the routing matrix, the portion of traffic that takes $Path_p = \{s,i_1^p, \cdots, i_L^p, d\}$ for some $L>0$ can be calculated as $\theta_p = x_{s,i_1^p} \times \prod_{l=2}^L x_{i_{l-1}^p,i_{l}^p} \times x_{i_L^p,d}$ ($L=0$ is the special case where $Path_p = \{s,d\}$ where $\theta_p = x_{s,d}$).
\end{proof}

We can characterize the \emph{overload} at a link $(i,j)$ using the definition of the routing policy. If the traffic to be sent through $(i,j)$ exceeds $c_{ij}$, i.e., $\left(\sum_{k: (k,i)\in \mathcal{E}} f_{ki}\right) x_{ij} > c_{ij}$, then link $(i,j)$ will be \emph{saturated}, i.e., $f_{ij}=c_{ij}$, and we say that overload occurs at $(i,j)$, where the excess traffic $\left(\sum_{k: (k,i)\in \mathcal{E}} f_{ki}\right) x_{ij}-c_{ij}$ will be dropped, and lost.

We define the \emph{upstream} and \emph{downstream} node set used in algorithm design. The upstream node set of a node $i$, denoted by $\mathcal{V}_i^{up}$, contains all the nodes except $i$ that have a path in the available path set to transmit traffic to $i$ under the routing matrix $\mathbf{X}$. 
The downstream node set of a node $i$, denoted by $\mathcal{V}_i^{down}$, contains all the nodes except $i$ to which there exists a path in the available path set starting from $i$. We define a node $j\in \mathcal{V}_i^{down}$ to be a connected downstream node of $i$ if further $(i,j)\in \mathcal{E}$. {We define a link $(j,k)$ to be a downstream link of node $i$ if $j,k \in \mathcal{V}_i^{down} \cup \{i\}$, and a connected downstream link of node $i$ if further $j = i$.}
{Note that $\mathcal{V}_i^{up} \cap \mathcal{V}_i^{down} = \emptyset$ holds in directed acyclic networks, while there may exist nodes that are simultaneously the upstream and downstream node of a node $i$ in general networks. The framework and analysis in this paper hold for general networks. However, for ease of understanding we use examples with $\mathcal{V}_i^{up} \cap \mathcal{V}_i^{down} = \emptyset$ for explanation. For example in Fig.~\ref{fig:Paper_example_multihop_networks_and_routing_attack}(a), we have $\mathcal{V}_5^{up} = \{1,3\}$ and $\mathcal{V}_2^{down}=\{4,6\}$.}


\subsubsection{Single-hop network} A single-hop network contains a set of ingress nodes and egress nodes. Examples include server farms, switched networks, and the basic structure of data center networks like Fat-Tree \cite{al2008scalable} and Clos \cite{singh2015jupiter}, as shown in Fig.~\ref{fig:Paper_Single_Hop_Network_Examples}(a) and (b). We develop theoretical performance guarantees for our proposed attack algorithms under single-hop networks in Section \ref{sec:max_loss}. 

\begin{figure}[!htbp]
\centering
\includegraphics[width=1.0\linewidth]{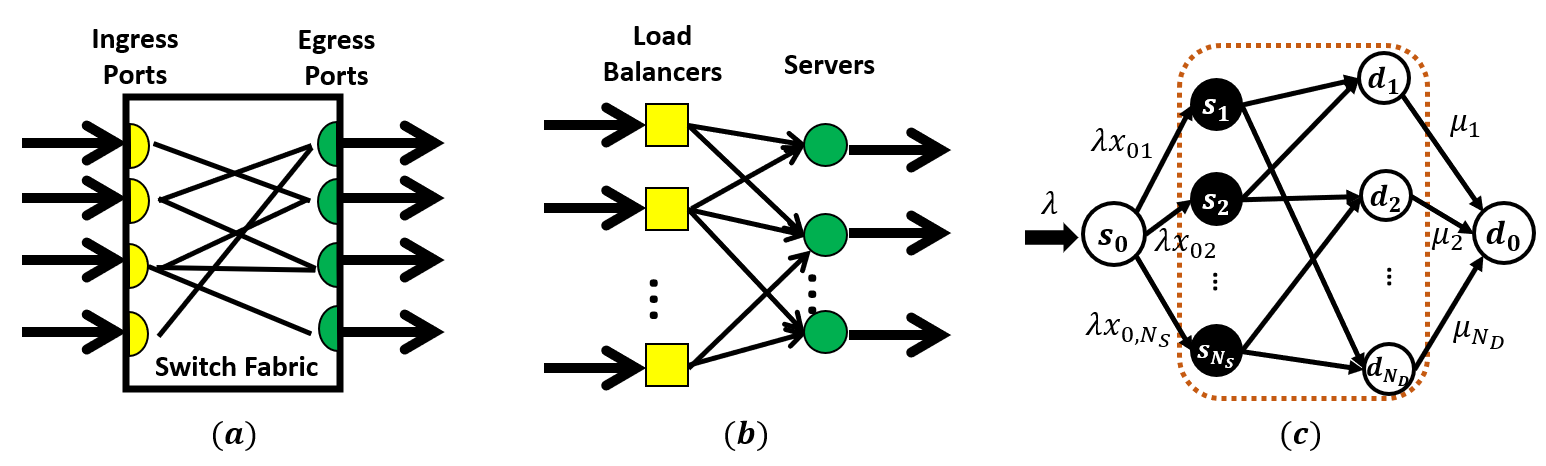}
\caption{(a) Switch network; (b) Server farm; (c) Bipartite graph (in the dashed box) with meta source $s_0$ and destination $d_0$}
\label{fig:Paper_Single_Hop_Network_Examples}
\end{figure}

An $N_S \times N_D$ single-hop network
is a bipartite graph $\mathcal{G}=(\mathcal{V},\mathcal{E})$ with $\mathcal{V}:=\{\mathcal{V}_S, \mathcal{V}_D\}$, where $\mathcal{V}_S$ and $\mathcal{V}_D$ represent the set of ingress and egress nodes respectively, and $\mathcal{E}$ denotes the set of links between $\mathcal{V}_S$ and $\mathcal{V}_D$, and $|\mathcal{V}_S|=N_S$ and $|\mathcal{V}_D|=N_D$. 
Denote the $i$th ingress node by $s_i$ and $j$th egress node by $d_j$. 
We use slightly different notations in single-hop networks as follows. Denote the traffic arrival rate at ingress node $s_i$ by $\lambda_i$, the service rate of egress node $d_j$ by $\mu_j$, and the transmission rate over link $(s_i,d_j)$ by $f_{ij}$, with their corresponding vector forms being $\boldsymbol{\lambda}:=\{\lambda_i\}_{i=1}^{N_S}$, $\boldsymbol{\mu}:=\{\mu_j\}_{j=1}^{N_D}$, and $\mathbf{f}=\{f_{ij}\}_{(i,j)\in \mathcal{E}}$, respectively. Denote the routing policy at $s_i$ by $\mathbf{x}_i = \{x_{ij}\}_{d_j\in \mathcal{V}_D}$.
We consider sufficient capacity over links between $\mathcal{V}_S$ and $\mathcal{V}_D$ so that they will not be saturated, with a primary focus on how a routing attack at $\mathcal{V}_S$ affects overload at $\mathcal{V}_D$ \cite{li2014dynamic,fu2019fundamental}. Overload occurs at $d_j\in \mathcal{V}_D$ if $\sum_{i=1}^{N_S} \lambda_i x_{ij} > \mu_j$ under routing matrix $\mathbf{X}$.

We can transform the single-hop structure to a commodity starting from a meta-source $s_0$, connected to each ingress node in $\mathcal{V}_S$, to a meta-destination $d_0$ that receives traffic from each egress node in $\mathcal{V}_D$, by setting the traffic arrival rate to $s_0$ to be $\lambda = \sum_{i=1}^{N_S} \lambda_i$, the dispatch ratio from $s_0$ to $s_i$ to be $x_{0i} = \lambda_i / \lambda$, and the capacity of link $(d_j,d_0)$ to be $\mu_j$. It is straightforward to see that the transformed graph is equivalent to the bipartite graph, shown in Fig.~\ref{fig:Paper_Single_Hop_Network_Examples}(c).

\textbf{Remark:} We focus on static parameters in this paper. We anticipate that the methodology we propose will inspire addressing time-varying network parameters in future work.



\subsection{Problem Definition}

We consider a network adversary who hijacks and gains access to  the routing policies of certain network nodes
$\mathcal{V}_A \subseteq \mathcal{V}$. We call $\mathcal{V}_A$ the set of  \emph{adversarial nodes}, and $\mathcal{V}_N:=\mathcal{V}\backslash 
\mathcal{V}_A$ the set of \emph{normal nodes}. Each normal node $i \in \mathcal{V}_N$ follows some \emph{default} routing policies $\mathbf{x}_i$, and the adversary aims to find the routing policies at adversarial nodes, denoted by $\mathbf{x}_A := \{\mathbf{x}_i\}_{i\in \mathcal{V}_A}$, to maximize overload. 
In the example in Fig.~\ref{fig:Paper_example_multihop_networks_and_routing_attack}(a), $\mathcal{V}_A = \{3\}$ and $\mathcal{V}_N=\{1,2,4,5,6\}$. The whole numbers denote link capacities and the highlighted fraction over a link denotes the dispatch ratio. We use shaded nodes to represent the adversarial nodes and unshaded nodes to represent the normal nodes.


We optimize routing attacks for two objectives related to overload given $\mathcal{V}_A$. (i) \textbf{Minimize No-Loss Throughput}: We define \emph{no-loss throughput}, denoted by $\lambda^*$, as the maximum traffic arrival rate $\lambda$ at the source node that will not lead to link overload given the routing matrix $\mathbf{X}$. Equivalent interpretations of $\lambda^{*}$ include the maximum traffic arrival rate that can be successfully transmitted \cite{como2012robust2}, the max arrival rate that guarantees maximum link utilization below $100\%$ \cite{zhang2021gemini}, and the max arrival rate that ensures queue stability \cite{fu2019fundamental}. 
Essentially, $\lambda^*$ captures the network robustness to routing attacks.
We investigate the capability of routing attack over $\mathcal{V}_A$ to minimize $\lambda^*$, i.e., minimizing network robustness to overload. {The minimum $\lambda^*$, denoted by $\lambda_{OPT}^*$ in this paper, can be interpreted as the upper limit of traffic arrival rate that guarantees no traffic loss under any adversarial routing attack on the given $\mathcal{V}_A$}. (ii) \textbf{Maximize Loss}: Given the default routing policies at normal nodes $\mathcal{V}_N$ and the arrival rate $\lambda$, the adversary manipulates the routing policies in $\mathcal{V}_A$ to maximize the throughput loss caused by link overload, equivalent to minimizing the total amount of traffic that can be transmitted to the destination. Loss maximization reflects the most severe overload that can be caused via a routing attack. These two objectives reflect different facets of network overload, and we show in Fig.~\ref{fig:Paper_example_multihop_networks_and_routing_attack}(b) and \ref{fig:Paper_example_multihop_networks_and_routing_attack}(c) that their optimal routing attack policies are different, which motivates different algorithm design patterns in Section \ref{sec:no_loss_throughput} and \ref{sec:max_loss}. 

In summary, we give the formal statement of the core problem of this paper: \emph{Given $\mathcal{G}$, capacity $\{c_{ij}\}_{(i,j)\in \mathcal{E}}$, default routing policies at normal nodes $\{\mathbf{x}_i\}_{i\in \mathcal{V}_N}$, and adversarial nodes $\mathcal{V}_A$, what is the optimal routing attack policy over $\mathcal{V}_A$ to minimize no-loss throughput and maximize loss?} 

\begin{figure}[!htbp]
\centering
\includegraphics[width=0.99\linewidth]{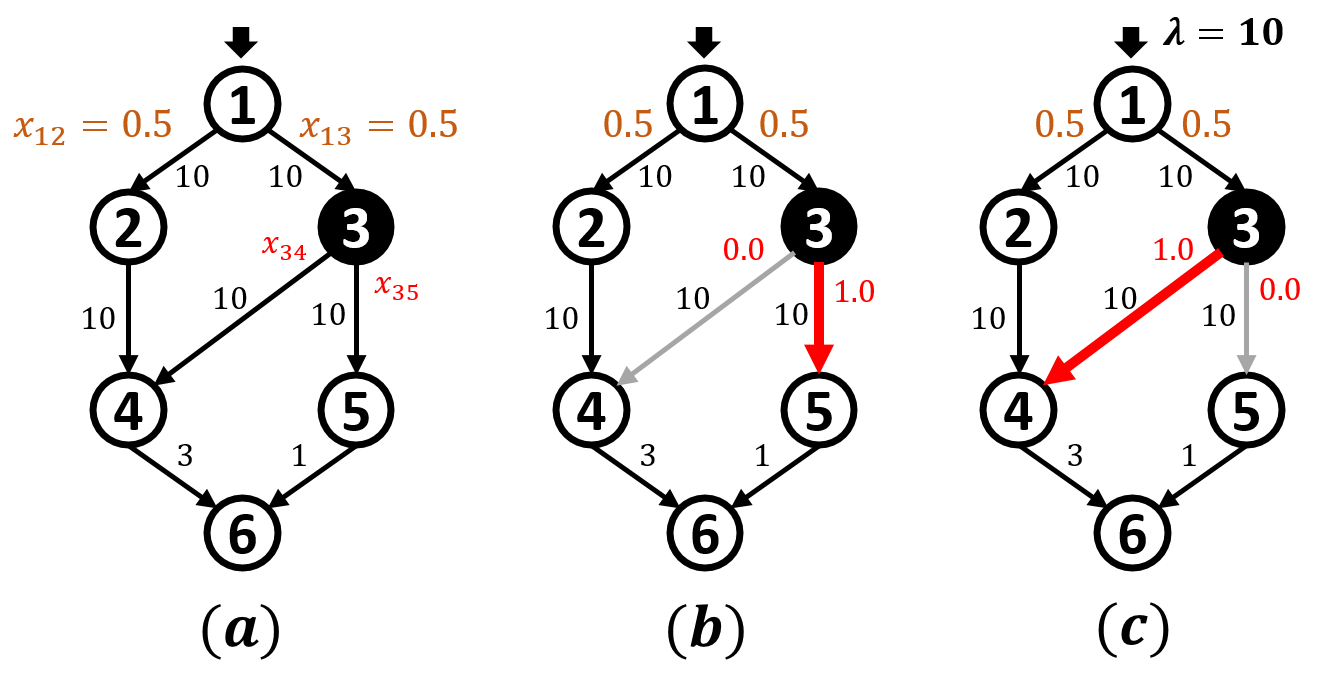}
\caption{\small (a) A 6-node network with $\mathcal{V}_A=\{3\}$ and $\mathcal{V}_N = \{1,2,4,5,6\}$; (b) Optimal routing to minimize $\lambda^*$ is $(x_{34}, x_{35}) = (0,1)$: $\lambda_{OPT}^*$ is $2$ and $(5,6)$ is the first saturated link; (c) Given $\lambda=10$, the optimal routing to maximize loss is $(x_{34}, x_{35}) = (1,0)$, with maximum loss of $\lambda - c_{46} = 10 - 3 = 7$.}
\label{fig:Paper_example_multihop_networks_and_routing_attack}
\end{figure}

Furthermore, in Section \ref{sec:optimal_node_selection} we investigate the \emph{optimal node selection} problem: The adversary needs to identify the optimal nodes to hijack from a set of candidate nodes to conduct routing attack in order to minimize no-loss throughput or maximize loss. This problem fits into the practice where the adversary can only hijack and manipulate the routing policies over a limited number of nodes due to the cost or the risk of being exposed. The answer to this problem informs network service providers of the critical nodes that should be protected against routing attacks whose control by the adversary can lead to high risk of overload and traffic loss.


\section{No-Loss Throughput Minimization}
\label{sec:no_loss_throughput}

In this section, we investigate the optimal routing attack to minimize no-loss throughput $\lambda^*$. We develop an exact polynomial-time algorithm that returns an optimal routing attack based on linear programming (LP), given the network topology, link capacities, and default routing policies at $\mathcal{V}_N$. We then study the case where the adversary only has access to the network information downstream to the adversarial nodes $\mathcal{V}_A$, and propose a routing attack algorithm with an approximation ratio of at most $2$. 
The key takeaway is that the adversary can execute routing attacks efficiently to achieve optimal performance in diminishing the network's robustness against overload in general network settings given arbitrary sets of $\mathcal{V}_A$, which quantitatively unveils the threat of routing attacks in increasing the risk of overload.


\subsection{Problem Formulation}

We formulate the $\lambda^*$-minimization problem as follows.

\begin{equation}
\label{eqn:min-max-lambda-no-loss}
\begin{aligned}
\lambda_{OPT}^* := ~ &\min\limits_{\mathbf{f}} \quad  \max\limits_{\lambda} \quad \lambda \\
\textrm{s.t.} \quad & \mathbf{f} \in \Lambda, \quad
\lambda = f_{01}, \quad  f_{ij} \in [0, c_{ij}], \forall (i,j) \in \mathcal{E}
\end{aligned}
\end{equation}
where $\Lambda$ includes both $\Lambda_N$ and $\Lambda_A$ below.
\begin{equation}
\small
\label{eqn:constraints}
\begin{cases}
\Lambda_N: f_{ij} = \left(\sum_{k:(k,i)\in \mathcal{E}} f_{ki}\right)x_{ij}, \forall i \in \mathcal{V}_N,~\forall (i,j) \in \mathcal{E} \\
\Lambda_A: \sum_{j:(i,j)\in \mathcal{E}}f_{ij}=\sum_{k:(k,i)\in \mathcal{E}}f_{ki},\forall i\in \mathcal{V}_A
\end{cases}.
\end{equation}
 
In \eqref{eqn:min-max-lambda-no-loss}, we introduce a meta source node 0 that serves traffic only to the original source node 1 with $f_{01}=\lambda$ and $c_{01}\geq \lambda$, so that the constraints at node 1 can be included in \eqref{eqn:constraints}. The decision variables of \eqref{eqn:min-max-lambda-no-loss} are the flow variables $\mathbf{f}:=\{f_{ij}\}_{(i,j)\in \mathcal{E}}$. In constraints \eqref{eqn:constraints}, $\Lambda_N$ represents the flow conservation at any normal node $i \in \mathcal{V}_N$, where the transmission rate over link $(i,j)$ is equal to the total traffic injection to node $i$ multiplied by the dispatch ratio $x_{ij}$ from $i$ to $j$. $\Lambda_A$ represents the flow conservation at any adversarial node $i \in \mathcal{V}_A$, where the adversary can manipulate routing arbitrarily subject to flow conservation. The optimal solution to \eqref{eqn:min-max-lambda-no-loss}, denoted by $\mathbf{f}^*$, leads to the optimal routing attack, denoted by $\mathbf{x}_A^*:=\{\mathbf{x}_i^*\}_{i\in \mathcal{V}_A}$, where $x_{ij}^* = f_{ij}^* / \sum_{k:(k,i)\in \mathcal{E}}f_{ki}^*$ for $(i,j)\in \mathcal{E}$. Using $\mathbf{f}$ instead of $\mathbf{x}_A$ as decision variables can render the constraints to a linear form, however \eqref{eqn:min-max-lambda-no-loss} still cannot be formulated as a standard LP problem. We transform \eqref{eqn:min-max-lambda-no-loss} into the following equivalent optimization framework \eqref{eqn:min-max-lambda-no-loss-equi} which we show later can pave the way for the polynomial-time algorithm designn. We prove the equivalence of this transformation in Proposition \ref{prop:equi-transform}.

\begin{equation}
\label{eqn:min-max-lambda-no-loss-equi}
\begin{aligned}
\max_{\mathbf{f}} \quad & \max_{(i,j)\in \mathcal{E}}   \quad f_{ij} / c_{ij} \\
\textrm{s.t.} \quad
& \mathbf{f} \in \Lambda, \quad 
f_{01} = 1, \quad f_{ij}\geq 0, ~\forall (i,j)\in \mathcal{E}.
\end{aligned}
\end{equation}

\begin{proposition}
\label{prop:equi-transform}
{The optimal routing policies given by the optimal solutions $\mathbf{f}^*$ to \eqref{eqn:min-max-lambda-no-loss} and \eqref{eqn:min-max-lambda-no-loss-equi} are equivalent.} The minimum no-loss throughput $\lambda_{OPT}^{*}$ is given by $(\max_{(i,j)\in \mathcal{E}} f_{ij}^*/c_{ij})^{-1}$ from \eqref{eqn:min-max-lambda-no-loss-equi}.
\end{proposition}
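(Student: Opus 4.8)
The plan is to collapse both \eqref{eqn:min-max-lambda-no-loss} and \eqref{eqn:min-max-lambda-no-loss-equi} onto a single quantity — the reciprocal of the largest worst‑case link utilization attainable by a \emph{unit} flow — using the fact that, once a routing matrix is fixed, the induced network flow scales linearly with the source arrival rate. The two formulations then differ only in whether that scale is normalized to $\lambda_{OPT}^*$ (in \eqref{eqn:min-max-lambda-no-loss}) or to $1$ (in \eqref{eqn:min-max-lambda-no-loss-equi}), which leaves the induced routing policies unchanged because the map $\mathbf{f}\mapsto \mathbf{x}_A$, $x_{ij}=f_{ij}/\sum_k f_{ki}$, is invariant under positive scaling of $\mathbf{f}$.

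First I would prove a scaling lemma. Fix any feasible routing matrix $\mathbf{X}=(\mathbf{X}_N,\mathbf{x}_A)$, i.e.\ one whose adversarial rows are nonnegative and sum to one. The flow induced at arrival rate $\lambda$ is the unique nonnegative solution of the linear system $f_{ij}=\bigl(\sum_{k:(k,i)\in\mathcal{E}}f_{ki}\bigr)x_{ij}$ for all $(i,j)\in\mathcal{E}$ together with $f_{01}=\lambda$; uniqueness and finiteness hold under the standing assumption that every node lies on an available path to the destination, so that the routing sub‑matrix on the non‑destination nodes is transient. Since this system is homogeneous in $\mathbf{f}$ apart from the input $f_{01}=\lambda$, its solution is $\mathbf{f}(\lambda,\mathbf{X})=\lambda\,\mathbf{g}(\mathbf{X})$ where $\mathbf{g}(\mathbf{X})$ is the flow at arrival rate $1$. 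I would flag here that the spurious circulations among adversarial nodes — the reason $\mathbf{f}\in\Lambda$ alone does not pin down $\mathbf{f}$ in cyclic networks — are excluded precisely because $\mathbf{f}$ is the flow of an honest routing matrix with unit row sums; this is the point that needs the most care.

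Next I would compute the no‑loss throughput of a fixed routing matrix: no link is overloaded at rate $\lambda$ iff $f_{ij}(\lambda,\mathbf{X})=\lambda\,g_{ij}(\mathbf{X})\le c_{ij}$ for every $(i,j)\in\mathcal{E}$, hence $\lambda^*(\mathbf{X})=\bigl(\max_{(i,j)\in\mathcal{E}} g_{ij}(\mathbf{X})/c_{ij}\bigr)^{-1}$. Reading \eqref{eqn:min-max-lambda-no-loss} as the minimization of $\lambda^*(\mathbf{X})$ over the adversary's choice of $\mathbf{x}_A$ — the inner maximization over $\lambda$ being exactly the evaluation of $\lambda^*$ for the routing encoded by $\mathbf{f}$, with the scale of $\mathbf{f}$ immaterial — gives $\lambda_{OPT}^*=\min_{\mathbf{x}_A}\bigl(\max_{(i,j)} g_{ij}/c_{ij}\bigr)^{-1}=\bigl(\max_{\mathbf{x}_A}\max_{(i,j)} g_{ij}/c_{ij}\bigr)^{-1}$.

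Finally I would identify the feasible set of \eqref{eqn:min-max-lambda-no-loss-equi} with the set of attainable unit flows $\{\mathbf{g}(\mathbf{X})\}$: every $\mathbf{g}=\mathbf{g}(\mathbf{X})$ satisfies $\mathbf{g}\in\Lambda$, $g_{01}=1$, $\mathbf{g}\ge 0$; conversely, given any $\mathbf{f}$ with $\mathbf{f}\in\Lambda$, $f_{01}=1$, $\mathbf{f}\ge 0$, setting $x_{ij}=f_{ij}/\sum_k f_{ki}$ at each adversarial node with positive inflow (and arbitrarily at zero‑inflow nodes, which carry no flow) yields a valid routing matrix whose unit flow is $\mathbf{f}$. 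Therefore $\max_{\mathbf{x}_A}\max_{(i,j)} g_{ij}/c_{ij}$ equals the optimal value $\max_{(i,j)} f^*_{ij}/c_{ij}$ of \eqref{eqn:min-max-lambda-no-loss-equi}, which establishes $\lambda_{OPT}^*=\bigl(\max_{(i,j)} f^*_{ij}/c_{ij}\bigr)^{-1}$; and since the optimizer $\mathbf{f}^*$ of \eqref{eqn:min-max-lambda-no-loss} is, up to the positive factor $\lambda_{OPT}^*$, an optimizer of \eqref{eqn:min-max-lambda-no-loss-equi}, the rule $x^*_{ij}=f^*_{ij}/\sum_k f^*_{ki}$ produces the same policies from either, giving the claimed equivalence. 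The main obstacle, as noted, is making the operational meaning of \eqref{eqn:min-max-lambda-no-loss} precise enough for the scaling argument to apply cleanly, together with the well‑posedness of induced flows and the exclusion of spurious circulations in general, possibly cyclic, networks.
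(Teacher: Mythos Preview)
Your proposal is correct and follows essentially the same approach as the paper: both hinge on normalizing the flow to unit arrival ($f_{01}=1$), so that the inner maximization over $\lambda$ collapses to $\bigl(\max_{(i,j)} f_{ij}/c_{ij}\bigr)^{-1}$ and the outer minimization flips to the maximization in \eqref{eqn:min-max-lambda-no-loss-equi}. The paper achieves this in a single change of variables $\tilde f_{ij}=f_{ij}/\lambda$ and a one-line computation of the inner max, whereas you route through the induced-flow map $\mathbf X\mapsto \mathbf g(\mathbf X)$ and a feasible-set identification; your extra care about well-posedness and spurious circulations is not needed for the argument the paper actually makes, but the core scaling idea is the same.
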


\begin{proof}

In \eqref{eqn:min-max-lambda-no-loss}, define new variables 
$\tilde{f}_{ij}:=f_{ij}/\lambda, \forall (i,j)\in \mathcal{E}$. Then the optimization in \eqref{eqn:min-max-lambda-no-loss} is equivalent to 

\begin{equation}
\begin{aligned}
\min_{\mathbf{\tilde{f}}} \quad & \max\limits_{\lambda} \quad \lambda \\
\textrm{s.t.} \quad
& \tilde{\mathbf{f}} \in \Lambda, \quad 
\tilde{f}_{01} = 1, \quad \tilde{f}_{ij} \in 
[0, c_{ij}/\lambda], ~
\forall (i,j) \in \mathcal{E}.
\end{aligned}
\end{equation}

We have $\lambda \leq c_{ij}/\tilde{f}_{ij}$ for each link, and thus $\lambda \leq \min_{(i,j)\in \mathcal{E}} c_{ij}/\tilde{f}_{ij}$. Note that this is the only constraint for $\lambda$, and $\lambda$ is an independent decision variable after the transformation. Therefore the maximum $\lambda$ that guarantees no loss is equal to $\min_{(i,j)\in \mathcal{E}} c_{ij}/\tilde{f}_{ij}$, and the objective function becomes 
$\min_{\tilde{\mathbf{f}}} \min_{(i,j)\in \mathcal{E}} {c_{ij}}/{\tilde{f}_{ij}} = \max_{\tilde{\mathbf{f}}} \max_{(i,j)\in \mathcal{E}} {\tilde{f}_{ij}}/{c_{ij}}$
which is exactly \eqref{eqn:min-max-lambda-no-loss-equi} by changing notation  $\tilde{f}_{ij}$ to $f_{ij}$. 
\end{proof}

The formulation \eqref{eqn:min-max-lambda-no-loss-equi} lets $f_{01}=1$, i.e., one unit of traffic arrival rate to the network\footnote{We can choose arbitrary constant for $f_{01}$ since we evaluate the saturation level and thus there is no constraint on $f_{ij}\in [0, c_{ij}]$.}, and the goal is to find the optimal routing policy that maximizes the maximum link utilization, which is $\max_{(i,j)\in \mathcal{E}} f_{ij}/c_{ij}$. Then the minimum possible $\lambda^*$ under an optimal routing attack is $(\max_{(i,j)\in \mathcal{E}} f_{ij}^*/c_{ij})^{-1}$ where $\mathbf{f}^*$ is the optimal solution to \eqref{eqn:min-max-lambda-no-loss-equi}. The intuition is that the link with the highest utilization is the first saturated link when the arrival rate increases gradually from $0$. 
The critical point of Proposition \ref{prop:equi-transform} is that the new link-wise maximization formulation \eqref{eqn:min-max-lambda-no-loss-equi} inspires the following design of polynomial-time algorithms. 

\subsection{Exact Algorithms}
\label{sec:polynomial-algorithm}

We propose exact algorithms to minimize no-loss throughput $\lambda^*$ in polynomial time. We derive a simple brute-force algorithm which has polynomial time complexity under $|\mathcal{V}_A|=O(1)$, and a polynomial-time algorithm under general $\mathcal{V}_A$ based on solving an LP for maximum flow.

\subsubsection{Brute-Force under $|\mathcal{V}_A|=O(1)$}
We prove Theorem \ref{thm:boundary_property} below, which states that there must exist an optimal routing policy under which each adversarial node $i\in \mathcal{V}_A$ dispatches all traffic flows to a single downstream connected node. 


\begin{theorem}
\label{thm:boundary_property}
There exists an optimal solution $\mathbf{f}^*$ to \eqref{eqn:min-max-lambda-no-loss-equi} such that for $\forall i \in \mathcal{V}_A$, $\exists j$ that $(i,j) \in \mathcal{E}$ and $x_{ij}^*=\frac{f_{ij}^*}{\sum_{k:(i,k)\in \mathcal{E}}f_{ik}^*}=1$, while $x_{ik}^*=0$ for $k\neq j$.
\end{theorem}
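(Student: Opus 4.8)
The plan is to show that any optimal solution $\mathbf{f}^*$ to \eqref{eqn:min-max-lambda-no-loss-equi} can be transformed, without decreasing the objective $\max_{(i,j)\in\mathcal{E}} f_{ij}/c_{ij}$, into one in which every adversarial node sends all its through-traffic along a single outgoing link. The core observation is that the objective depends only on the ratios $f_{ij}/c_{ij}$ over the links, i.e.\ on the \emph{maximum} saturation level; it is not sensitive to how traffic is spread elsewhere, only to where the peak occurs. So I want to ``concentrate'' the flow at each adversarial node onto whichever outgoing branch leads (downstream) to the currently most-saturated link.

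First I would set up the argument at a single adversarial node $i\in\mathcal{V}_A$, holding the flows at all other nodes' routing decisions conceptually fixed and tracking how a reallocation at $i$ propagates. Let $A_i := \sum_{k:(k,i)\in\mathcal{E}} f^*_{ki}$ be the total traffic entering $i$; by $\Lambda_A$ this equals the total leaving $i$. For each connected downstream link $(i,j)$, define a ``downstream propagation map'': if we route a unit of flow from $i$ into $j$, the normal-node constraints $\Lambda_N$ deterministically push it forward, multiplying by dispatch ratios, so each downstream link $(u,v)$ receives some nonnegative linear coefficient $\gamma^{(i\to j)}_{uv}$ of that unit (well-defined because, conditioned on the routing at the other adversarial nodes, downstream of $i$ everything is governed by fixed $x_{uv}$'s; in a general graph with cycles one still gets a convergent/consistent linear response, or one works on the DAG of the available-path set as the paper's framework allows). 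The current flow on a downstream link is then a convex combination $f^*_{uv} = \sum_{j} (A_i x^*_{ij})\,\gamma^{(i\to j)}_{uv} + (\text{contributions not through }i)$. Now pick the link $(u^\star,v^\star)$ achieving the current maximum of $f^*_{uv}/c_{uv}$ over all links; among the branches $j$ with $(i,j)\in\mathcal{E}$, choose $j^\star$ maximizing $\gamma^{(i\to j)}_{u^\star v^\star}/c_{u^\star v^\star}$ — actually, more carefully, choose the branch whose ``pure'' routing does not lower the global maximum. Rerouting all of $A_i$ onto $j^\star$ replaces the convex combination by its best constituent in the relevant coordinate, so $\max_{(u,v)} f_{uv}/c_{uv}$ does not decrease; optimality of $\mathbf{f}^*$ forces it to stay equal, giving a new optimal solution with $x_{ij^\star}=1$ at node $i$.

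Then I would iterate this over the adversarial nodes. The subtlety is ordering: rerouting at one adversarial node changes the flows, hence the propagation picture, for others. I would handle this by processing the adversarial nodes in an order consistent with downstream reachability when the available-path set is acyclic (process the most-downstream adversarial node first, or equivalently argue that a reroute at $i$ only affects links strictly downstream of $i$, so once $i$ is ``fixed'' to a pure policy it stays pure under later rerouting of upstream adversarial nodes), and in the general cyclic case fall back on the paper's stated reduction that the analysis may be carried out on the DAG induced by the available paths. After all $|\mathcal{V}_A|$ steps we obtain an optimal $\mathbf{f}^*$ that is pure at every adversarial node, which is exactly the claim; as a bonus this bounds the search space by $\prod_{i\in\mathcal{V}_A}(\deg^+ i)$, explaining the brute-force algorithm for $|\mathcal{V}_A|=O(1)$.

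The main obstacle I anticipate is making the ``downstream propagation is linear and a reroute at $i$ touches only $i$'s downstream'' claim fully rigorous and the induction over adversarial nodes airtight — in particular ruling out the case where an adversarial node lies downstream of another (or in a cycle with it), where naive concentration at one node could force spreading at the other. I expect the clean fix is exactly the acyclic-ordering argument above together with the observation that concentrating flow can only weakly increase every partial sum that feeds the max, so once a node is made pure it can be left pure; the cyclic case is deferred to the available-path DAG as the paper permits.
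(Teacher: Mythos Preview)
Your approach is correct in the acyclic setting but differs from the paper's. The paper does not iterate node by node; it observes once that \eqref{eqn:min-max-lambda-no-loss-equi} maximizes a convex function (a pointwise maximum of linear functions of $\mathbf{f}$) over a polytope (the constraints $\Lambda$, $f_{01}=1$, $f_{ij}\ge 0$ are all linear), so some vertex of the feasible polytope is optimal. A dimension count then shows that at any vertex at most $|\mathcal{V}_A|$ of the adversarial-outgoing flow variables are nonzero, forcing one per adversarial node. This is a single global statement: no propagation coefficients, no ordering of adversarial nodes, no DAG assumption.

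Your exchange argument is more constructive --- it actually names the branch to concentrate on --- and the one-node step is sound: with all other routing fixed, the objective is convex in $x_i$ over the simplex, hence maximized at a vertex. Two remarks on the obstacles you flag. First, the iteration is less delicate than you fear: rerouting at a later node $i'$ can change only the \emph{amount} $A_i$ reaching an already-processed node $i$, never its dispatch ratios, so purity at $i$ survives in any processing order; no topological sort is needed. Second, the cyclic case is a real gap in your write-up: the paper does not state the ``reduce to the available-path DAG'' lemma you invoke, and your linearity claim for the propagation map actually fails with cycles (e.g.\ if $i\to j\to i$ with $j$ returning everything, then $f_{ij}=p/(1-p)$ is rational, not affine, in the split $p=x_{ij}$). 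The paper's polytope argument is what buys you the cyclic case for free, since it works directly in $\mathbf{f}$-space where everything stays linear; your argument can be repaired the same way, by restricting the full polytope to the affine slice ``routing at all nodes except $i$ fixed'' and taking a vertex there, but at that point you have essentially reproduced the paper's proof.
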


\begin{proof}
The objective function of problem \eqref{eqn:min-max-lambda-no-loss-equi} is convex with respect to the flow variables $\mathbf{f}=\{f_{ij}\}_{(i,j)\in \mathcal{E}}$ due to the convexity of the elementise maximum function. Meanwhile the constraints are all linear which form a polytope. Therefore \eqref{eqn:min-max-lambda-no-loss-equi} is a convex maximization problem, where at least one of the optimal solutions is at one of the vertices of the polytope. 
There is a total of $M+1$ decision variables (including $f_{01}$). Denote the number of links starting from $\mathcal{V}_N \cup \{0\}$ as $M_N$, and links starting from $\mathcal{V}_A$ as $M_A$. Then the total number of equality constraints in \eqref{eqn:min-max-lambda-no-loss-equi} is $M_N + |\mathcal{V}_A|$, and thus at any vertex there should have at least $M+1 - M_N - |\mathcal{V}_A| = M_A - |\mathcal{V}_A|$ flow variables $f_{ij}$ over link $(i,j)$ to be zero. Each $i \in \mathcal{V}_A$ routing traffic to a single connected downstream  node guarantees the above condition, i.e., $\exists j$ that $(i,j) \in \mathcal{E}$ and $x_{ij}^*=\frac{f_{ij}^*}{\sum_{k:(i,k)\in \mathcal{E}}f_{ik}^*}=1$, while $x_{ik}^*=0$ for $k\neq j$.
\end{proof}

The intuition of Theorem \ref{thm:boundary_property} is that \eqref{eqn:min-max-lambda-no-loss-equi} maximizes a convex function over a polytope, hence one of the vertices of the polytope must be the optimal solution.
The implication of Theorem \ref{thm:boundary_property} is that the optimal attack to minimize $\lambda^*$ can be identified in a brute force manner by exhausting all the combinations where each adversarial node sends all the traffic to one of its connected downstream nodes. {The upper bound on the number of combinations is $d_{\max}^{|V_A|}$, where $d_{\max}$ denotes the maximum number of connected downstream links of a node. Note that $d_{\max}=O(N)$ and thus $d_{\max}^{|V_A|}$ is a polynomial function of network size $N$ when $|V_A| = O(1)$.}
We give an example with $|\mathcal{V}_A|=2$ in Fig.~\ref{fig:Paper_example_boundary_property}.

\begin{figure}[!htbp]
\centering
\includegraphics[width=0.95\linewidth]{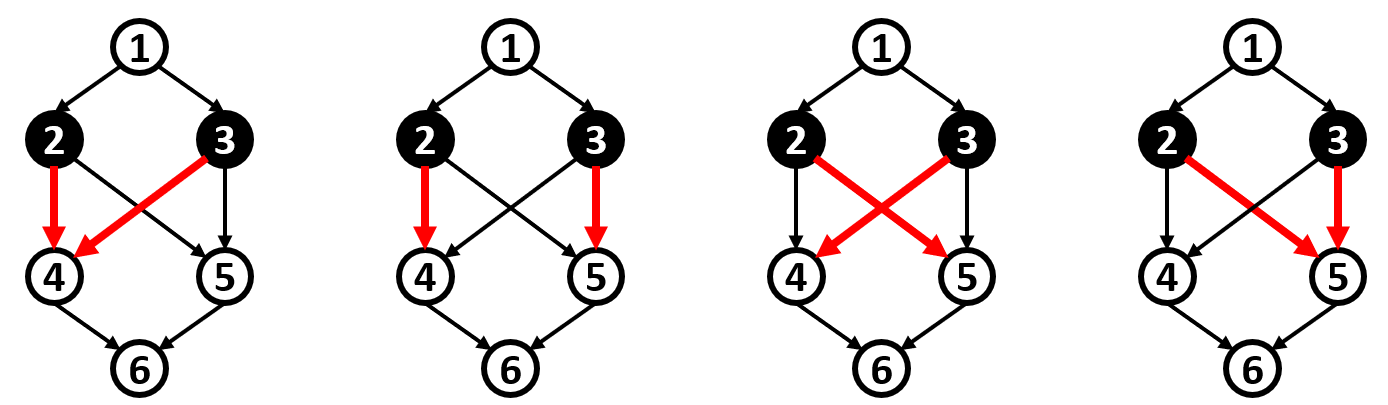}
\caption{One of the 4 combinations must be optimal, where $\mathcal{V}_A=\{2,3\}$ serve all traffic through the highlighted links.}
\label{fig:Paper_example_boundary_property}
\end{figure}

For each of the above combinations of routing attacks, we can calculate $\lambda^*$ given the routing matrix $\mathbf{X}$ by a two-step process: First, for each $(i,j)\in \mathcal{E}$, calculate the proportion of the traffic starting from the source node that will go through this link, assuming that link capacities are infinite. $f_{ij}$ given $\lambda = 1$ under this setting represents the proportion of traffic that goes over link $(i,j)\in \mathcal{E}$. In general, this can be done by solving the linear equations built via the first constraint in \eqref{eqn:constraints} given the routing policy at each node $i$, with worst time complexity $O(M^3)$. For directed acyclic networks, the time complexity can be reduced to $O(M)$ by assuming $\lambda=1$ and then following $f_{ij} = \left(\sum_{k:(k,i)\in \mathcal{E}} f_{ki}\right)x_{ij}$ over all the $M$ links. Secondly, resume the original link capacities, and find the link $(i^*, j^*) = \arg\max_{(i,j)\in \mathcal{E}} f_{ij}/c_{ij}$ in $O(M)$ time. Then $\lambda^* = (f_{i^*j^*} / c_{i^*j^*})^{-1}$. The limitation of the algorithm is the exponential time complexity under $\mathcal{V}_A$ with $|\mathcal{V}_A| = O(N)$.



\subsubsection{MaxFlow-based Solution for General $\mathcal{V}_A$} 
We further develop a polynomial-time algorithm that outputs the optimal routing attack to \eqref{eqn:min-max-lambda-no-loss-equi} under general $\mathcal{V}_A$, presented in Algorithm \ref{Alg:optimal-metric-1-general}.
First, we figure out the routing policy at $\mathcal{V}_A \cap \mathcal{V}_i^{up}$ that maximizes the total flow to each node $i$ (except destination), assuming arrival rate $\lambda = 1$ at the source and infinite capacity over all the links. Denote the corresponding maximum possible flow to node $i$ by $MF[i]$. $MF[1]=1$ trivially, while for $i>1$, $MF[i]$ can be obtained by solving an LP that outputs the maximum flow solution. Second, for each non-destination node $i$, we identify the first link that can be saturated under routing attack among all the connected downstream links of node $i$ as follows. If $i\in \mathcal{V}_N$, the first saturated link starting from $i$ should be $(i,j_i^*)=\argmax_{(i,j)\in \mathcal{E}} x_{ij} / c_{ij}$; If $i\in \mathcal{V}_A$, the adversary can arbitrarily adjust the routing at $i$, thus the first-saturated link starting from $i$ is $(i,j_i^*)=\arg\min_{(i,j)\in \mathcal{E}} c_{ij}$. 
It is straightforward to see that the first link that will be saturated under the optimal routing attack given by \eqref{eqn:min-max-lambda-no-loss-equi} must be among the $N-1$ first-saturated links $\{(i,j_i^*)\}_{i=1}^{N-1}$.
Combining the two steps as done in Algorithm \ref{Alg:optimal-metric-1-general} establishes its correctness to solving \eqref{eqn:min-max-lambda-no-loss-equi}, which outputs the routing attack solution with minimum traffic arrival rate required to saturate one of the $N-1$ first-saturated links, and the resulting minimum arrival rate is $\lambda_{OPT}^*$ in \eqref{eqn:min-max-lambda-no-loss-equi}. We state this result in Theorem \ref{thm:polynomial_max_flow}. We give a toy example running Algorithm \ref{Alg:optimal-metric-1-general} in Fig.~\ref{fig:Paper_example_polynomial_algorithm}. 

\begin{theorem}
\label{thm:polynomial_max_flow}
    Algorithm \ref{Alg:optimal-metric-1-general} outputs the optimal solution to \eqref{eqn:min-max-lambda-no-loss-equi}.
\end{theorem}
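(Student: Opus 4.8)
The plan is to prove Theorem~\ref{thm:polynomial_max_flow} by showing that Algorithm~\ref{Alg:optimal-metric-1-general} correctly computes $\lambda_{OPT}^*$ (equivalently, the reciprocal of the optimal objective of~\eqref{eqn:min-max-lambda-no-loss-equi}) and returns a feasible routing attack achieving it. The key structural observation, already flagged in the text, is that the first link to saturate under \emph{any} routing attack is one of the $N-1$ candidate links $(i,j_i^*)$: for a normal node $i$ the downstream split is fixed, so the most-loaded connected downstream link is $\argmax_{(i,j)\in\mathcal{E}} x_{ij}/c_{ij}$ regardless of the adversary's choices upstream; for an adversarial node $i$ the adversary will, in the worst case, dump all of node $i$'s traffic onto its thinnest connected downstream link $\argmin_{(i,j)\in\mathcal{E}} c_{ij}$. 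So the optimization decouples: for each candidate link $(i,j_i^*)$, the best the adversary can do is maximize the utilization of that link, and the global optimum is the maximum over these $N-1$ cases.

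The main work is then to argue that, for a fixed target node $i$, maximizing the utilization of its first-saturated link is equivalent to maximizing the total flow $MF[i]$ into node $i$ using only the routing freedom at adversarial nodes \emph{upstream} of $i$ — and that this is exactly a max-flow computation. First I would show the ``$\le$'' direction: given any feasible $\mathbf{f}$ for~\eqref{eqn:min-max-lambda-no-loss-equi} whose bottleneck is link $(i,j_i^*)$, the total inflow to $i$ is at most $MF[i]$ (because $\mathbf{f}$ restricted to the subgraph induced by $\mathcal{V}_i^{up}\cup\{i\}$ is a feasible flow of value $f_{01}=1$ routed to $i$, and the constraints at normal nodes only further restrict the flow, so its value cannot exceed the unconstrained maximum flow $MF[i]$), hence the utilization of $(i,j_i^*)$ is at most $\frac{MF[i]\cdot (x_{i j_i^*} \text{ or } 1)}{c_{i j_i^*}}$, matching the value Algorithm~\ref{Alg:optimal-metric-1-general} records. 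For the ``$\ge$'' direction I would exhibit a feasible routing attack realizing it: take the adversarial routing on $\mathcal{V}_A\cap\mathcal{V}_i^{up}$ from the max-flow solution achieving $MF[i]$, at node $i$ route everything to $j_i^*$ (if $i\in\mathcal{V}_A$; if $i\in\mathcal{V}_N$ the split is already the worst one), and set the routing at adversarial nodes not upstream of $i$ arbitrarily (it cannot reduce the load on $(i,j_i^*)$ since those nodes carry no traffic reaching $i$). A subtlety here is that a max-flow solution into $i$ need not by itself specify a full valid routing matrix — I must convert the flow on the subgraph $\mathcal{V}_i^{up}\cup\{i\}$ into dispatch ratios $x_{kl}=f_{kl}/\sum_{m}f_{mk}$ at adversarial nodes, handle nodes carrying zero flow by assigning an arbitrary valid split, and verify that flow conservation $\Lambda_A$ and the fixed-split constraints $\Lambda_N$ are met on the whole network; this is routine but needs to be stated carefully, and is where the generality of cyclic networks (where $\mathcal{V}_i^{up}$ and $\mathcal{V}_i^{down}$ may overlap) requires a little care so that ``upstream of $i$'' is well-defined and the subgraph argument still goes through.

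Combining the two directions, for each $i$ the algorithm's stored value equals $\max\{$ utilization of $(i,j_i^*)$ over all feasible attacks $\}$, and taking the max over $i=1,\dots,N-1$ yields exactly the optimum of~\eqref{eqn:min-max-lambda-no-loss-equi}; recovering $\lambda_{OPT}^*$ as its reciprocal follows from Proposition~\ref{prop:equi-transform}. Finally I would note that the attack returned — the max-flow-based adversarial routing for the $i$ attaining the overall maximum — is feasible and achieves this value, so it is optimal. The step I expect to be the main obstacle is the reduction ``maximize bottleneck-link utilization $\equiv$ maximize inflow $MF[i]$ via a max-flow LP'': pinning down why the adversary's influence on link $(i,j_i^*)$ is fully captured by the single scalar $MF[i]$, and why imposing the normal-node equality constraints $\Lambda_N$ upstream cannot let the adversary exceed the unconstrained max flow (indeed can only hurt the adversary, so the LP upper bound is valid), while simultaneously they \emph{can} always be satisfied by the constructed solution. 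Everything else — the decoupling over candidate links, the $O(\cdot)$ accounting, the reciprocal bookkeeping — is straightforward given Theorem~\ref{thm:boundary_property} and Proposition~\ref{prop:equi-transform}.
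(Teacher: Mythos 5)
Your proposal is correct and follows essentially the same route as the paper, which justifies Theorem~\ref{thm:polynomial_max_flow} informally in the text before its statement: the optimum of \eqref{eqn:min-max-lambda-no-loss-equi} must saturate one of the $N-1$ first-saturated candidate links, and for each node $i$ the maximum utilization of its candidate link is $MF[i]$ (the LP max-flow under $\Lambda_N$, $\Lambda_A$) times the appropriate ratio, so taking the maximum over $i$ is optimal. Your added care about the two directions (upper bound via the LP, achievability via converting the max-flow solution into dispatch ratios) only makes explicit what the paper leaves implicit.
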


\begin{algorithm}[!htp]
\caption{Exact Algorithm to Minimize $\lambda^*$ for General $\mathcal{G}=(\mathcal{V}, \mathcal{E})$ and $\mathcal{V}_A$}
\label{Alg:optimal-metric-1-general}
\textbf{Input:} $\mathcal{G}=(\mathcal{V}, \mathcal{E})$, $\mathcal{V}_A$, default routing $\{\mathbf{x}_i\}_{i\in \mathcal{V}_N}$\;
\For{$\forall i \in \mathcal{V}\backslash\{N\}$}{
Given one unit of arrival while assuming unlimited capacity, calculate $MF[i]$, the max flow to node $i$\;
\lIf{$i\in \mathcal{V}_N$}{$V[i]=MF[i]\times x_{ij^*}/c_{ij^*}$ with $(i,j^*)=\argmax_{(i,j)\in \mathcal{E}} x_{ij} / c_{ij}$}
\lElse{$V[i]=MF[i]/c_{ij^*}$ with $(i,j^*)=\arg\min_{(i,j)\in \mathcal{E}} c_{ij}$}
}
Find $i^* = \arg\max_{i\in \mathcal{V}} V[i]$\;
Construct the routing attack at $\forall i \in \mathcal{V}_A \cap (\mathcal{V}_{i^*}^{up} \cup \{i^*\})$, corresponding to node $i^*$, denoted by $\mathbf{x}_i^{(i^*)}$\;
For all $i\in \mathcal{V}_A$, let $\mathbf{x}_i = \mathbf{x}_i^{(i^*)}$ if $i\in \mathcal{V}_{i^*}^{up} \cup \{i^*\}$ else arbitrarily set the routing policy $\mathbf{x}_i$\;
\textbf{Return} $\mathbf{x}_A=\{\mathbf{x}_i\}_{i\in \mathcal{V}_A}$\;
\end{algorithm}

\begin{figure}[!htbp]
\centering
\includegraphics[width=0.99\linewidth]{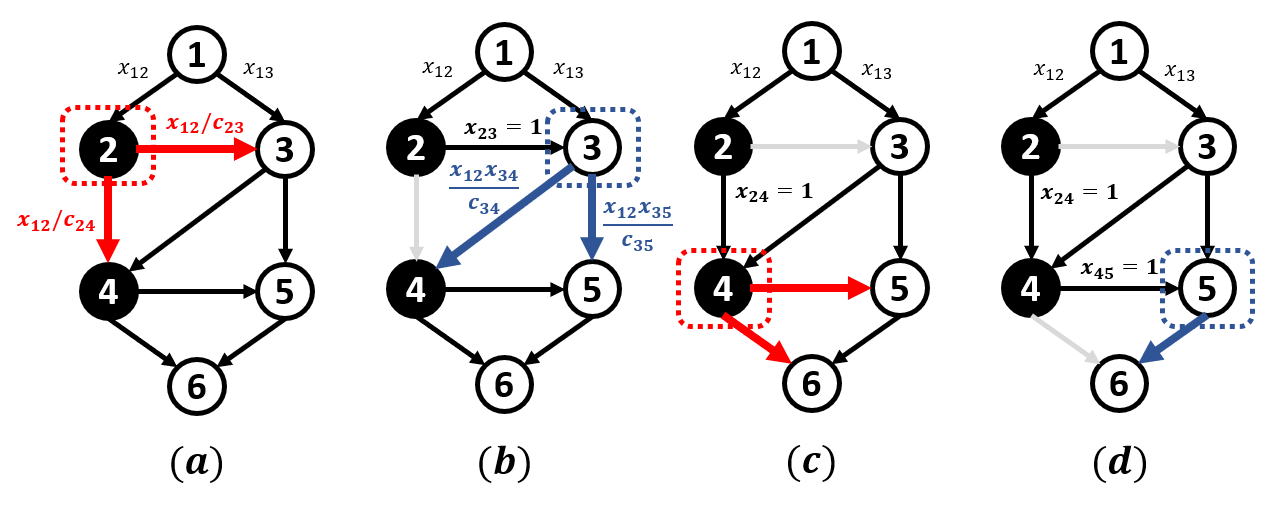}
\caption{\small Example of Algorithm \ref{Alg:optimal-metric-1-general}. Assume that $c_{12}, c_{13}\rightarrow \infty$ which means $(1,2)$ and $(1,3)$ will not be saturated. We can calculate $MF[2] = x_{12}$, $MF[3] = 1$ since the adversarial node $2$ can route all packets to $3$, $MF[4] = x_{12} + x_{13}x_{34}$ since node $2$ can route all packets to $4$, and $MF[5] = 1$ since node $4$ can route all packets to $5$. Then we find the first saturated connected downstream link of each node in $\{2,3,4,5\}$, where the calculation in (a) and (c) follows line 5 in Algorithm \ref{Alg:optimal-metric-1-general} since $2,4\in \mathcal{V}_A$ with links highlighted in red,  while that in (b) and (d) follows line 4 since $3,5\in \mathcal{V}_N$ highlighted in blue.}
\label{fig:Paper_example_polynomial_algorithm}
\end{figure}

The computation cost of Algorithm \ref{Alg:optimal-metric-1-general} mainly lies in solving LPs to obtain $MF[i]$ for each non-destination node $i$. We can apply interior point methods to solve LPs in polynomial time. 

\subsection{2-Approximation Algorithm with Partial Information}

We extend to the situation where each adversarial node $i\in \mathcal{V}_A$ only has the access to the network information of its downstream nodes in $\mathcal{V}_i^{down}$. This is realistic as switches and routers may only store the routing information downstream. We demonstrate that even with such partial information, there exists a $2$-approximation algorithm for minimizing $\lambda^*$, which means $\lambda_{ALG}^*$, the no-loss throughput under the routing attack given by the algorithm, satisfies $\lambda_{ALG}^* \leq 2\lambda_{OPT}^*$, where $\lambda_{OPT}^*$ is the no-loss throughput under the optimal solution to \eqref{eqn:min-max-lambda-no-loss}. It unveils the potential of adversaries to find a routing attack close to the optimal attack with partial network information.

We propose the approximation algorithm in Algorithm \ref{Alg:approximation-2}. The core idea is to decompose the traffic flows into two sets: flows that pass some $i\in \mathcal{V}_A$, and those that do not pass any adversarial node. It constructs the routing attack based on the first set of flows, which we show can be derived purely based on the topology and routing information downstream to $\mathcal{V}_A$. 
Algorithm \ref{Alg:approximation-2} contains three steps: (i) Construct the downstream subgraph starting from  $\mathcal{V}_A$, denoted by $\mathcal{G}_{A}^{down} = (\mathcal{V}_{A}^{down}, \mathcal{E}_{A}^{down})$, where $\mathcal{V}_{A}^{down} = \mathcal{V}_A \cup \{\mathcal{V}_i^{down}\}_{i\in \mathcal{V}_A}$ and $\mathcal{E}_{A}^{down} = \{(i,j)\in \mathcal{E}\mid i,j \in \mathcal{V}_{A}^{down}\}$; (ii) For each $i\in \mathcal{V}_A$, calculate the ratio of traffic flows that are transmitted from the source node to node $i$ without passing any other adversarial node $j \in \mathcal{V}_A \backslash \{i\}$, denoted by $R[i]$, based on the downstream traffic information introduced later in Algorithm \ref{Alg:calculation_Ri}; (iii) Find the optimal attack in $\mathcal{G}_A^{down}$ by Algorithm \ref{Alg:optimal-metric-1-general}, considering $R[i]$ units of arrival to each $i\in \mathcal{V}_A$.

\begin{algorithm}[!htp]
\caption{2-Approximation algorithm to minimize $\lambda^*$ based on $\mathcal{G}_A^{down}$}
\label{Alg:approximation-2}
\textbf{Input:} $\mathcal{V}_A$, $\mathcal{G}_{A}^{down}$, normal routing $\{\mathbf{x}_{i}\}_{i\in \mathcal{V}_N \cap \mathcal{V}_A^{down}}$\;
Determine $R[i], \forall i\in \mathcal{V}_A$ by Algorithm \ref{Alg:calculation_Ri}\;
Add $R[i]$ units of arrival rate to node $i\in \mathcal{V}_A$ and solve the optimal solution $\{\mathbf{x}_i^*\}_{i\in \mathcal{V}_A}$\ by \eqref{eqn:min-max-lambda-no-loss-equi} on $\mathcal{G}_{A}^{down}$\;
\textbf{Return} $\mathbf{x}_A = \{\mathbf{x}_i^*\}_{i\in \mathcal{V}_A}$;
\end{algorithm}

We explain Algorithm \ref{Alg:approximation-2} by a toy example in Fig.~\ref{fig:Paper_example_approximation_algorithm}, where $\mathcal{V}_A = \{2,4\}$. In this case,  $\mathcal{V}_A^{down} = \{2,3,4,5,6\}$, where adversarial nodes can only access the routing information in the dashed frame in Fig.~\ref{fig:Paper_example_approximation_algorithm}(a). 
We then calculate $R[i], \forall i \in \mathcal{V}_A$ based on $\mathcal{G}_A^{down}$, where here we explain the meaning of $R[i]$ using the upstream information to $i$ as in Fig.~\ref{fig:Paper_example_approximation_algorithm}(b) for clarity, while it can be calculated by Algorithm \ref{Alg:calculation_Ri} introduced later without upstream information. We observe that $x_{12}$ of the traffic will arrive to node $2$ without passing the other adversarial node $4$, therefore $R[2] = x_{12}$. For node $4$, the only traffic flow that does not pass node $2$ and arrive to $4$ take the path $1\rightarrow 3\rightarrow 4$, thus $R[4] = x_{13}x_{34}$. 
Finally, we add an arrival rate of $R[i]$ unit to adversarial node $i$ in the downstream subgraph $\mathcal{G}_A^{down}$ as in Fig.~\ref{fig:Paper_example_approximation_algorithm}(c), and we solve the optimal routing attack over $\mathcal{G}_A^{down}$ by Algorithm \ref{Alg:optimal-metric-1-general}. We prove in Theorem \ref{thm:approximation} that $\lambda_{ALG}^*$ under the routing attack $\mathbf{x}_A$ output by Algorithm \ref{Alg:approximation-2} satisfies $\lambda_{ALG}^*\leq 2\lambda_{OPT}^*$.

\begin{figure}[!htbp]
\centering
\includegraphics[width=0.99\linewidth]{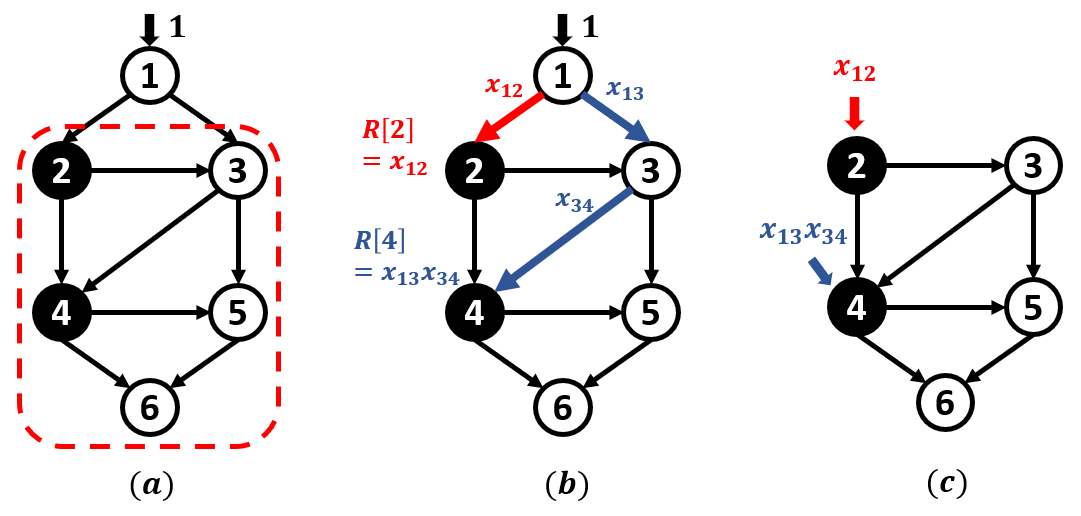}
\caption{Example of running Algorithm \ref{Alg:approximation-2} from (a) to (c)}
\label{fig:Paper_example_approximation_algorithm}
\end{figure}

\begin{theorem}
\label{thm:approximation}
Algorithm \ref{Alg:approximation-2} is a $2$-approximation algorithm.
\end{theorem}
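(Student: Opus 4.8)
The plan is to split the optimal flow $\mathbf{f}^*$ associated with $\lambda_{OPT}^*$ into two parts and argue that the algorithm exploits one of them well. Specifically, under the optimal routing attack achieving $\lambda_{OPT}^*$, consider a unit of arrival at the source; decompose the induced flow into $\mathbf{f}^{(A)}$, the flow on paths that visit at least one adversarial node, and $\mathbf{f}^{(N)}$, the flow on source-to-destination paths avoiding $\mathcal{V}_A$ entirely. Let $\rho_A$ and $\rho_N = 1-\rho_A$ be their respective total masses. The key observation is that $\mathbf{f}^{(N)}$ is completely outside the adversary's control (it never touches $\mathcal{V}_A$), so its contribution to link utilization is fixed; and $\mathbf{f}^{(A)}$, once it reaches the first adversarial node on its path, behaves exactly as a flow originating inside $\mathcal{G}_A^{down}$ with the per-node arrival masses $R[i]$ defined in Algorithm \ref{Alg:calculation_Ri}. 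Thus the restriction of the global optimal attack to $\mathcal{G}_A^{down}$ is a feasible attack for the subproblem that Algorithm \ref{Alg:approximation-2} solves optimally, which lets us compare the two.

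The steps I would carry out, in order: (i) Formalize the flow decomposition $\mathbf{f}^* = \mathbf{f}^{(A)} + \mathbf{f}^{(N)}$ and verify that $R[i]$ equals the mass of $\mathbf{f}^{(A)}$ entering $\mathcal{G}_A^{down}$ at node $i$ (i.e. arriving at $i$ without having passed another adversarial node). (ii) Note that the maximum link utilization under any routing matrix $\mathbf{X}$ equals $1/\lambda^*(\mathbf{X})$ by Proposition \ref{prop:equi-transform}, so minimizing $\lambda^*$ is maximizing the achievable max-utilization $U := \max_{(i,j)} f_{ij}/c_{ij}$. (iii) Lower-bound the utilization achieved by the algorithm's attack: since Algorithm \ref{Alg:optimal-metric-1-general} run on $\mathcal{G}_A^{down}$ with arrivals $R[\cdot]$ is exact for that subproblem, and the restriction of the global-optimal attack to $\mathcal{G}_A^{down}$ is one feasible choice there, the algorithm's attack produces a downstream flow $\mathbf{g}$ with $\max_{(i,j)\in\mathcal{E}_A^{down}} g_{ij}/c_{ij} \ge \max_{(i,j)\in\mathcal{E}_A^{down}} f^{(A)}_{ij}/c_{ij}$. (iv) Relate this back to the global network: when the algorithm's downstream attack is deployed in $\mathcal{G}$, the flow on any downstream link is at least $\mathbf{g}$ (the avoiding flow $\mathbf{f}^{(N)}$ and any merging normal traffic only add to it), so $U_{ALG} \ge \max_{(i,j)} f^{(A)}_{ij}/c_{ij}$. (v) Finally bound $U_{OPT}$ from above: on any link, $f^*_{ij} = f^{(A)}_{ij} + f^{(N)}_{ij}$, and separately argue that $\max_{(i,j)} f^{(N)}_{ij}/c_{ij} \le U_{ALG}$ as well — because $\mathbf{f}^{(N)}$ is attack-independent, it is present (as a lower bound on link load) regardless of which attack is played, in particular under the algorithm's. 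Hence $U_{OPT} = \max_{(i,j)}(f^{(A)}_{ij}+f^{(N)}_{ij})/c_{ij} \le \max_{(i,j)} f^{(A)}_{ij}/c_{ij} + \max_{(i,j)} f^{(N)}_{ij}/c_{ij} \le 2\,U_{ALG}$, which gives $\lambda^*_{ALG} \le 2\lambda^*_{OPT}$.

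The main obstacle I anticipate is step (iv)–(v): making rigorous the claim that deploying the downstream-only attack in the full network produces link loads that dominate both $\mathbf{f}^{(A)}$ and (separately) $\mathbf{f}^{(N)}$. The subtlety is that the actual flow in $\mathcal{G}$ under the algorithm's attack is not simply $\mathbf{g} + \mathbf{f}^{(N)}$ — normal nodes upstream of $\mathcal{V}_A$ route according to fixed ratios, so the mass reaching each adversarial node $i$ is exactly $R[i]$ only if the upstream normal routing is the same as in the optimal-attack scenario, which it is (normal nodes are fixed), but adversarial nodes that are upstream of other adversarial nodes complicate the bookkeeping of "which flow passed whom." I would handle this by processing $\mathcal{V}_A$ in a topological-like order induced by the downstream relation and charging each flow path to the \emph{first} adversarial node it meets, so that the $R[i]$ masses partition $\rho_A$ without double counting; in general (cyclic) networks one argues with the path decomposition directly rather than a topological order. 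A secondary check is that $\mathbf{f}^{(N)}$ being "present as a lower bound regardless of attack" is valid — this holds because a flow-avoiding-$\mathcal{V}_A$ decomposition of the normal routing exists independent of $\mathbf{x}_A$, so its link loads are a floor on the true loads under any attack, including the algorithm's.
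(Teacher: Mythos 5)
Your proposal is correct and follows essentially the same route as the paper's proof: the same decomposition of each link's flow into the part that passes some adversarial node and the attack-independent part that avoids $\mathcal{V}_A$, the same use of the algorithm's exact optimality on $\mathcal{G}_A^{down}$ with arrivals $R[i]$ to dominate the optimal attack's adversarial-flow utilization, and the same additive bound $\max_{(i,j)}(f^{(A)}_{ij}+f^{(N)}_{ij})/c_{ij}\leq \max_{(i,j)}f^{(A)}_{ij}/c_{ij}+\max_{(i,j)}f^{(N)}_{ij}/c_{ij}$ yielding the factor $2$. The bookkeeping concern you raise in steps (iv)--(v) is handled exactly as you suggest (charging each path to the first adversarial node it meets, which is what $R[i]$ encodes), so no genuine gap remains.
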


\begin{proof}
Recall the optimal traffic flow vector to \eqref{eqn:min-max-lambda-no-loss-equi} is denoted by $\mathbf{f}^{*}$. We denote $\mathbf{f}^{*}$ by $\mathbf{f}^{OPT}$ in the proof, and the traffic flow vector under the routing output by Algorithm \ref{Alg:approximation-2} by $\mathbf{f}^{ALG}$. Therefore $(\lambda_{OPT}^*)^{-1} = \max_{(i,j)\in \mathcal{E}} f_{ij}^{OPT} / c_{ij}$ and $(\lambda_{ALG}^*)^{-1} = \max_{(i,j)\in \mathcal{E}} f_{ij}^{ALG} / c_{ij}$. Further notice that the flow $f_{ij}$ is the sum of flow that are in the constructed $\mathcal{G}_{A}^{down}$ and those are not. Denote them as $g_{ij}$ and $h_{ij}$ respectively, and thus $f_{ij} = g_{ij} + h_{ij}$. Therefore the approximation ratio
$$
\begin{aligned}
    &\left(\frac{\lambda_{ALG}^*}{\lambda_{OPT}^*}\right)^{-1}  
    = \frac{\max_{(i,j)\in \mathcal{E}} (g_{ij}^{ALG} + h_{ij}^{ALG}) / c_{ij}}{\max_{(i,j)\in \mathcal{E}} (g_{ij}^{OPT} + h_{ij}^{OPT}) / c_{ij}}
    \\& \qquad \overset{(a)}{=} \frac{\max_{(i,j)\in \mathcal{E}} (g_{ij}^{ALG} + h_{ij}^{ALG}) / c_{ij}}{\max_{(i,j)\in \mathcal{E}} (g_{ij}^{OPT} + h_{ij}^{ALG}) / c_{ij}}
    \\&\qquad\overset{(b)}{\geq}  \frac{  \max\left\{\max_{(i,j)\in \mathcal{E}} g_{ij}^{ALG} / c_{ij}, \max_{(i,j)\in \mathcal{E}} h_{ij}^{ALG} / c_{ij}\right\}}{\max_{(i,j)\in \mathcal{E}} (g_{ij}^{OPT} + h_{ij}^{ALG}) / c_{ij}}
    \\&\qquad\overset{(c)}{\geq} \frac{  \max\left\{\max_{(i,j)\in \mathcal{E}} g_{ij}^{OPT} / c_{ij}, \max_{(i,j)\in \mathcal{E}} h_{ij}^{ALG} / c_{ij}\right\}}{\max_{(i,j)\in \mathcal{E}} g_{ij}^{OPT} / c_{ij} + \max_{(i,j)\in \mathcal{E}} h_{ij}^{ALG} / c_{ij}}
    \\&\qquad\overset{(d)}{\geq} 1/2.
\end{aligned}
$$
In the derivation, $(a)$ holds because $h_{ij}$ represents the flow that does not pass any adversarial node any under default routing, thus not related to the difference between adversarial routing solutions and $h_{ij}^{ALG}=h_{ij}^{OPT}$; (b) holds based on $\max_i \{a_i + b_i\} \geq \max\{\max_i\{a_i\}, \max_i\{b_i\}\}$; (c) holds since Algorithm \ref{Alg:approximation-2} outputs the solution that minimizes the no-loss throughput of the constructed $\mathcal{G}_A^{down}$, thus $\max_{(i,j)\in \mathcal{E}} g_{ij}^{ALG} / c_{ij} \geq \max_{(i,j)\in \mathcal{E}} g_{ij}^{OPT} / c_{ij}$, and the denominator is due to $\max_i \{a_i + b_i\} \leq \max_i\{a_i\} + \max_i\{b_i\}$; (d) holds since $\max\{a,b\}/(a+b)\geq 1/2$. Therefore $\lambda_{ALG}^* / \lambda_{OPT}^* \leq 2$.
\end{proof}


We can then obtain Corollary \ref{coro:node_cut}, which states that as long as $\mathcal{V}_A$ contains a \emph{node cut} of $\mathcal{G}$, the removal of which disconnects the source and destination, then Algorithm \ref{Alg:approximation-2} returns the optimal routing attack that minimizes $\lambda^*$ solely with downstream information. 

\begin{corollary}
\label{coro:node_cut}
With $\mathcal{G}_A^{down}$, Algorithm \ref{Alg:approximation-2} outputs the optimal solution to \eqref{eqn:min-max-lambda-no-loss-equi} if $\mathcal{V}_A$ contains a node cut to $\mathcal{G}$.
\end{corollary}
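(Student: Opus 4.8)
The natural route is to sharpen the decomposition from the proof of Theorem~\ref{thm:approximation}. There every link load is written $f_{ij}=g_{ij}+h_{ij}$, where $g_{ij}$ is the part of the traffic on $(i,j)$ that has already traversed some adversarial node (so it lives in $\mathcal{G}_A^{down}$ and is exactly what Algorithm~\ref{Alg:approximation-2} controls) and $h_{ij}$ is the part that has not yet met any adversarial node (so it is fixed, independent of the attack). The only place a factor of $2$ is lost in Theorem~\ref{thm:approximation} is the bound $\max\{a,b\}/(a+b)\ge 1/2$ applied at the bottleneck link with $a=g_{ij}/c_{ij}$ and $b=h_{ij}/c_{ij}$; so if the $h$-part contributes nothing on the links that can be first-saturated, that slack collapses and we obtain exactness.

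Concretely, the plan is: (i) show that when $\mathcal{V}_A$ contains a node cut separating the source $1$ from the destination $N$, every unit of source flow in any feasible no-loss routing must cross $\mathcal{V}_A$; hence $\sum_{i\in\mathcal{V}_A}R[i]=1$ and $\mathcal{G}_A^{down}$, fed with the $R[i]$-arrivals, carries exactly the $g$-flow (the portion that has traversed the cut) with the correct $g$-loads; (ii) conclude that on every link of $\mathcal{E}_A^{down}$ one has $f_{ij}=g_{ij}$, i.e.\ no residual $h$-flow, so that---since by Theorem~\ref{thm:polynomial_max_flow} Algorithm~\ref{Alg:optimal-metric-1-general} solves \eqref{eqn:min-max-lambda-no-loss-equi} exactly on $\mathcal{G}_A^{down}$---Algorithm~\ref{Alg:approximation-2} is in fact maximizing $\max_{(i,j)\in\mathcal{E}_A^{down}} f_{ij}/c_{ij}$; (iii) use the split $\max_{(i,j)\in\mathcal{E}} f_{ij}/c_{ij}=\max\{\,\max_{(i,j)\in\mathcal{E}_A^{down}} f_{ij}/c_{ij},\ \max_{(i,j)\in\mathcal{E}\setminus\mathcal{E}_A^{down}} h_{ij}/c_{ij}\,\}$, which is valid because a link outside $\mathcal{E}_A^{down}$ starts at a node not in $\mathcal{V}_A^{down}$ and hence carries only $h$-flow: the second term is identical for every attack, while by (ii) the attack produced by Algorithm~\ref{Alg:approximation-2} attains at least the value the optimal attack attains in the first term. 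This gives $(\lambda_{ALG}^*)^{-1}\ge(\lambda_{OPT}^*)^{-1}$, and together with the trivial reverse inequality $(\lambda_{ALG}^*)^{-1}\le(\lambda_{OPT}^*)^{-1}$ (optimality of $\lambda_{OPT}^*$) we get $\lambda_{ALG}^*=\lambda_{OPT}^*$, so the attack returned by Algorithm~\ref{Alg:approximation-2} is optimal for \eqref{eqn:min-max-lambda-no-loss}.

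The crux---and the step I expect to be the main obstacle---is part (ii), ruling out residual $h$-flow on links of $\mathcal{E}_A^{down}$. For a ``clean'' cut---$\mathcal{V}_A$ the out-neighbours of the source, or, more generally, whenever no node of $\mathcal{V}_A^{down}\setminus\mathcal{V}_A$ is reachable from the source along a path avoiding $\mathcal{V}_A$---this is immediate, since then any traffic present on a link of $\mathcal{G}_A^{down}$ must have entered that subgraph through an adversarial node. In a fully general graph one must contend with a node lying between two adversarial nodes that is also fed by a $\mathcal{V}_A$-avoiding path, on whose outgoing link $h$-flow survives; closing that gap seems to require either restricting to clean cuts, or a finer argument showing that whenever the optimal bottleneck link carries leftover $h$-flow, Algorithm~\ref{Alg:approximation-2} can still re-route the $g$-flow to reach the same utilization, exploiting $\sum_{i\in\mathcal{V}_A}R[i]=1$. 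I would first settle the clean-cut case in full and then attempt the general subtlety.
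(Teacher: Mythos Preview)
Your plan is the paper's plan: the paper's entire proof is the two-line assertion ``node cut $\Rightarrow$ no traffic reaches the destination without passing some adversarial node $\Rightarrow$ $h_{ij}^{ALG}=h_{ij}^{OPT}=0$ on $\mathcal{E}_A^{down}$ $\Rightarrow$ ratio $1$ via the chain in Theorem~\ref{thm:approximation}.'' So at the level of strategy you are aligned with the paper.

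Where you and the paper differ is that you flag step~(ii) as the crux, while the paper simply asserts it. Your caution is justified; in fact the inference the paper makes is not valid. ``Every source--destination path meets $\mathcal{V}_A$'' does \emph{not} imply ``every unit of flow on a link of $\mathcal{E}_A^{down}$ has already passed through $\mathcal{V}_A$.'' Take nodes $1,a,b,c,e,N$ with edges $(1,a),(1,b),(a,b),(a,e),(b,c),(c,N),(e,N)$; $\mathcal{V}_A=\{a,c\}$; normal routing $x_{1a}=x_{1b}=\tfrac12$, $x_{bc}=x_{eN}=1$; all capacities infinite except $c_{bc}=0.9$ and $c_{eN}=0.6$. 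Here $\{a,c\}$ is a node cut, yet the half-unit on the path $1\to b\to c$ lives on link $(b,c)\in\mathcal{E}_A^{down}$ as $h$-flow. Algorithm~\ref{Alg:approximation-2} computes $R[a]=R[c]=\tfrac12$ and, solving \eqref{eqn:min-max-lambda-no-loss-equi} on $\mathcal{G}_A^{down}$, picks $x_{ae}=1$ (constructed-problem utilization $0.5/0.6$ beats $0.5/0.9$), giving $\lambda_{ALG}^*=1.2$; but the true optimum is $x_{ab}=1$, which drives $f_{bc}=1$ and $\lambda_{OPT}^*=0.9$. So the corollary as stated fails, and your ``clean cut'' hypothesis---no node of $\mathcal{V}_A^{down}\setminus\mathcal{V}_A$ reachable from the source by a $\mathcal{V}_A$-avoiding path---is exactly the additional condition under which both your argument and the paper's go through. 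Your instinct to settle that case first and treat the general one as genuinely open is the right call.
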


\begin{proof}
If $\mathcal{V}_A$ contains a node cut of the network, then no traffic flows will be sent to destination node without passing any adversarial node. Therefore $h_{ij}^{ALG}=h_{ij}^{OPT}=0, \forall (i,j)\in \mathcal{E}_A^{down}$, and thus  $\lambda_{ALG}^*/\lambda_{OPT}^*=1$.
\end{proof}

We finally introduce Algorithm \ref{Alg:calculation_Ri}, which calculates $R[i]$ for $\forall i \in \mathcal{V}_A$ purely based on $\mathcal{G}_A^{down}$. In Fig.~\ref{fig:Paper_Find_Ri}, we give an example of applying Algorithm \ref{Alg:calculation_Ri} over the network example from Fig.~\ref{fig:Paper_example_approximation_algorithm}. The adversary first does not conduct routing attack on $\mathcal{V}_A=\{2,4\}$, which means nodes in $\mathcal{V}_A$ take their default routing policies before being hijacked. It chooses a timestamp where the network is not overloaded\footnote{We assume that the network before attack does not contain saturated links, which is common in real networks with sufficient provisioned capacity \cite{zhang2021gemini}.} to measure the traffic flow $f_{ij}$ for $\forall (i,j) \in \mathcal{E}_A^{down}$ under the default routing policies of both $\mathcal{V}_N$ and $\mathcal{V}_A$, and the total throughput $F_{total}$ at the destination node. Then we can calculate ${R}[i]$ by traversing nodes in $\mathcal{V}_{A}^{down}$ from upstream to downstream over $\mathcal{G}_{A}^{down}$: Iteratively determine $R[i]$ whenever traversing at $i\in \mathcal{V}_A$, and remove the adversarial node $i$ and its connected downstream links with updated $\mathcal{V}_{A}^{down} = \mathcal{V}_{A}^{down} \backslash \{i\}$ and $\mathcal{E}_{A}^{down} = \mathcal{E}_{A}^{down} \backslash \{(i,j)\in \mathcal{E}, \forall j\}$, and then determine the next adversarial node. 

\begin{algorithm}[!htp]
\caption{Calculate $R[i], i\in \mathcal{V}_A$ based on $\mathcal{G}_A^{down}$}
\label{Alg:calculation_Ri}
\textbf{Input:} $\mathcal{V}_A$, $\mathcal{G}_{A}^{down}$, normal routing $\{\mathbf{x}_{i}\}_{i\in \mathcal{V}_N \cap \mathcal{V}_A^{down}}$\;
Measure the traffic flow $f_{ij}, \forall (i,j)\in \mathcal{E}_A^{down}$ without routing attack on $\mathcal{V}_A$, and total flow $F_{total}$ at the destination node\;
Measure the total traffic arrival to each source node $i\in \mathcal{V}_A^{down}$ as $F[i]$\;
Initialize the total flow to each node $j\in \mathcal{V}_A^{down}$ from within $\mathcal{G}_A^{down}$ as $F[j] = \sum_{i: (i,j)\in \mathcal{E}_A^{down}} f_{ij}$\;
Topological sort $\mathcal{V}_A^{down}$\;
\For{$i \in \mathcal{V}_A^{down}$}{
\textbf{If} $i \in \mathcal{V}_N$, \textbf{continue}\;
$R[i] = F[i] / F_{total}$\;
Remove $i$ and $\{(i,j)\mid (i,j)\in \mathcal{E}_A^{down}\}$ from $\mathcal{V}_A^{down}$ and $\mathcal{E}_A^{down}$ respectively\;
Update $F[j]$ based on $(\mathcal{V}_A^{down}, \mathcal{E}_A^{down})$ by removing flow starting from $i$ to $j$, $\forall j \in \mathcal{V}_A^{down}$\;
}
\textbf{Return} $\{R[i]\}_{i\in \mathcal{V}_A}$;
\end{algorithm}

\begin{figure}[!htbp]
\centering
\includegraphics[width=0.99\linewidth]{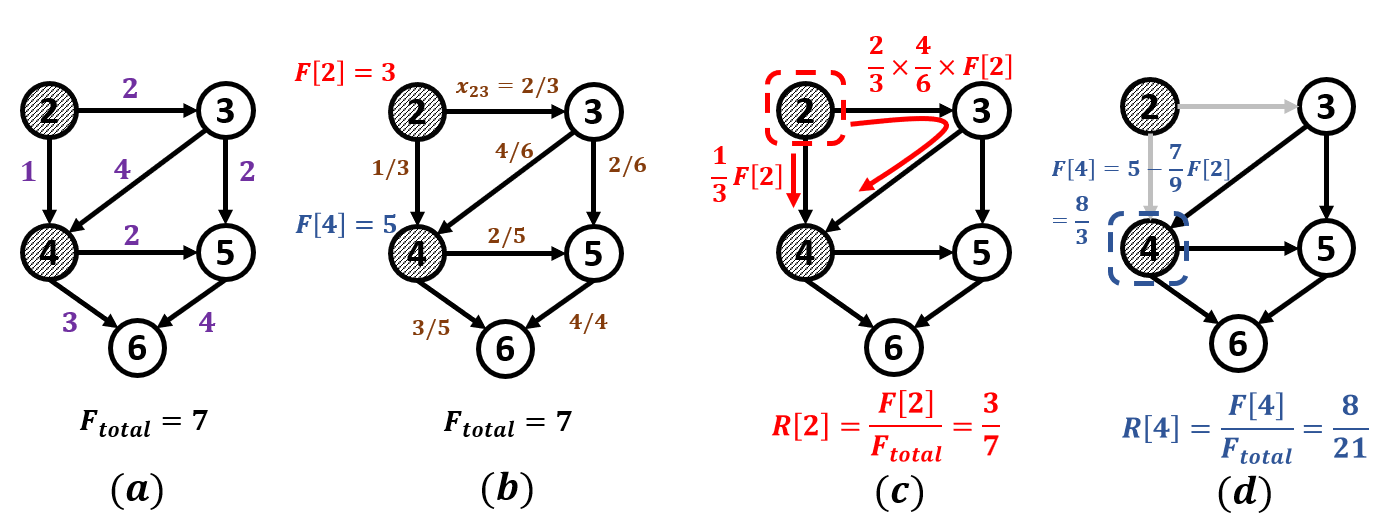}
\caption{\small Example of Algorithm \ref{Alg:calculation_Ri}. We visualize $\mathcal{V}_A$ in Fig.~\ref{fig:Paper_Find_Ri} by shaded instead black nodes since the adversary  does not conduct routing attack when calculating $R[i]$. The original traffic flows are marked by the numbers over each link in (a), and we can measure the total throughput $F_{total}=7$ at the destination node 6. We initialize $F[2]$ and $F[4]$ to be the total flow to node 2 and node 4, which are $3$ and $5$ respectively shown in (b). Since node 2 is the source node in $\mathcal{G}_A^{down}$, $F[2]=3$ does not need to be updated since all the flows from node 2 to 3 and 4 only traverse node 2. Then in (c) we calculate the flows at node 4 that have traversed node 2, which contains paths $2\rightarrow 4$ and $2\rightarrow 3 \rightarrow 4$ with a total of $7/9 \times F[2]$. Then in (d) we substract these flow from $F[4]$ and we get the final $F[4] = 8/21$.}
\label{fig:Paper_Find_Ri}
\end{figure}

\subsection{Practical Extensions}
\label{sec:no-loss-throughput-extension}

\subsubsection{Heuristic for Distributed Attack} Algorithm \ref{Alg:optimal-metric-1-general} and \ref{Alg:approximation-2} require a centralized adversary who can manipulate the routing at different adversarial nodes simultaneously, while it does not apply to distributed routing attack where each adversarial node determines the routing separately. We introduce a heuristic in Algorithm \ref{Alg:heuristic} that works for distributed attack {in directed acyclic networks} with lower time complexity $O(d_{max}|\mathcal{V}_A|M)$. Each $i\in \mathcal{V}_A$ decides its routing that minimizes the no-loss throughput $\lambda^*$ of the induced subgraph formed by  $\{i\} \cup \mathcal{V}_{i}^{down}$ where node $i$ itself serves as the source. We can traverse the nodes from the destination, and whenever we encounter an adversarial node, we decide its routing as above, fix it and continue the traversal to the source. The routing of the predecessor adversarial nodes depends on the successor ones. 

\begin{algorithm}[!htp]
\caption{Heuristics for Distributed Attack}
\label{Alg:heuristic}
\textbf{Input:} directed acyclic $\mathcal{G}=(\mathcal{V}, \mathcal{E})$, $\mathcal{V}_A$, $\{\mathbf{x}_i\}_{i\in \mathcal{V}_N}$\;
Topology sort $\mathcal{V}_A$ from sink to source\;
\For{$i\in \mathcal{V}_A$ in topological sorted order} {
Determine $\mathbf{x}_i^*$ by running Algorithm \ref{Alg:optimal-metric-1-general} on the induced graph of $\mathcal{V}_{i}^{down}$, and fix $\mathbf{x}_i^*$\;
}
\textbf{Return} $\mathbf{x}_A^*$\;
\end{algorithm}

We point out that Algorithm \ref{Alg:heuristic} can lead to an approximation ratio of $2^{|\mathcal{V}_A|}$ under the topology in Fig.~\ref{fig:Paper_example_heuristics_downstream_only}. However, we show empirically in Section \ref{subsec:no_loss_throughput_simulation} that it is an effective heuristic that balances the attack performance and time efficiency.

\begin{figure}[!htbp]
\centering
\includegraphics[width=0.9\linewidth]{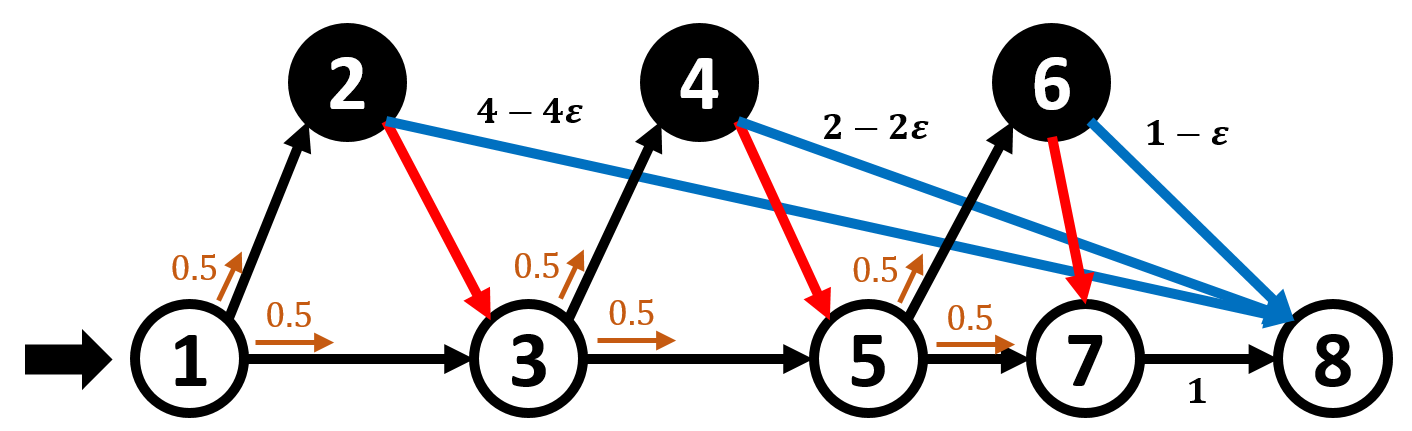}
\caption{\small Example of $2^{|\mathcal{V}_A|}\times \lambda_{OPT}^*$ under Algorithm \ref{Alg:heuristic}. Suppose $\mathcal{V}_{A}=\{2,4,6\}$; $c_{28}=4-4\epsilon$, $c_{48}=2-2\epsilon$, $c_{68}=1-\epsilon$ where $0\leq \epsilon \leq 1$, $c_{78}=1$, and other links have infinite capacity. Nodes 1, 3, and 5 route half of the traffic through each connected downstream link. Algorithm \ref{Alg:optimal-metric-1-general} routes all traffic through red links which leads to $\lambda^* = \lambda_{OPT}^*=1$, while Algorithm \ref{Alg:heuristic} routes all traffic through blue links (topological order $6\rightarrow 4\rightarrow 2$) which leads to $\lambda^* = 8\lambda_{OPT}^* = 8$ when $\epsilon \rightarrow 0$.}
\label{fig:Paper_example_heuristics_downstream_only}
\end{figure}


\subsubsection{Constraints on Routing Attacks}
There may exist constraints over routing attacks in practice for the adversary. For example, the dispatch ratio may need to be within a normal range, i.e. $x_{ij} \in [x_{ij}^{\min}, x_{ij}^{\max}]$ for some $(i,j)\in \mathcal{E}$. Routing in an extreme manner outside such range, like routing all traffic to a single link based on Algorithm \ref{Alg:optimal-metric-1-general}, may have a high risk of being detected as anomalous traffic and thus being exposed \cite{da2016atlantic}. By incorporating such upper and lower bounds on dispatch ratio, the brute-force approach based on Theorem \ref{thm:boundary_property} is no longer guaranteed to be solved in polynomial time even under $|\mathcal{V}_A|=O(1)$ in that the polytope with constraints over each dimension may contain exponential number of vertices{\footnote{Suppose $\mathbf{A}\in \mathbb{R}^{m\times n}$ $(m<n)$ in the constraints $\mathbf{Ax}=\mathbf{b}$ of an LP in standard form, with linearly independent rows, and columns in any subset of $m$ columns of $\mathbf{A}$, then there are at most $\binom{n}{m}$ basic feasible solutions (vertices), each of which corresponds to a basis of the LP. In our problem, $n = O(M)$. Without additional routing constraints, there are only $m = |\mathcal{V}_A| = O(1)$ flow conservation constraints, hence the brute-force algorithm has polynomial-time complexity. However, we may have $m = O(n)$ that leads to $\binom{n}{m} = O(2^M)$ vertices by adding routing constraints.}}, while Algorithm \ref{Alg:optimal-metric-1-general} can still output the optimal solution in polynomial time since $x_{ij} = {f_{ij}}/{\sum_{k:(k,i)\in \mathcal{E}}f_{ki}} \in [x_{ij}^{\min}, x_{ij}^{\max}]$ is a linear constraint given $x_{ij}^{\min}$ and $x_{ij}^{\max}$. 

\section{Loss Maximization}
\label{sec:max_loss}

In this section, we investigate the optimal routing attack $\mathbf{x}_A = \{\mathbf{x}_i\}_{i\in \mathcal{V}_A}$ to maximize the total traffic loss given the traffic arrival rate $\lambda$. We establish that the problem is NP-hard, and
propose two approximation algorithms with performance guarantee for single-hop networks.

\subsection{Problem Formulation and NP-Hardness}

We formulate the loss maximization problem in a general network as in \eqref{eqn:max-loss}.

\begin{equation}
\label{eqn:max-loss}
\begin{aligned}
\min\limits_{\mathbf{f}, \mathbf{x}_A} \quad & \sum_{i: (i,d)\in \mathcal{E}} f_{id} \\
\textrm{s.t.} \quad
& f_{ij} = \min\left\{\Bigg(\sum_{k:(k,i)\in \mathcal{E}}f_{ki}\Bigg)x_{ij},c_{ij}\right\}, \forall (i,j)\in \mathcal{E} 
\\&
\lambda = f_{01}, \qquad f_{ij} \geq 0, ~\forall (i,j)\in \mathcal{E}
\end{aligned}
\end{equation}

The objective of \eqref{eqn:max-loss} is to minimize the total traffic arrived at the destination node, which is equivalent to maximizing the loss $\lambda - \sum_{i: (i,d)\in \mathcal{E}} f_{id}$. The decision variables contain both the flow variables $\mathbf{f}$ and routing policies $\mathbf{x}_A$. The constraints require that the flow over a link $(i,j)$ is equal to the minimum between the total flow injected from node $i$ to $j$ and the link capacity $c_{ij}$. If the flow routed to $(i,j)$ from $i$ exceeds $c_{ij}$, then $f_{ij} = c_{ij}$ and the excess $f_{ij}-c_{ij}$ traffic will be dropped. We point out the loss maximization problem cannot be formulated with linear constraints as \eqref{eqn:min-max-lambda-no-loss}. 
A more compact but less intuitive equivalent way is to remove $\mathbf{x}_A$ from decision variables by expressing $\mathbf{x}_i$ via $\mathbf{f}$ as done in \eqref{eqn:min-max-lambda-no-loss}. 

We prove Proposition \ref{prop:boundary_property_maxloss} which states that the problem \eqref{eqn:max-loss} shares the property introduced in Theorem \ref{thm:boundary_property} for \eqref{eqn:min-max-lambda-no-loss-equi}, indicating that a brute-force approach that follows the idea of Theorem \ref{thm:boundary_property}, with the objective replaced by loss maximization, can output the optimal solution to \eqref{eqn:max-loss}, with time complexity $O(N^{|V_A|}M)$ which is polynomial under $|\mathcal{V}_A|=O(1)$.

\begin{proposition}
\label{prop:boundary_property_maxloss}
There exists an optimal solution to \eqref{eqn:max-loss} such that for $\forall i \in \mathcal{V}_A$, $\exists j$ that $(i,j) \in \mathcal{E}$ and $x_{ij}^*=1$, while $x_{ik}^*=0$ for $k\neq j$.
\end{proposition}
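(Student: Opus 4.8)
The plan is to argue by a local exchange/perturbation argument at a single adversarial node, holding the routing at all other nodes fixed. Suppose $\mathbf{x}_A$ together with the induced flow $\mathbf{f}$ is an optimal solution to \eqref{eqn:max-loss}, and suppose some adversarial node $i \in \mathcal{V}_A$ splits its incoming traffic among two or more connected downstream links. Let $T_i := \sum_{k:(k,i)\in\mathcal{E}} f_{ki}$ be the total traffic arriving at $i$ (which, crucially, does not depend on $\mathbf{x}_i$ once the routing upstream of $i$ is fixed — in a general network one must be slightly careful here, but since the excerpt works with examples where $\mathcal{V}_i^{up}\cap\mathcal{V}_i^{down}=\emptyset$, I will make the argument under the DAG-like assumption that $i$'s incoming flow is unaffected by $\mathbf{x}_i$; the general case follows by the same monotonicity reasoning). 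Now view the total delivered throughput $G(\mathbf{x}_i) := \sum_{m:(m,d)\in\mathcal{E}} f_{md}$ as a function of $\mathbf{x}_i$ alone, with everything else frozen. I would first establish that $G$ is \emph{concave} in $\mathbf{x}_i$ over the simplex $\{\mathbf{x}_i \ge 0, \sum_j x_{ij}=1\}$: the flow that node $i$ pushes onto link $(i,j)$ is $\min\{T_i x_{ij}, c_{ij}\}$, a concave nondecreasing function of $x_{ij}$; propagating this downstream, each subsequent link flow $f_{uv} = \min\{(\sum_k f_{ku})x_{uv}, c_{uv}\}$ is a concave nondecreasing function of its inputs (composition of $\min$ with a nonnegative linear map preserves concavity and monotonicity), and concavity/monotonicity is preserved under nonnegative combinations; hence the sum of flows into $d$ is concave in $\mathbf{x}_i$.

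Given concavity of $G$, the objective of \eqref{eqn:max-loss}, namely $\lambda - G(\mathbf{x}_i)$, is \emph{convex} in $\mathbf{x}_i$, so its minimum over the simplex is attained at a vertex of the simplex — i.e. at a point with $x_{ij^\ast}=1$ for a single connected downstream node $j^\ast$ and $x_{ik}=0$ otherwise. Thus there is no loss of optimality in replacing $\mathbf{x}_i$ by such a degenerate single-link policy: the delivered throughput does not increase, so the loss does not decrease. Repeating this replacement one adversarial node at a time — each step can only weakly decrease delivered throughput, and at each step I re-freeze the just-modified node and move to the next — after $|\mathcal{V}_A|$ steps I obtain an optimal solution in which every $i\in\mathcal{V}_A$ routes all of its traffic through a single connected downstream link, which is exactly the claimed form. (Equivalently, one can phrase this as: the feasible region, after eliminating $\mathbf{f}$, is a product of simplices over the adversarial nodes, and we are minimizing a function that is convex when restricted to each simplex-factor, so coordinate-wise we may push to vertices.)

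The main obstacle is the concavity/monotonicity propagation claim, and in particular doing it cleanly in a \emph{general} (possibly cyclic) network where downstream flows can feed back. In the acyclic case the propagation is a straightforward induction along a topological order: $f_{uv}$ is a $\min$ of a constant ($c_{uv}$) and a nonnegative linear combination (with coefficient $x_{uv}\ge 0$) of already-established concave nondecreasing functions of $\mathbf{x}_i$, and $\min$ of concave functions is concave, while monotonicity is clear. In the presence of cycles one needs a fixed-point argument: the flow vector $\mathbf{f}$ is the (unique) solution of a monotone contraction-like system, and one must show the solution map is concave and monotone in the parameter $x_{ij}$ — this can be handled via a monotone iteration $\mathbf{f}^{(0)}=\mathbf{0}$, $\mathbf{f}^{(t+1)}=\Phi(\mathbf{f}^{(t)};\mathbf{x}_i)$, where each iterate is concave and nondecreasing in $\mathbf{x}_i$ by the same composition rules, and concavity/monotonicity pass to the limit. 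A secondary subtlety worth a remark is well-definedness: $\mathbf{f}$ must be uniquely determined by $\mathbf{X}$, which the model's flow equations guarantee; and one must note the earlier caveat (as in Theorem \ref{thm:boundary_property}) that an optimal vertex exists at all, which here is immediate since the simplex is compact and $G$ is continuous.
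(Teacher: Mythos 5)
Your proposal is correct and takes essentially the same route as the paper: with all other routing frozen, the delivered throughput is a nested min/sum of affine (hence concave) functions of $\mathbf{x}_i$ on the simplex, so each adversarial node's subproblem has a vertex optimum, and a node-by-node replacement yields an all-vertex optimal attack (your explicit sequential exchange is in fact a cleaner rendering of the paper's ``proof by contradiction'' step, and your DAG/cyclic caveat mirrors the paper's implicit unravelling of the nested flow expressions). One wording slip to fix: since $\lambda - G(\mathbf{x}_i)$ is convex, it is its \emph{maximum} over the simplex (equivalently, the minimum of the concave objective $\sum_{i:(i,d)\in\mathcal{E}} f_{id}$ of \eqref{eqn:max-loss}) that is attained at a vertex, not its minimum --- the extreme-point principle you need is ``convex maximized / concave minimized at a vertex,'' which is clearly what you intend given the sentence that follows.
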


\begin{proof}
Note that the constraint in \eqref{eqn:max-loss} implies that the flow $f_{ij}$ can be expressed by a $k$-layer nested min function by the flows over links $k$ hops predecessor to link $(i,j)$. Consider the network in Fig.~\ref{fig:Paper_example_multihop_networks_and_routing_attack} for an example, the objective function can be written as
$$
\begin{aligned}
f_{46} + f_{56} &= \min\{f_{24} + f_{34}, c_{46}\} + \min\{f_{35}, c_{56}\}
\\&= \min\{\min\{f_{12}, c_{24}\} + \min\{f_{13}x_{34}, c_{34}\}\} 
\\& \quad + \min\{\min\{f_{13}x_{35}, c_{35}\}, c_{56}\} = ...
\end{aligned}
$$

We first consider the case with a single adversarial node $\mathcal{V}_A = \{i\}$. Then the objective in \eqref{eqn:max-loss} is a nested combination of minimum and sum of affine functions of $\mathbf{x}_i$, which is concave since the elementwise $\min$ function is concave. Furthermore, by expressing \eqref{eqn:max-loss} as a function of the decision variable $\mathbf{x}_i$, the constraints are simply $\sum_{j:(i,j)\in \mathcal{E}} x_{ij}=1, x_{ij}\geq 0$, which are linear. Therefore \eqref{eqn:max-loss} is to minimize a concave function in an affine feasible region, indicating that one of the vertices must be optimal, i.e. $\exists j$ that $(i,j) \in \mathcal{E}$ and $x_{ij}^*=1$, while $x_{ik}^*=0$ for $k\neq j$.

Now we extend to general $\mathcal{V}_A$. Note that for $\forall i \in \mathcal{V}_A$, when the routing policies of nodes in $\mathcal{V}_A\backslash \{i\}$ are fixed, the objective is a concave function of the routing policies $\mathbf{x}_i$. Given any fixed routing of nodes in $\mathcal{V}_A\backslash \{i\}$, one of the optimal routing polices at node $i$ must be $x_{ij}=1$ over some link $(i,j)$. This means that given an adversarial routing solution $\{\mathbf{x}_i\}_{i\in \mathcal{V}_A}$, if there exists any node $i_0$ that satisfies $x_{i_0j}<1, \forall (i,j)\in \mathcal{E}$, there must exist $j_0$ where letting $x_{i_0j_0}=1$ will not cause less loss. Therefore through proof by contradiction, there must exist a vertex that maximizes the loss.
\end{proof}

However, we show that unlike no-loss throughput minimization, loss maximization \eqref{eqn:max-loss} is an NP-hard problem under general size of $\mathcal{V}_A$. 

\begin{theorem}
\label{thm:np_hard}
Problem \eqref{eqn:max-loss} is NP-hard under  $|\mathcal{V}_A| = O(N)$.
\end{theorem}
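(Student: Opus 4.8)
The plan is to reduce from a known NP-hard problem whose combinatorial structure mirrors the ``route all traffic to one downstream link'' choice that Proposition~\ref{prop:boundary_property_maxloss} already tells us an optimal attack must make. Natural candidates are 3-SAT, MAX-2-SAT, Partition/Subset-Sum, Set Cover, or Independent Set. Since each adversarial node, in the optimal solution, selects exactly one outgoing link, an adversarial node behaves like a variable that picks one value from a small domain; the loss on the downstream links then depends on how many adversarial nodes funnel flow onto a shared bottleneck. This ``many small streams colliding on shared capacitated links'' picture suggests reducing from a covering/packing-style problem. I would aim for a reduction from \textbf{3-SAT} (or equivalently Vertex Cover / Independent Set), gadget-encoding each variable as an adversarial node with two connected downstream nodes corresponding to the literal being set true or false, and each clause as an egress/bottleneck node whose capacity is chosen so that it overflows (producing loss) exactly when the clause is satisfied (or is not satisfied, depending on orientation), so that the total loss exceeds a threshold $\tau$ iff the formula is satisfiable.

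The key steps, in order: (i) Fix the target problem and describe the graph gadget --- source $1$ feeding a layer of ``variable'' adversarial nodes $\mathcal{V}_A$, each with exactly two connected downstream normal nodes (the ``true'' and ``false'' branches), those branches wired into ``clause'' nodes, and all clause nodes feeding the destination $N$, with link capacities assigned so that a clause link is saturated precisely under the intended literal assignments. (ii) Choose the arrival rate $\lambda$ and all capacities explicitly (rational, polynomially bounded) so that the reduction is polynomial-size. (iii) Invoke Proposition~\ref{prop:boundary_property_maxloss} to restrict attention to attacks in which each adversarial node routes all of its traffic to a single downstream link --- this is exactly what makes the attacker's decision space isomorphic to a Boolean assignment, and it is the step that lets the continuous flow problem collapse to a discrete one. (iv) Prove the two directions: a satisfying assignment yields an attack achieving loss $\ge \tau$ (by following the induced single-link routing and computing the overflow on the saturated clause links), and conversely any attack with loss $\ge \tau$ can, via Proposition~\ref{prop:boundary_property_maxloss}, be taken to be of the single-link form, from which one reads off a truth assignment and argues that each clause link could only have contributed the required loss if the clause is satisfied. (v) Conclude that \eqref{eqn:max-loss} with $|\mathcal{V}_A| = O(N)$ is NP-hard.

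The main obstacle I anticipate is designing the capacity values and the threshold $\tau$ so that the loss is an \emph{all-or-nothing} indicator of clause satisfaction rather than a graded quantity that can be partially accumulated from ``wrong'' assignments --- i.e., ruling out the possibility that the attacker achieves loss $\ge \tau$ by spreading flow in a way that doesn't correspond to any clean Boolean assignment, or by exploiting a clause link's overflow under an unintended combination of literals. Handling 3-literal clauses is delicate because three incoming streams can saturate a link in several ways; one fix is to make the three literal-branches into a single clause node carry \emph{unequal} capacities/flows chosen (e.g. via distinct powers of a base, as in Subset-Sum-style encodings) so that the overflow pattern uniquely certifies which literals are set, or alternatively to reduce from a problem with 2-literal clauses (MAX-2-SAT) to simplify the collision analysis. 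A secondary subtlety is that the network should be such that the normal nodes' default routing is forced (or irrelevant), so that all of the loss is attributable to the adversarial choices; placing the only capacitated links immediately downstream of the variable/clause layer and giving normal nodes trivial (single-successor) routing handles this. Once the gadget is pinned down, both directions of the correctness proof should be routine flow-accounting.
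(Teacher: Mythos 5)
There is a genuine gap here: you never actually construct the reduction, and the route you pick (3-SAT with clause-overflow gadgets) runs into exactly the difficulty you flag at the end without resolving it. In problem \eqref{eqn:max-loss} the loss at a bottleneck link is the \emph{excess} of inflow over capacity, so it is inherently graded: a clause node with all three literal-streams arriving loses more traffic than one with a single literal-stream, and there is no way to cap this contribution (all excess over capacity is lost). Consequently ``maximum loss $\ge \tau$'' does not line up with ``every clause has at least one true literal'': an assignment that leaves one clause unsatisfied but piles three true literals onto many other clauses can beat an assignment that satisfies all clauses with one literal each. Your suggested fixes (unequal, Subset-Sum-style capacities, or switching to MAX-2-SAT) are precisely the unproven part that would have to do all the work, and it is not clear they can be made to work, since the adversary is free to exploit any spreading of flow that racks up excess on heavily-fed links. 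In short, steps (i)--(ii) of your plan are missing, and step (iv) as described would fail for the gadget you sketch.

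The paper sidesteps this entirely by reducing from Set Cover, which you listed as a candidate but did not pursue. Elements become adversarial ingress nodes each carrying exactly one unit of traffic (a meta-source with uniform dispatch ratio $1/m$ and arrival rate $m$), sets become intermediate nodes, each connected to a meta-destination by a link of capacity exactly $1$, and element-to-set links have infinite capacity. With this normalization the total throughput equals the number of distinct set-nodes that receive any traffic, so the loss is exactly $m$ minus that number; maximizing loss is therefore \emph{identical} to minimizing the number of sets used to cover all elements, with no threshold tuning, no graded-loss accounting, and no need to rule out ``mixed'' attacks (Proposition \ref{prop:boundary_property_maxloss} is not even required, since routing a node's unit of traffic to several sets can only increase the number of sets touched). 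If you want to salvage your own write-up, the lesson is to choose a source problem whose objective is already a count of used sinks rather than a satisfiability predicate that must be encoded into capacity arithmetic.
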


\begin{proof}
We reduce Set Cover to Problem \ref{eqn:max-loss}. Given an instance of Set Cover: $m$ elements $\{e_i\}_{i=1}^m$, $n$ sets $\{S_j\}_{j=1}^n$ where each set covers several elements, what is the minimum number of sets that can fully covers all the elements? We construct the graph in Fig.~\ref{fig:Paper_NP_hard_maxloss}, where element $e_i$ as a node $s_i$, and set $S_j$ as a node $d_j$, and there is a directed link from $(s_i, d_j)$ if and only if the set $S_j$ contains $e_i$, with capacity $c_{s_i,d_j} = \infty$. We further introduce a meta source node $s_0$, and a link from $s_0$ to each of the nodes in $\{s_i\}_{i=1}^m$ with link capacity $c_{s_0, s_i} = \infty$. The routing policy at $s_0$ is $x_{s_0,s_i} = 1/m$, $\forall i = 1,\cdots,m$, and an external arrival rate of $m$. We then introduce a meta destination node $d_0$, and introduce a directed link $(d_j,d_0)$ with capacity $c_{d_j,d_0}=1$. The adversarial node set $\mathcal{V}_A = \{s_i\}_{i=1}^m$, and they can dispatch traffic to any of its connected downstream nodes. Clearly, the optimal routing attack of $\mathcal{V}_A$ that maximizes the loss is to route traffic to minimum number of nodes in $\{d_j\}_{j=1}^n$, where the maximum loss is $m-p$ suppose that the minimum number is $p$. Therefore if there exists a polynomial-time algorithm for Problem \eqref{eqn:max-loss}, then it outputs the minimum number of nodes in $\{d_j\}_{j=1}^n$ that receives traffic under the optimal attack, which corresponds to the minimum set cover over elements $\{e_i\}_{i=1}^m$, thus contradicting to the NP-hardness of Set Cover.
\end{proof}

\begin{figure}[!htbp]
\centering
\includegraphics[width=0.85\linewidth]{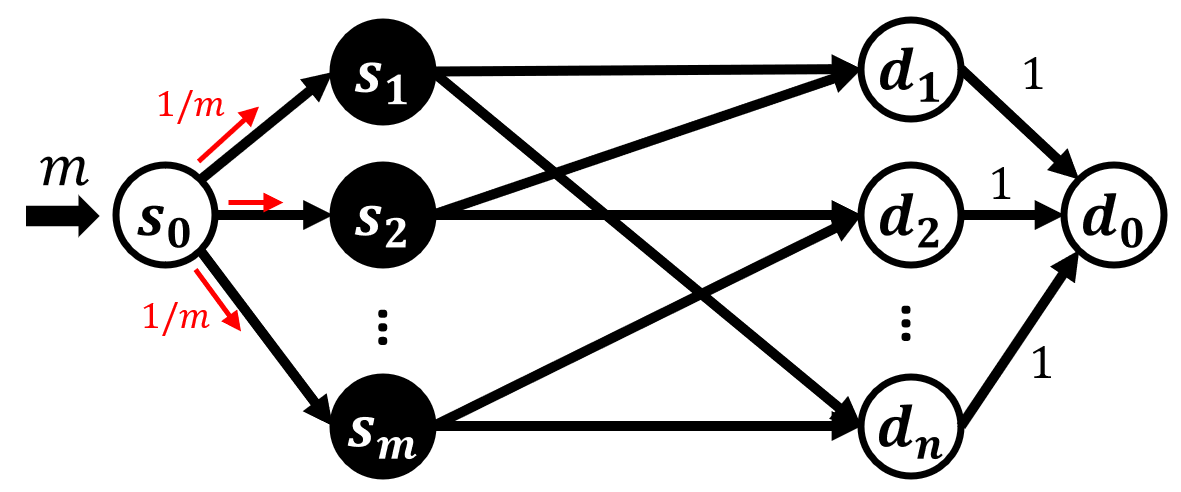}
\caption{Graph Construction from Set Cover}
\label{fig:Paper_NP_hard_maxloss}
\end{figure}

\subsection{Approximation Algorithms}
\label{subsec:apprimaxtion_algorithm}

We propose two approximation algorithms with multiplicative and additive performance guarantees respectively in single-hop networks, where routing attacks are conducted on ingress nodes $\mathcal{V}_S$. Under static routing, we can assume w.l.o.g. $\mathcal{V}_S = \mathcal{V}_A$, i.e., all ingress nodes are hijacked. This is because if an ingress node is normal, then based on its default routing policy we can calculate the traffic rate sent from this node to each of its connected downstream egress nodes. We can then correspondingly reduce the service rate $\mu_j$ at each egress node $d_j$ by the total amount of traffic rates from all the normal ingress nodes to $d_j$. Therefore, the single-hop network with normal ingress nodes can be equivalently transformed into the one with all ingress nodes being adversarial, and updated remaining service rates at $\mathcal{V}_D$. We explain the algorithm design considering $\mathcal{V}_S = \mathcal{V}_A$ below.

\subsubsection{Algorithm with Multiplicative Guarantee} We propose a greedy-based algorithm that has a worst-case approximation ratio of $1/\sqrt{|\mathcal{V}_A|}$. We summarize the details in Algorithm \ref{Alg:max_loss_approx_multiplicative}, which executes two greedy approaches and outputs the solution that leads to higher loss. Both approaches share the idea of iteratively routing traffic to the egress node that leads to maximum overload.
Approach 1 maximizes the overload at an egress node \emph{without} normalization. In each iteration, the adversary takes aim at the egress node $d_j^* = \argmax_{d_j \in \mathcal{V}_D} 
\left(\sum_{s_i \in UD: (s_i, d_j)\in \mathcal{E}} \lambda_i\right) - \mu_j$, where $UD$ denotes the set of adversarial nodes whose routing is undecided by the current iteration, and then sets the routing policy at all ingress nodes in $UD \cap \{s_i\in \mathcal{V}_S \mid (s_i, d_j^*) \in \mathcal{E}\}$ to be dispatching all their traffic to $d_j^*$. 
Approach 2 maximizes overload at an egress node \emph{with} normalization, where at each $d_j \in \mathcal{V}_D$, the adversary finds the routing that maximizes the per-ingress overload at $d_j$, defined as
\begin{equation}
\label{eqn:PSO}
    PSO[d_j] := \max_{\mathcal{S}_{d_j} \subseteq  UD} \frac{1}{|\mathcal{S}_{d_j}|}
\left(\sum_{s_i\in \mathcal{S}_{d_j}: (s_i, d_j)\in \mathcal{E}} \lambda_i - \mu_j\right)
\end{equation}
which means to find the subset of ingress nodes, denoted by $\mathcal{S}_{d_j}$, that can send traffic to and maximizes the overload at $d_j$ normalized by the size of $\mathcal{S}_{d_j}$. Denote the optimal $\mathcal{S}_{d_j}$ with respect to \eqref{eqn:PSO} for node $d_j$ by $\mathcal{S}_{d_j}^{*}$. The adversary routes all the traffic at ingress nodes in $\mathcal{S}_{d_j^*}^{*}$ to $d_j^* = \argmax_{d_j\in \mathcal{V}_D} PSO[d_j]$. We iterate the above process over the ingress nodes whose routing policies are yet to be determined. Note that the determination of $\mathcal{S}_{d_j}^*$ for each $d_j\in \mathcal{V}_D$ can be done in polynomial time: Initialize $\mathcal{S}_{d_j}^*=\emptyset$, sort all ingress nodes in $UD \cap \{s_i\in \mathcal{V}_S \mid (s_i, d_j^*) \in \mathcal{E}\}$ non-increasingly with respect to the traffic arrival rates to them, and add these ingress nodes to $\mathcal{S}_{d_j}^*$ in sequel, until the $PSO[d_j]$ reaches the peak value and starts decreasing. 


\begin{algorithm}[!htp]
\caption{Approximation Algorithm with Multiplicative Guarantee}
\label{Alg:max_loss_approx_multiplicative}
\textbf{Input:} Single-hop network $\mathcal{G}=(\mathcal{V}, \mathcal{E})$, $\mathcal{V}_A$, $\boldsymbol{\lambda}$, $\boldsymbol{\mu}$;

\SetKwFunction{Approach}{GreedyAlg}

\SetKwProg{Fn}{Function}{\string:}{end}

\Fn{\Approach{$\mathcal{G}$, $\boldsymbol{\lambda}$, $\boldsymbol{\mu}$, GreedyType}}{
    Initialize $\mathbf{x}_i = \boldsymbol{0},~\forall s_i \in \mathcal{V}_A$; 
    $UD = \mathcal{V}_A$\;
    \While{$\exists s_i\in UD$}{
    \If{$GreedyType$ is Approach1}
    {\small $j^* \leftarrow \arg\max_j \left(\sum\limits_{s_i\in UD, (s_i,d_j)\in \mathcal{E}} \lambda_i\right) - \mu_j$\;
    }
    \If{$GreedyType$ is Approach2}{
     $j^* \leftarrow \arg\max_j PSO[d_j]$ as  \eqref{eqn:PSO}\;
    }
    Let $\mathbf{x}_{ij^*} = 1$, $UD = UD\backslash {s_i}$, $\forall s_i, (s_i, d_j^*)\in \mathcal{E}$\;
    }
    Calculate $loss$ based on $sol \leftarrow \{\mathbf{x}_i\}_{i\in \mathcal{V}_A}$\;
    \Return{$loss$, $sol$}
}
Update remaining service rates at $\mathcal{V}_D$\;
Remove $\mathcal{V}_N$ from $\mathcal{V}_S$, and associated links from $\mathcal{E}$\;
$loss_1$, $sol_1$ $\leftarrow$ \Approach($\mathcal{G}$, $\boldsymbol{\lambda}$, $\boldsymbol{\mu}$, Approach1)\;
$loss_2$, $sol_2$ $\leftarrow$ \Approach($\mathcal{G}$, $\boldsymbol{\lambda}$, $\boldsymbol{\mu}$, Approach2)\;
\textbf{Return} $sol_1$ if $loss_1 > loss_2$ else $sol_2$\;
\end{algorithm}

We demonstrate in Theorem \ref{thm:multiplicative} that taking the better routing attack solution between Approach 1 and 2 leads to a worst-case approximation ratio of $1/\sqrt{|\mathcal{V}_A|}$. The proof is deferred to the appendix. A single application of either of the two approaches, however, has a worst-case approximation ratio of $1/|\mathcal{V}_A|$. We give toy examples in Fig.~\ref{fig:Paper_example_approximation_multiplicative} to validate the tightness of these performance guarantees. The red routing is the optimal routing attack, while the blue one is the routing attack output by Approach 1 in Fig.~\ref{fig:Paper_example_approximation_multiplicative}(a), Approach 2 in Fig.~\ref{fig:Paper_example_approximation_multiplicative}(b), and Algorithm \ref{Alg:max_loss_approx_multiplicative} in Fig.~\ref{fig:Paper_example_approximation_multiplicative}(c), respectively. In the table in Fig.~\ref{fig:Paper_example_approximation_multiplicative}, $\Delta_{OPT}$, $\Delta_{A1}$, $\Delta_{A2}$ and $\Delta_{MUL}$ denote respectively the loss under optimal routing attack, applying Approach 1, applying Approach 2, and applying Algorithm \ref{Alg:max_loss_approx_multiplicative} which has multiplicative guarantee. 

\begin{theorem}
\label{thm:multiplicative}
Algorithm \ref{Alg:max_loss_approx_multiplicative} outputs a solution with loss $\Delta_{MUL}$ that satisfies
$\frac{\Delta_{MUL}}{\Delta_{OPT}} \geq \frac{1}{\sqrt{|\mathcal{V}_A|}}$. 
\end{theorem}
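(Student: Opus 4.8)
The plan is to lower-bound each of the two greedy approaches separately in terms of structural quantities of the optimal attack, and then argue that their maximum is within a $1/\sqrt{|\mathcal{V}_A|}$ factor of $\Delta_{OPT}$. Let $\mathbf{x}_A^{OPT}$ be an optimal routing attack, which by Proposition \ref{prop:boundary_property_maxloss} may be taken to route all traffic at each ingress node to a single egress node. This partitions the adversarial ingress nodes into groups $\{\mathcal{T}_j\}_{d_j\in\mathcal{V}_D}$ according to which egress node they hit, and the total loss decomposes as $\Delta_{OPT} = \sum_{d_j} \bigl(\sum_{s_i\in\mathcal{T}_j}\lambda_i - \mu_j\bigr)^+$. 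Let $d_{j_0}$ be the egress node contributing the largest single term $\delta := \bigl(\sum_{s_i\in\mathcal{T}_{j_0}}\lambda_i - \mu_{j_0}\bigr)^+$, and let $k := |\mathcal{T}_{j_0}|$. Since there are at most $|\mathcal{V}_A|$ nonempty groups, $\delta \geq \Delta_{OPT}/|\mathcal{V}_A|$; this is the crude bound that a single approach already achieves.

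First I would analyze Approach 2. In its very first iteration, $UD = \mathcal{V}_A$, so the set $\mathcal{T}_{j_0}$ is a feasible candidate in the maximization \eqref{eqn:PSO} for node $d_{j_0}$, giving $PSO[d_{j_0}] \geq \delta/k$. Hence the node $d_{j^*}$ actually chosen satisfies $PSO[d_{j^*}]\geq \delta/k$, and routing the traffic of $\mathcal{S}_{d_{j^*}}^*$ onto $d_{j^*}$ produces loss at least $|\mathcal{S}_{d_{j^*}}^*|\cdot PSO[d_{j^*}] \geq \delta$ — but more importantly, it produces at least $\delta/k$ units of loss \emph{per ingress node consumed}. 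I would then argue inductively that every subsequent iteration of Approach 2 maintains this per-node efficiency: the subset of $\mathcal{T}_{j_0}$ still in $UD$ remains a valid candidate for $d_{j_0}$ (removing ingress nodes only shrinks the competing arrival sums, and the normalized objective for that residual set is still at least the per-node surplus of the original group, using that $\mu_{j_0}$ is subtracted once), so the greedy choice always attains per-node efficiency at least that of the best residual group. Summing over all iterations until $UD$ is exhausted, and noting that if $k$ is small the efficiency $\delta/k$ is large, Approach 2 yields $\Delta_{A2}\gtrsim |\mathcal{V}_A|\cdot(\delta/k) \geq |\mathcal{V}_A|\,\Delta_{OPT}/(|\mathcal{V}_A|\,k) = \Delta_{OPT}/k$ (roughly). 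The key point is the two-sided control: Approach 1, by choosing the egress node with the largest \emph{unnormalized} surplus, immediately captures a term $\geq \delta$, hence $\Delta_{A1}\geq\delta\geq\Delta_{OPT}/|\mathcal{V}_A|$; but when the heavy group $\mathcal{T}_{j_0}$ is large (i.e.\ $k$ close to $|\mathcal{V}_A|$), $\Delta_{OPT}$ cannot be much larger than a constant times $\delta$ anyway because few other groups remain — more carefully, $\Delta_{OPT}\leq \delta + (\text{loss from other groups})$ and the other groups involve at most $|\mathcal{V}_A|-k$ ingress nodes.

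Combining, I would set $\Delta_{MUL}=\max\{\Delta_{A1},\Delta_{A2}\}$ and optimize the tradeoff over the unknown $k\in\{1,\dots,|\mathcal{V}_A|\}$: Approach 1 guarantees something like $\Delta_{MUL}\geq \Delta_{OPT}\cdot \tfrac{k}{|\mathcal{V}_A|}$ (the heavy group carries a $\geq k/|\mathcal{V}_A|$ fraction of the surplus-weighted mass when it is the heaviest) while Approach 2 guarantees $\Delta_{MUL}\geq \Delta_{OPT}/k$, and $\max\{k/|\mathcal{V}_A|,\ 1/k\}$ is minimized at $k=\sqrt{|\mathcal{V}_A|}$, where it equals $1/\sqrt{|\mathcal{V}_A|}$. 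That balancing is exactly where the $\sqrt{\cdot}$ comes from.

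The main obstacle I anticipate is making the per-node-efficiency induction for Approach 2 fully rigorous: one must check that after several greedy iterations have stripped away ingress nodes, the residual portion of the optimal heavy group $\mathcal{T}_{j_0}$ is still an admissible candidate achieving per-node surplus at least $\delta/k$ against the \emph{current} $\mu_{j_0}$ (which is unchanged, since service rates are fixed and only the $UD$ set shrinks) — the subtlety is that the normalized surplus of a sub-multiset of $\mathcal{T}_{j_0}$ need not exceed $\delta/k$ unless one is careful about which ingress nodes were removed, so the clean statement is that the \emph{best} residual group over all $d_j$ has per-node efficiency non-increasing but the cumulative loss telescopes correctly. A secondary obstacle is pinning down the precise constant in the Approach 1 bound relating $\delta$ to $\Delta_{OPT}$ as a function of $k$; I would handle this by a direct counting argument over the at-most-$|\mathcal{V}_A|$ groups rather than anything clever, and defer the detailed bookkeeping to the appendix as the statement indicates.
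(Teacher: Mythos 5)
Your high-level skeleton matches the paper's: bound the two greedy approaches by complementary quantities and observe that the maximum of the two bounds is at least $1/\sqrt{|\mathcal{V}_A|}$ at the balancing point. However, you balance over the wrong parameter, and both of your per-approach bounds in terms of $k$ (the size of the heaviest optimal group) break down. For Approach 1, the claim $\Delta_{A1}\geq \tfrac{k}{|\mathcal{V}_A|}\Delta_{OPT}$ is false: take $|\mathcal{V}_A|=100$, one group of $k=10$ ingress nodes with total surplus $10$ at its egress node, and $90$ singleton groups each with surplus $9.9$ at their own egress nodes; the heaviest group carries about $1\%$ of $\Delta_{OPT}$, far below $k/|\mathcal{V}_A|=10\%$. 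What Approach 1 actually guarantees is only that it captures at least the largest \emph{single-egress-node} overload $\delta$ achievable, and $\delta$ relates to $\Delta_{OPT}$ through $L$, the number of egress nodes overloaded by the optimal attack ($\delta\geq \Delta_{OPT}/L$), not through $k$. For Approach 2, your proposed invariant that every iteration maintains per-ingress efficiency at least $\delta/k$ does not hold: once the heavy group's ingress nodes have been consumed by earlier choices, the residual of $\mathcal{T}_{j_0}$ can have normalized surplus far below $\delta/k$ (even negative), and later iterations can contribute essentially nothing, so $\Delta_{A2}\gtrsim|\mathcal{V}_A|\,\delta/k$ does not follow. You flag this as a "subtlety" to be fixed by bookkeeping, but it is the missing idea, not a technicality.

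The paper's proof balances on $L$ rather than $k$: Approach 1 gives $\Delta_{A1}\geq \Delta_{OPT}/L$ as above, and Approach 2 is analyzed by a charging argument rather than an efficiency induction. For an iteration that selects the set $\mathcal{S}_{j^*}$ and overloads $d_{j^*}$, one compares the loss at $d_{j^*}$ to the \emph{total optimal loss at the egress nodes $\mathcal{D}_{j^*}$ to which the optimal attack sends the traffic of $\mathcal{S}_{j^*}$}; by the PSO-optimality of $d_{j^*}$, the per-ingress surplus of $\mathcal{S}_{j^*}$ dominates the per-ingress surplus of each optimal group in $\mathcal{D}_{j^*}$, and since $\bigcup_{d_j\in\mathcal{D}_{j^*}}\mathcal{S}_j^{OPT}\subseteq\mathcal{V}_A$ and $L\leq|\mathcal{S}_{j^*}|$, the ratio is at least $|\mathcal{S}_{j^*}|/|\mathcal{V}_A|\geq L/|\mathcal{V}_A|$. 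Then $\max\{1/L,\,L/|\mathcal{V}_A|\}\geq 1/\sqrt{|\mathcal{V}_A|}$ completes the argument. To repair your proposal you would need to replace your $k$-parameterization with this $L$-parameterization (or prove correct substitutes for both of your claimed bounds), since as written neither the Approach-1 step nor the Approach-2 induction is valid.
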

\begin{figure}[!htbp]
\centering
\includegraphics[width=0.92\linewidth]{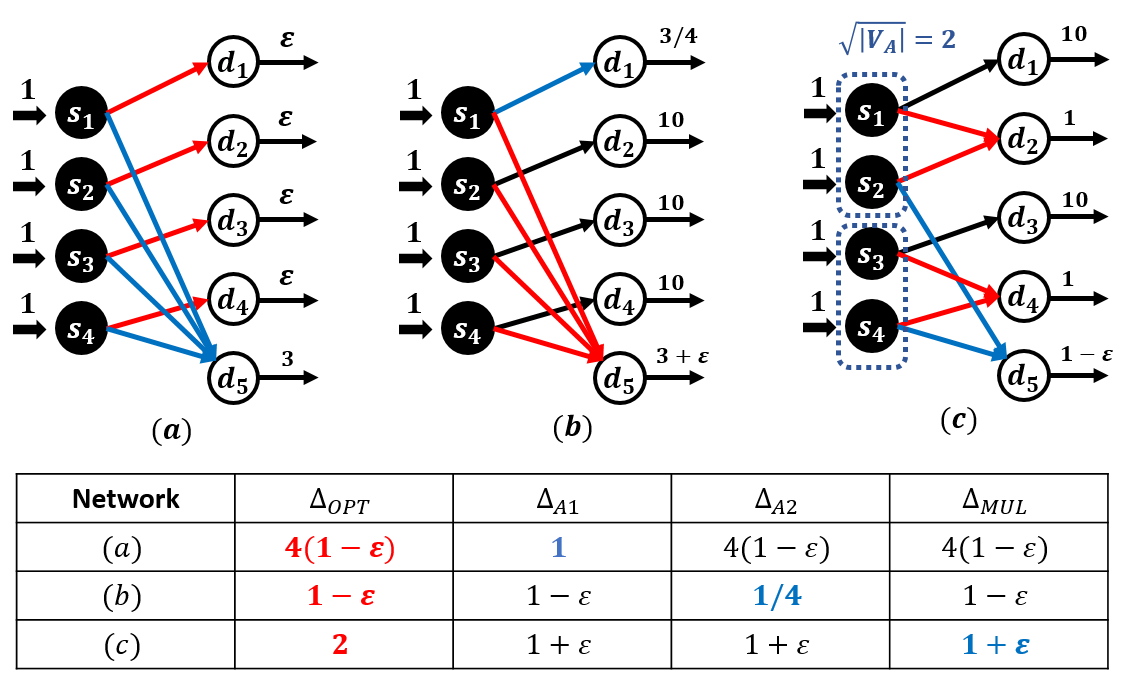}
\caption{\small Tightness validation of Theorem \ref{thm:multiplicative} under $|\mathcal{V}_A|=4$. Consider $\epsilon \rightarrow 0$, Approach 1 on network (a) and Approach 2 on network (b) leads to $1/|\mathcal{V}_A|=1/4$ approximation ratio; Algorithm \ref{Alg:max_loss_approx_multiplicative} gives worst-case approximation ratio $1/\sqrt{|\mathcal{V}_A|}=1/2$ on network (c).}
\label{fig:Paper_example_approximation_multiplicative}
\end{figure}


\subsubsection{Algorithm with Additive Guarantee} We further propose Algorithm \ref{Alg:max_loss_approx_additive} with additive performance guarantee.
The main idea is to iteratively solve the $\lambda^*$-minimization problem \eqref{eqn:min-max-lambda-no-loss-equi} until either all the adversarial nodes have determined their routing policies or the $\lambda^*$ exceeds the given arrival rate $\lambda$. The intuition behind Algorithm \ref{Alg:max_loss_approx_additive} is to greedily find the routing attack to cause the next possible saturated link with minimum arrival rate increment, which can be done by iteratively fixing the routing decided in previous iterations (line 7), and updating the capacity of previous saturated links to be infinite (line 8).

\begin{algorithm}[!htp]
\caption{Approximation Algorithm with Additive Guarantee}
\label{Alg:max_loss_approx_additive}
\textbf{Input:} Single-hop network $\mathcal{G}=(\mathcal{V}, \mathcal{E})$, $\mathcal{V}_A$, $\boldsymbol{\lambda}$, $\boldsymbol{\mu}$\;
Introduce meta-source normal node $s_0$ to $\mathcal{G}$ with dispatch ratio $x_{s_0, s_i} = \lambda_i / \sum_{k=1}^{|\mathcal{V}_S|}\lambda_k, \forall s_i$\;
Introduce meta-destination $d_0$ with $c_{d_j,d_0} = \mu_j$, $\forall d_j$\;
Initialize $\mathbf{x}_i = \boldsymbol{0}, \forall s_i \in \mathcal{V}_A$; $UD = \mathcal{V}_A$\;
\While{$\exists s_i\in UD$} {
Run Algorithm \ref{Alg:optimal-metric-1-general} on the updated $\mathcal{G}, \mathcal{V}_A, \mathbf{x}_N$ and get the first saturated link $(d_{j^*}, d_0)$\;
Let $x_{ij^*} = 1$,  $UD = UD \backslash \{s_i\}$, and fix the routing at $s_i$, for $\forall s_i \in UD, (s_i, d_j^*)\in \mathcal{E}$\;
Let $c_{d_{j^*}, d_0} = \infty$\;
}
Calculate $loss$ based on $sol \leftarrow \{\mathbf{x}_i\}_{s_i\in \mathcal{V}_A}$\;
\textbf{Return} $sol$, $loss$\;
\end{algorithm}

We demonstrate in Theorem \ref{thm:additive} that the gap between the output of Algorithm \ref{Alg:max_loss_approx_additive} and the maximum loss given by \eqref{eqn:max-loss} is bounded above by $\lambda / 4$ in the worst case for any single-hop network. The proof is deferred to the appendix. This means, for example, if the optimal routing attack causes $50\% \times \lambda$ traffic loss, then the output attack of Algorithm \ref{Alg:max_loss_approx_additive} at least causes $25\% \times \lambda$ traffic loss.

\begin{theorem}
\label{thm:additive}
Algorithm \ref{Alg:max_loss_approx_additive} outputs a solution with loss $\Delta_{ADD}$ that satisfies 
$\frac{\Delta_{OPT} - \Delta_{ADD}}{\lambda} \leq \frac{1}{4}$.
\end{theorem}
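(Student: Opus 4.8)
The plan is to exploit the structure of Algorithm \ref{Alg:max_loss_approx_additive} as a greedy ``bottleneck-saturation'' procedure and to compare it against a tight relaxation of the optimum. First I would invoke Proposition \ref{prop:boundary_property_maxloss} together with the single-hop reduction at the start of Section \ref{subsec:apprimaxtion_algorithm} (so that without loss of generality $\mathcal{V}_S=\mathcal{V}_A$ and both the optimal attack and the algorithm's output route all of each ingress node's traffic to a single egress), and, after introducing the meta-source/destination as in lines 2--3 of the algorithm, describe the execution as a sequence of rounds $t=1,\dots,T$: in round $t$ the algorithm saturates the egress $d_{j_t}$ of currently maximal congestion ratio and routes the entire traffic $\Lambda_t$ of the still-undecided ingress nodes adjacent to $d_{j_t}$ into it. Writing $r_t:=\Lambda_t/\mu_{j_t}$, I would record three facts: (i) $r_t\ge 1$ on every round that actually incurs loss; (ii) the ratios are non-increasing, $r_1\ge r_2\ge\cdots$, since removing ingress nodes from $UD$ can only decrease each egress's reachable traffic while the algorithm always takes the current maximum; and (iii) $\Delta_{ADD}=\sum_{t=1}^{T}(\Lambda_t-\mu_{j_t})=\sum_{t=1}^{T}\Lambda_t\bigl(1-1/r_t\bigr)$, because a saturated egress delivers exactly $\mu_{j_t}$ and all remaining traffic is delivered loss-free.

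The key analytic ingredient is an upper bound on the optimum in terms of the first bottleneck. Since $r_1$ is the global maximum congestion ratio, every egress $d_j$ satisfies $\sum_{i:(s_i,d_j)\in\mathcal{E}}\lambda_i\le r_1\mu_j$; hence under \emph{any} routing the load at $d_j$ is at most $r_1\mu_j$, so each egress delivers at least a $1/r_1$ fraction of whatever arrives at it, the total delivery is at least $\lambda/r_1$, and therefore $\Delta_{OPT}\le\lambda\,(1-1/r_1)$. I would also apply exactly this argument to the residual instance faced after rounds $1,\dots,t$ (ingress $\setminus\bigcup_{s\le t}\mathcal{S}_s$, egresses $\setminus\{d_{j_1},\dots,d_{j_t}\}$, arrival $\lambda-\sum_{s\le t}\Lambda_s$), whose maximum congestion ratio is precisely $r_{t+1}$, to obtain the analogous bound on the optimal loss achievable on that residual.

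With these bounds the finishing step is an optimization. Combining $\Delta_{OPT}\le\lambda(1-1/r_1)$ with the first-round lower bound $\Delta_{ADD}\ge\Lambda_1(1-1/r_1)$ already gives $\Delta_{OPT}-\Delta_{ADD}\le(\lambda-\Lambda_1)(1-1/r_1)$; I would then argue that when this quantity is large --- that is, when $\Lambda_1$ is small and $r_1$ is large --- the later rounds of the algorithm must themselves accumulate enough loss (via the residual bound and induction on the number of rounds, using that $d_{j_1}$ is adjacent only to the round-$1$ set $\mathcal{S}_1$) so that in every case the gap collapses to $\lambda\cdot\max_{u\in[0,1]}u(1-u)=\lambda/4$. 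Concretely I expect the inductive step to compare $\Delta_{OPT}$ on the full instance with $(\Lambda_1-\mu_{j_1})$ plus $\Delta_{OPT}$ on the residual instance, and to reduce to the scalar inequality $x(1-x)\le 1/4$ in the normalized loss $x=\Delta_{OPT}/\lambda$; the extremal regime to keep in mind is $\Delta_{OPT}=\tfrac12\lambda$, which should force $\Delta_{ADD}\ge\tfrac14\lambda$, matching the tightness stated right after the theorem.

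The step I expect to be the main obstacle is precisely this last balancing argument. A single bottleneck round need not capture enough traffic to dominate $\Delta_{OPT}$, so one must charge the optimum's loss against \emph{all} of the algorithm's saturated egresses simultaneously, carefully controlling the ``leftover'' traffic that the algorithm delivers loss-free but that the optimum might instead push into congested egresses. Making the recursion/charging tight --- rather than shedding a constant factor per round --- while still extracting the clean $\lambda/4$ (as opposed to, say, $\lambda/3$) is the delicate part; I would cross-check the final inequality against an extremal single-hop instance with $\Delta_{OPT}=\tfrac12\lambda$, $\Delta_{ADD}=\tfrac14\lambda$.
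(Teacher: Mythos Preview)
Your opening ingredients are sound --- in particular the bound $\Delta_{OPT}\le\lambda(1-1/r_1)$, obtained by noting that every egress receives at most $r_1$ times its capacity and therefore delivers at least a $1/r_1$-fraction of its load, is valid and elegant. The failure is in the inductive step you propose: the comparison ``$\Delta_{OPT}$ on the full instance with $(\Lambda_1-\mu_{j_1})$ plus $\Delta_{OPT}$ on the residual instance'' is simply false. Test it on the tight instance of Fig.~\ref{fig:Paper_example_approximation_additive}: $\lambda_1=\lambda_2=2$, $\mu_1=2$, $\mu_2=1-\epsilon$, $s_1$ adjacent only to $d_1$, $s_2$ adjacent to both. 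The algorithm saturates $d_2$ first, so $\mathcal{S}_1=\{s_2\}$, $\Lambda_1-\mu_{j_1}=1+\epsilon$, and the residual ($\{s_1\}$ versus $\{d_1\}$, arrival $2$, service $2$) has $\Delta_{OPT}^{\mathrm{res}}=0$. Your inequality would give $\Delta_{OPT}\le 1+\epsilon$, but the true optimum routes both ingress nodes to $d_1$ and achieves $\Delta_{OPT}=2$. The issue is directional: the fact that $d_{j_1}$ is adjacent only to $\mathcal{S}_1$ prevents the residual ingress from reaching $d_{j_1}$, but it does \emph{not} prevent $\mathcal{S}_1$'s traffic from being routed, under the optimum, onto egresses shared with the residual, where it creates loss that neither term captures. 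The weaker inequality $\Delta_{OPT}\le\Lambda_1+\Delta_{OPT}^{\mathrm{res}}$ does hold, but combining it with the induction hypothesis leaves a surplus of $\mu_{j_1}$ that you cannot absorb in general (try $\Lambda_1/\lambda$ small and $r_1\approx 2$).

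The scalar reduction you expect is also mislocated: the $u(1-u)\le 1/4$ in the paper is \emph{not} about $u=\Delta_{OPT}/\lambda$ but about the \emph{traffic split} between ingress groups. The paper's argument works bottom-up: it first settles the $2\times 2$ case by direct computation, showing the gap is at most $x_1x_2\le 1/4$ where $x_1,x_2$ are the fractions of $\lambda$ entering the two ingress nodes; it then reduces the general single-hop instance to a disjoint union of such two-sided blocks by (i) grouping ingress nodes according to the egress the optimum overloads and (ii) splitting each algorithm-chosen egress into per-ingress copies with proportionally scaled service rates, a transformation under which Algorithm~\ref{Alg:max_loss_approx_additive} behaves identically. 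The global $1/4$ is then a convex combination of the per-block bounds. Your congestion-ratio/induction framework does not reach this structure, and the step you flag as ``the main obstacle'' is in fact a genuine gap rather than a routine calculation.
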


We point out that the $1/4$ bound is tight, given by the example shown in Fig.~\ref{fig:Paper_example_approximation_additive}. The optimal routing at $s_2$ is to route traffic to $d_1$, causing loss of $2$, while the solution by Algorithm \ref{Alg:max_loss_approx_additive} will route traffic to $d_2$, causing loss of $1 + \epsilon$ where $\epsilon > 0$. Thus the gap $(2-(1 + \epsilon))/\lambda \rightarrow 1/\lambda = 1/4$ when $\epsilon \rightarrow 0$. However, the condition to achieve the worst-case gap $1/4$ is very strong, which requires the arrival rates to all the ingress nodes and the service rates of all the egress nodes to meet some ratio equality, where in Fig.~\ref{fig:Paper_example_approximation_additive} they are $x_{s_0,s_2} = \mu_{2}/\mu_{1}$ and $\mu_{2}/\mu_{1} \rightarrow 1/2$. 
We show in Section \ref{sec:evaluation} that empirically Algorithm \ref{Alg:max_loss_approx_additive} leads to the gap less than $5\% \times \lambda$ in more than $95\%$ tested cases. 

\begin{figure}[!htbp]
\centering
\includegraphics[width=0.85\linewidth]{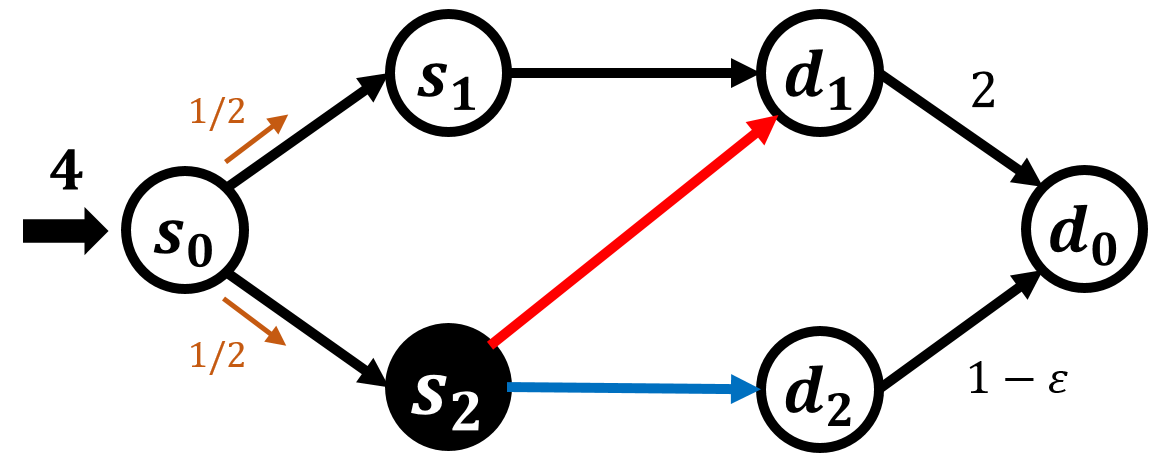}
\caption{Tightness validation of Theorem \ref{thm:additive} ($\lambda = 4$, $\mathcal{V}_A = \{s_2\}$): red route is optimal, blue route is the output of Algorithm \ref{Alg:max_loss_approx_additive}.}
\label{fig:Paper_example_approximation_additive}
\end{figure}


\subsection{Practical Extensions} 


\subsubsection{Constraints on Routing Attack} We follow the constraint form, $x_{ij} \in [x_{ij}^{\min}, x_{ij}^{\max}]$ for some $(s_i,d_j)\in \mathcal{E}$, in Section \ref{sec:no-loss-throughput-extension}. We show that both Algorithm \ref{Alg:max_loss_approx_multiplicative} and \ref{Alg:max_loss_approx_additive} can incorporate this constraint in single-hop networks. For the lower bound, we can require that each adversarial ingress node $s_i$ initially sends $\lambda_i \times x_{ij}^{\min}$ traffic through each link $(s_i, d_j)$. For the upper bound, Algorithm \ref{Alg:max_loss_approx_additive}, which is based on no-loss throughput minimization, can keep unchanged as discussed in Section \ref{sec:no-loss-throughput-extension}. {In Algorithm \ref{Alg:max_loss_approx_multiplicative}, an adversarial node $s_i$ will still be considered in future iterations if $s_i$ has not determined the egress nodes for all its traffic, unlike the unconstrained case where the routing of an adversarial node will be determined in one of the iterations.
For example, suppose $x_{ij}^{\min}=0$ and consider that $s_i$ routes $x_{s_id_j}^{\max}$ of all the traffic to $d_j$, then there remains $\lambda_i(1 - x_{s_id_j}^{\max})$ units of traffic rates at $s_i$ whose routing is undecided, and we will consider $s_i$ in future iterations to route the remaining traffic to egress nodes other than $d_j$ until there no longer exists traffic with routing undecided at $s_i$. With such adjustment, the factor $1/|\mathcal{S}_{d_j}|$ in \eqref{eqn:PSO} should be the reciprocal of the sum of the remaining proportions of traffic with undecided routing over $\mathcal{S}_{d_j}$.}

\subsubsection{Multi-hop Networks} We briefly discuss the extension of the algorithms to multi-hop networks. In Algorithm \ref{Alg:max_loss_approx_multiplicative}, unlike in single-hop networks we maximize the inflow to each node in a greedy manner, in multi-hop networks we greedily find the node $i$ that achieves maximum possible loss over its connected downstream links, under the adversarial routing at $\mathcal{V}_A \cap \mathcal{V}_i^{up}$ that maximizes the total traffic inflow to node $i$. This is similar to lines 2-5 in Algorithm \ref{Alg:optimal-metric-1-general} with the objective changed to traffic loss. 
For Algorithm \ref{Alg:max_loss_approx_additive}, we can naturally extend it to multi-hop networks based on Algorithm \ref{Alg:optimal-metric-1-general}: iteratively saturate the next link by minimizing the no-loss throughput $\lambda^*$, until $\lambda^* \geq \lambda$ or all nodes in $\mathcal{V}_A$ have determined their routing policies. 



\section{Optimal Node Selection}
\label{sec:optimal_node_selection}

In this section, we investigate the problem where the adversary needs to select a limited number of nodes to hijack before conducting routing attack. The discussion serves as an extension to the capability of routing attack, which informs the most critical nodes that should be protected from routing attack. 

\subsection{Problem Formulation and NP-Hardness}

We consider a set of candidate nodes that can be hijacked by the adversary, $\mathcal{V}_{cand}$. The adversary selects at most $K \leq |\mathcal{V}_{cand}|$ nodes to hijack, and the objective is to conduct routing attack on them to minimize no-loss throughput or maximize loss, i.e., solving \eqref{eqn:min-max-lambda-no-loss-equi} or \eqref{eqn:max-loss}. This setting maps to the practice: $\mathcal{V}_{cand}$ is a subset of nodes without secure settings so that the adversary can hijack, and the adversary can only attack a limited number of nodes due to the cost and the risk to be detected. The problem can be extended naturally to a weighted version where different nodes have different costs to be hijacked.

We show Theorem \ref{thm:np_hard_node_selection} that for both no-loss throughput minimization and loss maximization, the problem is NP-hard for general $K=O(N)$, while solvable in polynomial time for $K=O(1)$. The proof is deferred to the appendix.

\begin{theorem}
\label{thm:np_hard_node_selection}
Finding the optimal selection of $K$ nodes from $\mathcal{V}_{cand}$ to conduct routing attack for no-loss throughput minimization and loss maximization is NP-hard under $K=O(N)$, while there exists a polynomial-time brute-force algorithm under $K=O(1)$.
\end{theorem}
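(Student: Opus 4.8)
The plan is to treat the two parameter regimes separately. For $K=O(1)$ the polynomial-time procedure is brute force over selections: enumerate all $\binom{|\mathcal{V}_{cand}|}{K}=O(N^{K})$ subsets $\mathcal{V}_A\subseteq\mathcal{V}_{cand}$ with $|\mathcal{V}_A|\le K$, and for each evaluate the best attainable objective --- for no-loss throughput minimization by running Algorithm~\ref{Alg:optimal-metric-1-general}, which by Theorem~\ref{thm:polynomial_max_flow} returns the optimal routing attack in polynomial time, and for loss maximization by the vertex-enumeration brute force justified by Proposition~\ref{prop:boundary_property_maxloss}, whose cost $O(N^{|\mathcal{V}_A|}M)$ is polynomial since $|\mathcal{V}_A|\le K=O(1)$. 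Returning the subset with the best value yields the optimum, and the whole procedure is polynomial; for general $K$ this enumeration is exponential, which is exactly what the hardness half addresses.

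For the NP-hardness of the \emph{loss maximization} version I would observe that node selection subsumes the fixed-$\mathcal{V}_A$ problem shown NP-hard in Theorem~\ref{thm:np_hard}. Take the Set Cover instance and the graph built in the proof of Theorem~\ref{thm:np_hard}, let $\mathcal{V}_{cand}$ be the adversarial set $\{s_i\}_{i=1}^{m}$ used there, and set $K=|\mathcal{V}_{cand}|=m=O(N)$. Since enlarging $\mathcal{V}_A$ never decreases the achievable loss, the optimal selection is all of $\mathcal{V}_{cand}$, so the node-selection instance has the same optimum as the original loss-maximization instance; a polynomial algorithm for the former would decide Set Cover, a contradiction.

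For the NP-hardness of the \emph{no-loss throughput minimization} version a genuinely new reduction is required, since with a fixed $\mathcal{V}_A$ this objective is polynomial-time solvable (Algorithm~\ref{Alg:optimal-metric-1-general}) --- the hardness must come only from the combinatorial choice of which nodes to hijack. I would reduce from Set Cover (equivalently Maximum Coverage). Given elements $e_1,\dots,e_m$, sets $S_1,\dots,S_n$ and budget $K$, construct a DAG in which a meta-source injects one unit of traffic split evenly over $m$ \emph{element nodes} $a_1,\dots,a_m$; each $a_i$ forwards its traffic, under a fixed default policy, toward the \emph{set nodes} $\{w_j:e_i\in S_j\}$; each $w_j$, when hijacked, funnels all its traffic onto a single \emph{bottleneck link} $\ell^{*}$ of small capacity, and when left normal dumps it onto high-capacity ``garbage'' links; all links other than $\ell^{*}$ are given large capacity so that $\ell^{*}$ is the only link that can attain the maximum utilization. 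With $\mathcal{V}_{cand}=\{w_1,\dots,w_n\}$, the aim is that after attacking any $K$ set nodes the flow delivered to $\ell^{*}$ equals $\tfrac{1}{m}$ times the number of covered elements, so that (via Proposition~\ref{prop:equi-transform}, since the minimum no-loss throughput is the reciprocal of the maximum link utilization at unit arrival) the minimum no-loss throughput over all $K$-selections drops to a prescribed threshold exactly when the $K$ chosen sets cover all $m$ elements. A polynomial algorithm for the node-selection problem would then decide Set Cover.

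I expect the \textbf{main obstacle} to be making this last gadget faithful under the per-node routing model. A normal element node $a_i$ must split its traffic identically no matter which set nodes are eventually hijacked, so the naive construction makes the flow reaching $\ell^{*}$ a \emph{linear} function of the chosen family --- each $S_j$ contributing a fixed weight --- which is maximized by the trivial greedy ``take the $K$ heaviest sets'', and, worse, an element lying in two hijacked sets over-contributes and can mask an uncovered element, so ``$\ell^{*}$ saturated'' would fail to certify a full cover. The crux is to force a \emph{clipped}/submodular behavior in which every element contributes at most its fair share $1/m$ to $\ell^{*}$ however many of its covering sets are hijacked, while one hijacked covering set already releases that full share; realizing this seems to require routing each element's unit of traffic through a dedicated capacitated link before it merges onto $\ell^{*}$ --- e.g.\ through a layered construction, or through per-element copies of the set nodes chained so that the first hijacked copy grabs the stream --- all while respecting that a node distributes its entire inflow by a single split vector. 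Once the gadget is shown to deliver exactly $\tfrac{1}{m}|\{i:e_i\text{ covered}\}|$ onto $\ell^{*}$, the capacity/threshold bookkeeping and the equivalence with Set Cover are routine; a reduction from 3-SAT, using a per-variable choice node whose two out-links encode the two literals together with per-clause gadgets, would run into the same clipping requirement for clauses satisfied by several literals.
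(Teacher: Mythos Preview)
Your treatment of the $K=O(1)$ regime and of the loss-maximization hardness is fine and in fact slightly cleaner than the paper's: you invoke Algorithm~\ref{Alg:optimal-metric-1-general} (polynomial for any $\mathcal{V}_A$) rather than the Theorem~\ref{thm:boundary_property} brute force, and for loss maximization you simply observe that the node-selection problem with $K=|\mathcal{V}_{cand}|$ specializes to the fixed-$\mathcal{V}_A$ problem already shown NP-hard in Theorem~\ref{thm:np_hard}. Both are valid.

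The gap is in the no-loss-throughput hardness reduction. You correctly diagnose the obstacle---with the element nodes $a_i$ left normal, their split is frozen, so the flow reaching $\ell^*$ is linear in the chosen family and over-counts multiply-covered elements---but you do not actually overcome it. The ``dedicated capacitated link per element'' patch does not work as stated: if $a_i$ splits its $1/m$ uniformly over all $|\{j:e_i\in S_j\}|$ incident set nodes, then hijacking a single covering set delivers only a fraction of $a_i$'s share to the clipping link, so one hijacked cover does \emph{not} release the full $1/m$; and the ``chained per-element copies'' idea is left as a sketch with no concrete routing/capacity assignment, so nothing in the proposal establishes the required equality $f_{\ell^*}=\tfrac{1}{m}|\{i:e_i\text{ covered}\}|$.

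The paper sidesteps this entirely with a much simpler gadget: include \emph{both} the element nodes $\{s_i\}$ and the set nodes $\{d_j\}$ in $\mathcal{V}_{cand}$, give every candidate a default route that bypasses the bottleneck (directly to a terminal $T$), put the only finite capacity on a link $(d_0,T)$, and set the budget to $m+k$. Then to push \emph{any} traffic through $(d_0,T)$ one is forced to hijack all $m$ element nodes (otherwise some $s_i$ still dumps straight to $T$), after which each hijacked $s_i$ may route to \emph{any} adjacent set node of its choosing; all traffic reaches $d_0$ iff every $s_i$ can pick a hijacked $d_j$ with $e_i\in S_j$, i.e.\ iff the $k$ hijacked set nodes cover all elements. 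The point you missed is that putting the element nodes into the candidate set and letting the adversary choose their out-edge gives the needed ``clipping'' for free---no per-element capacity gadget is required.
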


Although polynomial-time under $K=O(1)$, the brute-force solution introduced in the proof of Theorem \ref{thm:np_hard_node_selection}
is impractical when $K$ is a large constant. Below we propose efficient heuristic algorithms which can minimize the no-loss throughput exactly, and achieve close-to-optimal traffic loss verified empirically in Section \ref{sec:simulation_optimal_node_selection}, when $\mathcal{V}_{cand}$ are in \emph{parallel} structure. Parallel $\mathcal{V}_{cand}$ means any node in $\mathcal{V}_{cand}$ is not upstream or downstream to another node in $\mathcal{V}_{cand}$. The single-hop network discussed in Section \ref{subsec:apprimaxtion_algorithm} is a special case where $\mathcal{V}_{cand}\subseteq \mathcal{V}_S$ are parallel to each other. 

\subsection{Algorithms Under Parallel $\mathcal{V}_{cand}$}


\subsubsection{No-Loss Throughput Minimization} We show that the optimal node selection problem is solvable in polynomial time with parallel $\mathcal{V}_{cand}$ for no-loss throughput minimization by Algorithm \ref{Alg:optimal_node_selection_parallel}. Denote the candidate nodes upstream to node $i$ by $\mathcal{V}_{cand}^{up,i}$ (assume $i\in \mathcal{V}_{cand}^{up,i}$ if $i\in \mathcal{V}_{cand}$).
The main idea is to traverse each link $(i,j)$ in the downstream graph to $\mathcal{V}_{cand}$. 
For each candidate node $v\in \mathcal{V}_{cand}^{up,i}$, the adversary can evaluate the maximum reduction of the required traffic arrival rate at the source node to saturate $(i,j)$ by choosing to attack $v$. The adversary then chooses $\min\{K, |\mathcal{V}_{cand}^{up,i}|\}$ nodes from $\mathcal{V}_{cand}^{up,i}$ non-decreasingly in terms of the maximum reduction. Denote the choice for $(i,j)\in \mathcal{E}$ as $\mathcal{V}_{A}^{(i,j)}$. We can traverse all links $(i,j)$ downstream to $\mathcal{V}_{cand}$ to build $\{\mathcal{V}_{A}^{(i,j)}\}_{(i,j)\in \mathcal{E}}$, and pick the $\mathcal{V}_A^{(i^*,j^*)}$ under which the no-loss throughput is minimized by solving the routing attack for \eqref{eqn:min-max-lambda-no-loss-equi}.
We prove that it gives the exact optimal node selection in Theorem \ref{thm:node_selection_parallel}, where the proof is deferred to the appendix.

\begin{algorithm}[!htp]
\caption{Optimal node selection for no-loss throughput minimization under parallel $\mathcal{V}_{cand}$}
\label{Alg:optimal_node_selection_parallel}
\textbf{Input:} $\mathcal{G}=(\mathcal{V}, \mathcal{E})$, 
parallel $\mathcal{V}_{cand}$, number of nodes to attack $K$,  default routing $\mathbf{X}$\; 
Build $\mathcal{V}_{cand}^{up,i}$ for each node $i \in \mathcal{V}_{cand}$ and downstream to $\mathcal{V}_{cand}$\;
\For{$(i,j)\in \mathcal{E}$ downstream to $\mathcal{V}_{cand}$}{
Calculate $\Delta_{v}^{(i,j)}, ~\forall v\in \mathcal{V}_{cand}^{up,i}$, the difference of minimum arrival rate that saturates a $(i,j)$ with and without attacking $v$, not attacking other candidate nodes\;
Sort non-increasingly $\{\Delta_{v}^{(i,j)}\}_{v\in \mathcal{V}_{cand}^{up,i}}$\;
Determine $\mathcal{V}_A^{(i,j)}$, the optimal node choice to saturate link $(i,j)$, by finding the top-$\min\{K,|\mathcal{V}_{cand}^{up,i}|\}$ candidate nodes in $\mathcal{V}_{cand}^{up,i}$ with maximum $\Delta_{v}^{(i,j)}$\;
}
Calculate the minimum no-loss throughput under routing attack over $\mathcal{V}_A^{{(i,j)}}$ for all $(i,j)\in \mathcal{E}$ downstream to $\mathcal{V}_{cand}$, and find the minimum one corresponding to link $(i^*,j^*)$\;
\textbf{Return} $\mathcal{V}_A^{(i^*,j^*)}$ as the selected nodes to attack\;
\end{algorithm}

\begin{theorem}
\label{thm:node_selection_parallel}
Algorithm \ref{Alg:optimal_node_selection_parallel} outputs the optimal node selection to conduct routing attack that leads to minimum no-loss throughput given parallel $\mathcal{V}_{cand}$.
\end{theorem}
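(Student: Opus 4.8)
The plan is to exploit the structure established by Theorem \ref{thm:polynomial_max_flow} (and its underpinning Proposition \ref{prop:equi-transform}): under any fixed choice of attacked nodes, the minimum no-loss throughput equals $(\max_{(i,j)} V[i,j])^{-1}$ where the maximum is over all candidate ``first-saturated'' links, and $V$ depends on the attack choice only through the maximum flow that the hijacked upstream nodes can push into the head node of the link. So the outer node-selection problem decomposes link by link: for each link $(i,j)$ downstream of $\mathcal{V}_{cand}$, I want to show that $\mathcal{V}_A^{(i,j)}$ computed in Algorithm \ref{Alg:optimal_node_selection_parallel} is the subset of at most $K$ candidate nodes that \emph{minimizes the arrival rate needed to saturate $(i,j)$}, and then argue that taking the global best link $(i^*,j^*)$ over these per-link optima yields the overall optimal selection.

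First I would fix a target link $(i,j)$ and analyze how much the saturation threshold for $(i,j)$ can be reduced by hijacking a set $\mathcal{A}\subseteq\mathcal{V}_{cand}^{up,i}$ with $|\mathcal{A}|\le K$. The key claim here is an \emph{additivity / independence} property enabled by the parallel assumption: since no candidate node is upstream or downstream of another, the traffic each candidate $v$ contributes toward node $i$ under default routing travels along edge-disjoint ``corridors'' with respect to the other candidates, so redirecting $v$'s flow to push more mass toward $(i,j)$ does not interfere with what another hijacked candidate $v'$ can do. Concretely I would show that the maximum flow into $i$ achievable by hijacking $\mathcal{A}$ equals the default flow into $i$ plus $\sum_{v\in\mathcal{A}}$ (the extra mass $v$ can reroute toward $i$), i.e. the marginal gains $\Delta_v^{(i,j)}$ are additive over $\mathcal{A}$. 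Given additivity, the minimizing set of size $\min\{K,|\mathcal{V}_{cand}^{up,i}|\}$ is exactly the greedy top-$\Delta$ choice — this is the standard ``pick the largest weights'' argument, and it justifies line 6 of the algorithm.

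Next I would assemble the global argument. For the optimal selection $\mathcal{V}_A^{OPT}$ with $|\mathcal{V}_A^{OPT}|\le K$, let $(i^\circ,j^\circ)$ be the first link saturated under the optimal routing attack on $\mathcal{V}_A^{OPT}$; then $\lambda^*_{OPT}$ equals the saturation threshold of $(i^\circ,j^\circ)$ under hijacking $\mathcal{V}_A^{OPT}\cap\mathcal{V}_{cand}^{up,i^\circ}$ (nodes not upstream of $i^\circ$ are irrelevant to this link, and by the parallel structure no candidate node is downstream of $i^\circ$ either, so only upstream candidates matter). By the per-link optimality just proved, $\mathcal{V}_A^{(i^\circ,j^\circ)}$ does at least as well on link $(i^\circ,j^\circ)$, hence achieves a no-loss throughput no larger than $\lambda^*_{OPT}$ restricted to that link, which is $\le\lambda^*_{OPT}$. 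Since Algorithm \ref{Alg:optimal_node_selection_parallel} then takes the minimum over \emph{all} links $(i,j)$ of the best achievable no-loss throughput, it can only do better, so $\lambda^*_{ALG}\le\lambda^*_{OPT}$; optimality of $\lambda^*_{OPT}$ forces equality, and $\mathcal{V}_A^{(i^*,j^*)}$ is an optimal selection. Correctness of the threshold computation itself (that $V[i,j]$ with $MF[\cdot]$ replaced by the hijack-augmented max flow indeed gives the right saturation rate) follows from Theorem \ref{thm:polynomial_max_flow}.

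The main obstacle I anticipate is making the additivity/independence claim rigorous: I need to argue carefully that when several parallel candidates are hijacked, the maximum flow each can route toward node $i$ can be realized \emph{simultaneously} without the rerouted streams colliding on shared downstream links in a way that caps the total below the sum. The parallel hypothesis (no candidate is up- or downstream of another) is exactly what should prevent such collisions on the candidates' own positions, but one still must check that the flows they inject into the common downstream region toward $i$ compose additively — likely via a max-flow/min-cut or flow-decomposition argument, or by observing that the relevant constraints (link capacities toward $i$) are already infinite in the formulation \eqref{eqn:min-max-lambda-no-loss-equi} used for the reduction, so the only binding constraints are the candidates' own conservation constraints, which are independent. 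I would also need to handle edge cases where $|\mathcal{V}_{cand}^{up,i}|<K$ (attack all of them) and where a candidate node coincides with $i$ itself.
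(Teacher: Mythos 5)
Your proposal follows essentially the same route as the paper's proof: decompose the selection problem per downstream link $(i,j)$, use the parallel structure of $\mathcal{V}_{cand}$ to evaluate each candidate's marginal reduction $\Delta_v^{(i,j)}$ of the saturation threshold independently, justify the greedy top-$K$ choice by additivity of these marginals, and conclude by taking the best link over all per-link optima. The additivity/independence step you flag as the main obstacle is exactly the point the paper itself asserts without further elaboration ("since all candidate nodes are parallel, we can separately evaluate the contribution"), so your treatment is, if anything, slightly more explicit about both the global assembly argument and where the rigor is needed.
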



\subsubsection{Loss Maximization} Unlike no-loss throughput, it is challenging to derive heuristics with performance guarantee even under parallel $\mathcal{V}_{cand}$ for loss maximization. The main bottleneck is that evaluating the performance of a specific node selection, i.e., finding the optimal routing attack to achieve maximum loss among a given set of the selected nodes, is NP-hard. 
We leave the theoretical challenge in future work, while we propose a heuristic in Algorithm \ref{Alg:optimal_node_selection_max_loss} over single-hop networks with $\mathcal{V}_{cand}\subseteq \mathcal{V}_S$ to explain the basic idea by a greedy approach over each egress node, where the extension to parallel $\mathcal{V}_{cand}$ is natural by a greedy approach over each link in the network. Algorithm \ref{Alg:optimal_node_selection_max_loss} iteratively picks nodes to attack until $K$ nodes have been chosen. In each iteration, for each egress node $d_j$, the algorithm evaluates the \emph{value} for each of its connected ingress node $s_i \in UD$, i.e. whose routing has not been decided, to route all its traffic to $d_j$. The value is defined as $v_{ij} = \lambda_i(1-x_{ij})$, the increment of inflow to $d_j$ after routing attack where $x_{ij}$ is the default dispatch ratio from $s_i$ to $d_j$. Then for each egress node $d_j$, the adversary attacks the nodes that are not chosen in previous iterations based on the non-increasing order of $\{v_{ij}\}_{s_i: (s_i,d_j)\in \mathcal{E}}$ 
until either the per-ingress contribution to the loss at $d_j$, defined as $PSC[d_j]$ in Algorithm \ref{Alg:optimal_node_selection_max_loss}, reaches the peak value denoted by $PSC^*[d_j]$\footnote{The process is similar to the \eqref{eqn:PSO} in Algorithm \ref{Alg:max_loss_approx_multiplicative}.}, or the algorithm has chosen $K$ nodes in total till this iteration. 
With all $PSC^*[d_j]$'s calculated and their corresponding chosen ingress nodes to attack, denoted by $\mathcal{S}_j$, 
the algorithm can then determine the new nodes to attack in this iteration to be $\mathcal{S}_{j^*}$ where $d_{j^*} = \arg\max_{d_j\in \mathcal{V}_D} PSC^*[d_j]$, and determine the routing policies of nodes in $\mathcal{S}_{j^*}$ to be dispatching all the traffic to $d_{j^*}$. The algorithm then fixes the routing of the selected ingress nodes and updates the remaining service rate of the $d_{j^*}$, and iteratively runs the above process. 

\begin{algorithm}[!htp]
\caption{Heuristic node selection for loss maximization in single-hop networks}
\label{Alg:optimal_node_selection_max_loss}
\textbf{Input:} Single-hop network $\mathcal{G}$, $\boldsymbol{\lambda}$, $\boldsymbol{\mu}$,  $\mathcal{V}_{cand}$, number of nodes to attack $K$, default routing $\mathbf{X}$\; 

\While{$|Selected| < K$}{
\For{$d_j \in \mathcal{V}_D$}{
Evaluate \emph{value} for each $s_i\in \mathcal{V}_S \backslash Selected$ as $v_{ij} = \lambda_i(1-x_{ij})$\;
Sort nodes in $\mathcal{V}_S \backslash Selected$ non-increasingly with respect to \emph{value}\;
Initialize $S_{tmp} = \emptyset$\;
Add each ingress node to $d_j$ in the above sorted order to  $S_{tmp}$ and update $PSC[d_j] := \frac{1}{|S_{tmp}|}\left(\sum_{s_i\in S_{tmp}} v_{ij} - \mu_j\right)$ before this value starts to decrease\;
Find the peak value $PSC^*[d_j]$ in line 7, and the selected ingress nodes as $\mathcal{S}_j$\;
}
Determine $j^* = \arg\max_{d_j \in \mathcal{V}_D} PSC^*[d_j]$, and add $\mathcal{S}_{j^*}$ to $Selected$. Updated the routing of nodes in $\mathcal{S}_{j^*}$ in $\mathbf{X}$, and the remaining service rate $\mu_{j^*}$\;
}
\textbf{Return} $Selected$ as the selected nodes, $\mathbf{X}$ as the updated routing of $Selected$\;
\end{algorithm}



\section{Performance Evaluation}
\label{sec:evaluation}


In this section, we evaluate the proposed routing attack and optimal node selection algorithms to showcase their near-optimal performance over no-loss throughput minimization and loss maximization over a wide range of network settings, including different network densities, sizes of adversarial node set, and default routing policies. 


\begin{table*}[!htbp]
    \centering
{\fontsize{10}{12}\selectfont
\begin{tabularx}{\textwidth}
{cc|*{3}{>{\centering\arraybackslash}X}|*{3}{>{\centering\arraybackslash}X}|*{3}{>{\centering\arraybackslash}X}|*{3}{>{\centering\arraybackslash}X}}
    \toprule
    \multicolumn{2}{c|}{} & \multicolumn{3}{c|}{Uniform} & \multicolumn{3}{c|}{Proportional} & \multicolumn{3}{c|}{ECMP} & \multicolumn{3}{c}{MaxFlow} \\
    Den & $|\mathcal{V}_A|$ &        Mean &         $90\%$ &         Max &         Mean &         $90\%$ &         Max &        Mean &         $90\%$ &         Max &        Mean &         $90\%$ &         Max \\
    \midrule
    0.4 & 10 &    1.00 &  1.00 &  1.09 &         1.01 &  1.00 &  1.18 &  1.00 &  1.00 &  1.15 &    1.01 &  1.00 &  2.00 \\
    0.8 & 10 &    1.01 &  1.07 &  1.24 &         1.02 &  1.10 &  1.27 &  1.01 &  1.08 &  1.15 &    1.04 &  1.17 &  1.50 \\
    0.4 & 20 &    1.00 &  1.00 &  1.03 &         1.00 &  1.00 &  1.04 &  1.00 &  1.00 &  1.01 &    1.00 &  1.00 &  1.25 \\
    0.8 & 20 &    1.00 &  1.00 &  1.03 &         1.00 &  1.00 &  1.04 &  1.00 &  1.00 &  1.02 &    1.01 &  1.07 &  1.33 \\
    \bottomrule
    \end{tabularx}}
    \caption{Approximation ratio statistics for Algorithm \ref{Alg:approximation-2} ($90\%$ means 90-percentile)}
    \label{tab:approximation-2}
\end{table*}

\begin{figure*}[!htbp]
\centering
\includegraphics[width=1.0\linewidth]{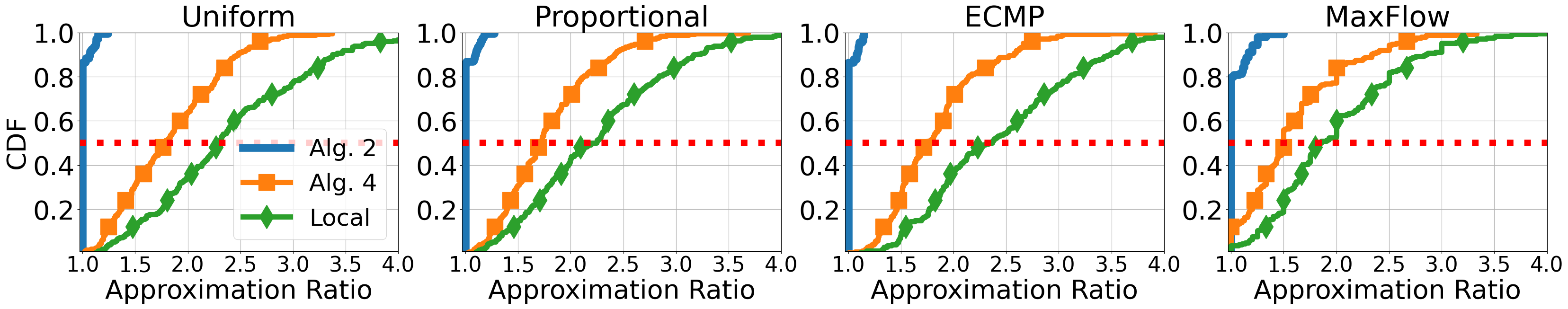}
\caption{CDFs of approximation ratio under density $0.8$ and $|\mathcal{V}_A|=20$}
\label{fig:N50_approximation_ratio_different_default_routing}
\end{figure*}

\subsection{No-Loss Throughput $\lambda^*$ Minimization}
\label{subsec:no_loss_throughput_simulation}

For no-loss throughput minimization, we have shown that Algorithm \ref{Alg:optimal-metric-1-general} yields the exact optimal solution in polynomial time. This exact algorithm serves as our baseline, and we  test the performance of 3 algorithms by measuring their gaps to the baseline: (i) 
the 2-approximation Algorithm \ref{Alg:approximation-2}, in order to validate its closeness to the baseline; (ii) the distributed attack heuristic Algorithm \ref{Alg:heuristic}, in order to evaluate the performance gap between distributed and centralized routing attack; (iii) a \emph{local heuristic} where each adversarial node routes all the traffic through the link with minimum capacity among all its connected downstream links, in order to demonstrate the performance enhancement that adversaries can achieve via Algorithm \ref{Alg:optimal-metric-1-general} compared with a naive attack.

We simulate multi-hop networks with size $|\mathcal{V}|=50$. Our evaluation has the following dimensions. (I) Network Density: We generate random network topologies\footnote{We ensure that there exists at least a path from source to destination.} given different link existence probabilities $p$, and present representative results under $p=0.4$ and $p=0.8$; (II) Number of Adversarial Nodes $|\mathcal{V}_A|$: We showcase results for $|\mathcal{V}_A|=10$ and $20$, where other values exhibit similar performance; (III) Default Routing Policy: We consider 4 routing policies at normal nodes, which have been applied in different scenarios: (i) Uniform routing (random packet spray): Each normal node $i$ dispatches an equal portion of traffic to each of its connected downstream nodes, oblivious to link capacity. (ii) Capacity Proportional routing (weighted packet spray): The dispatch ratio of traffic to a downstream connected node is proportional to the capacity of this link, i.e., $x_{ij}=c_{ij}/\sum_{k: (i,k)\in \mathcal{E}} c_{ik}$. (iii) Equal-Cost Multi-Path (ECMP) routing \cite{zhou2014wcmp}: Traffic at the source is dispatched through the paths with minimum cost. If there are $L>1$ min-cost paths, then $1/L$ of the traffic takes each of the paths. We only present the results where each link has the same cost, as the other costs we have tested share similar results. ECMP is widely used in load balancing and commonly applied in industrial data center and cloud networks \cite{singh2015jupiter}. (iv) MaxFlow routing \cite{altner2010maximum}: This default routing achieves maximum throughput when there is no attack, which is widely studied in theory and serves as a paradigm for network design \cite{tassiulas1990stability}. 
We simulate $10,000$ network instances for each combination of the above network settings: We construct $20$ different topologies, and under each topology, we consider $20$ different adversarial node sets $\mathcal{V}_A$, and further under each $\mathcal{V}_A$, we set $25$ different link capacity allocations. 

We evaluate the cumulative distribution function (CDF) of the approximation ratio $\lambda_{ALG}^* /\lambda_{OPT}^*$, where $\lambda_{ALG}^*$ is the no-loss throughput output by the tested algorithms and $\lambda_{OPT}^*$ is the optimal routing attack by Algorithm \ref{Alg:optimal-metric-1-general}. We present the approximation ratio statistics of the proposed Algorithm \ref{Alg:approximation-2} in Table \ref{tab:approximation-2}, including the mean, 90-percentile, and maximum approximation ratio under various network settings. For instance, the value $1.09$ in the first row signifies the maximum approximation ratio of Algorithm \ref{Alg:approximation-2} under density $0.4$, $|\mathcal{V}_A|=10$, and uniform default routing, over all $10,000$ tested instances under this setting. We have the following observations: (i) In any tested setting, Algorithm \ref{Alg:approximation-2} exhibits near-optimal performance in most instances. The mean approximation ratio is less than $1.05$, and over $90\%$ of instances have an approximation ratio of less than $1.20$ under all network settings. 
(ii) Algorithm \ref{Alg:approximation-2} performs worst on MaxFlow routing, with mean, p90, and maximum approximation ratios larger than those of the other three policies. Notably, under density $0.4$ and $|\mathcal{V}_A|=10$, an instance exists where the approximation ratio reaches $2$, the bound established in Theorem \ref{thm:approximation}. This outcome can be attributed to MaxFlow routing distributing traffic to optimally exploit network capacity, resulting in a considerable portion of traffic being dispersed through paths without adversarial nodes. (iii) The routing attack performs best on density $0.4$ and $|\mathcal{V}_A|=20$ and worst on density $0.8$ and $|\mathcal{V}_A|=10$ in general. This observation aligns with the intuition that higher density combined with fewer adversarial nodes allows network traffic to access more available paths without adversarial nodes, thus reducing the possibility of forming a node cut so that Algorithm \ref{Alg:approximation-2} is able to output the optimal attack as Algorithm \ref{Alg:optimal-metric-1-general}, which echoes Corollary \ref{coro:node_cut}. 

We proceed to compare the performance gaps to the baseline over the 3 algorithms:  Algorithm \ref{Alg:approximation-2}, Algorithm \ref{Alg:heuristic}, and the local heuristic. We illustrate the CDFs of the approximation ratios of the three algorithms under four default routing policies with density $0.8$ and $|\mathcal{V}_A|=20$ in Fig.~\ref{fig:N50_approximation_ratio_different_default_routing}.Algorithm \ref{Alg:approximation-2} outperforms Algorithm \ref{Alg:heuristic} and the local heuristic under all the given default routing policies. The CDF curves reveal that Algorithm \ref{Alg:approximation-2} identifies the optimal solution in over $80\%$ of the instances. 
Moreover, Algorithm \ref{Alg:heuristic} strikes a balance between attack performance and time efficiency, compared with the optimal attack performance of Algorithm \ref{Alg:optimal-metric-1-general} with higher time cost in solving LPs, and the local heuristics with lowest time complexity but without any performance guarantee. We can observe that Algorithm \ref{Alg:heuristic} leads to approximation ratio below $2.0$ in more than $50\%$ of test cases under all these four default routing policies.

\subsection{Loss Maximization}
\label{sec:simulation-loss-maximization}


We evaluate the performance of loss maximization under Algorithm \ref{Alg:max_loss_approx_multiplicative} and \ref{Alg:max_loss_approx_additive} in single-hop networks.  
We simulate various settings in $8\times 8$ and $16\times 16$ single-hop networks, considering all the ingress nodes as adversarial nodes, which does not compromise generality as explained in Section \ref{subsec:apprimaxtion_algorithm}. We compare our proposed algorithms with two heuristics: (i) \textbf{Min}$\mu$: Each adversarial node routes all the traffic to its connected egress node with the minimum service rate. (ii) \textbf{Rand}: Each adversarial node randomly selects a connected egress node and routes all the traffic to it.


We evaluate algorithms across various network settings in three dimensions: (i) Network Density: as in Section \ref{subsec:no_loss_throughput_simulation}. (ii) $(\mu,\lambda)$-ratio, which represents $\sum_{j=1}^{N_D} \mu_j / \sum_{i=1}^{N_S} \lambda_i$, the ratio between the total service rates and traffic arrival rates. A higher ratio implies lighter traffic loads in the network, thus more challenging for applying routing attacks to cause overload. (iii) Uniformity of service rates $\boldsymbol{\mu}$ among all the egress nodes. We consider two scenarios: \emph{heterogeneous} service rates, where the service rates are randomly generated given the $(\mu,\lambda)$-ratio, and \emph{homogeneous} service rates, where the maximum difference of service rates between any pair of egress nodes is within $10\%$, i.e., $\max_{j_1\neq j_2} |\mu_{j_1}-\mu_{j_2}| / \max\{\mu_{j_1}, \mu_{j_2}\} \leq 10\%$, the idea of which is widely adopted in real-world data center networks to avoid speed mismatch \cite{singh2015jupiter,zhang2021gemini}. For each network setting, we evaluate 30 random  topologies between $\mathcal{V}_S$ and $\mathcal{V}_D$, and 50 different values of $\boldsymbol{\lambda}$ and $\boldsymbol{\mu}$ per topology, subject to the given constraints of $(\mu,\lambda)$-ratio and uniformity of $\boldsymbol{\mu}$.


We first present results on $8\times 8$ networks. We validate the proved performance guarantees of Algorithm \ref{Alg:max_loss_approx_multiplicative} and \ref{Alg:max_loss_approx_additive}   , since the brute force mechanism in Proposition \ref{prop:boundary_property_maxloss} is not prohibitive to simulate under $|\mathcal{V}_A|=8$. We present the results in Table \ref{tab:approximation-max-loss}.
For Algorithm \ref{Alg:max_loss_approx_multiplicative}, we assess the approximation ratio $\Delta_{MUL}/\Delta_{OPT}\in[0,1]$, which achieves optimum in more than $90\%$ of the tested instances, and the worst approximation ratio is above $2/3$ under all tested settings, much surpassing the bound $1/\sqrt{|\mathcal{V}_A|}=1/\sqrt{8}=0.35$ in Theorem \ref{thm:multiplicative}. For Algorithm \ref{Alg:max_loss_approx_additive}, we evaluate $(\Delta_{OPT} - \Delta_{ADD}) / \lambda$, which is less than $0.05$ under heterogeneous service rates and close to $0$ under homogeneous service rates in $90\%$ of the instances, with the largest gap being $0.18$, below the bound of $1/4$ in Theorem \ref{thm:additive}.  These results demonstrate the near-optimal performance in most of the tested cases by both Algorithm \ref{Alg:max_loss_approx_multiplicative} and \ref{Alg:max_loss_approx_additive}.


We then compare Algorithm \ref{Alg:max_loss_approx_multiplicative} and \ref{Alg:max_loss_approx_additive} with the heuristic algorithms Min$\mu$ and Rand in Fig.~\ref{fig:Paper_approximation_ratio_density0.3_scale2.0_formal}. Here we only present the CDFs of approximation ratios $\Delta_{ALG}/\Delta_{OPT}$ given density $0.3$ and a $(\mu,\lambda)$-ratio of $2$, where the ranking of the tested algorithms remains the same under the metric of performance gap $(\Delta_{ALG} - \Delta_{OPT})/\lambda$. We have the following observations: (i) Algorithm \ref{Alg:max_loss_approx_multiplicative} and \ref{Alg:max_loss_approx_additive} significantly outperform the other heuristics, achieving approximation ratios close to $1$ under both heterogeneous and homogeneous $\boldsymbol{\mu}$ in most instances. (ii) The advantage of Algorithm \ref{Alg:max_loss_approx_multiplicative} and \ref{Alg:max_loss_approx_additive} over {Min}$\mu$ diminishes under heterogeneous $\boldsymbol{\mu}$ compared to homogeneous $\boldsymbol{\mu}$. This is because under heterogeneous service rates, there are instances where an egress node has very low service rates but is connected to a large number of ingress nodes. {Min}$\mu$ is well-suited to these instances by overloading this egress node, thus closely approximating the optimal attack.

\begin{table}[!htbp]
\begin{adjustbox}{width=\columnwidth}
\begin{tabular}{c@{\hspace{0.5em}}c|cc|cc!{\vrule width 1.5pt}cc|cc}
\toprule
{} & {} & \multicolumn{2}{c|}{Alg. \ref{Alg:max_loss_approx_multiplicative} (Hetero)} & \multicolumn{2}{c!{\vrule width 1.5pt}}{Alg. \ref{Alg:max_loss_approx_multiplicative} (Homo)} & \multicolumn{2}{c|}{Alg. \ref{Alg:max_loss_approx_additive} (Hetero)} & \multicolumn{2}{c}{Alg. \ref{Alg:max_loss_approx_additive} (Homo)} \\
{Den} & {$\mu/\lambda$} &   {$10\%$} &   Min &   $10\%$ &   Min &    $90\%$ &   Max &   $90\%$ &   Max \\
\midrule
0.3 & 2.0 &   1.00 &  0.69 &  1.00 &  0.75 &   0.04 &  0.17 &  0.00 &  0.17 \\
0.5 & 2.0 &   1.00 &  0.68 &  1.00 &  0.74 &   0.04 &  0.14 &  0.00 &  0.14 \\
0.3 & 3.0 &   1.00 &  0.73 &  1.00 &  0.69 &   0.01 &  0.13 &  0.00 &  0.06 \\
0.5 & 3.0 &   1.00 &  0.76 &  1.00 &  0.71 &   0.04 &  0.18 &  0.00 &  0.07 \\
\bottomrule
\end{tabular}
\end{adjustbox}
\caption{Statistics of performance guarantee for \eqref{eqn:max-loss} in $8\times 8$ networks under different network settings ($\mu/\lambda$ means $(\mu,\lambda)$-ratio; $10\%$, $90\%$ mean 10-percentile, 90-percentile)}
\label{tab:approximation-max-loss}
\end{table}

\begin{figure}[!htbp]
\centering
\includegraphics[width=1.0\linewidth]{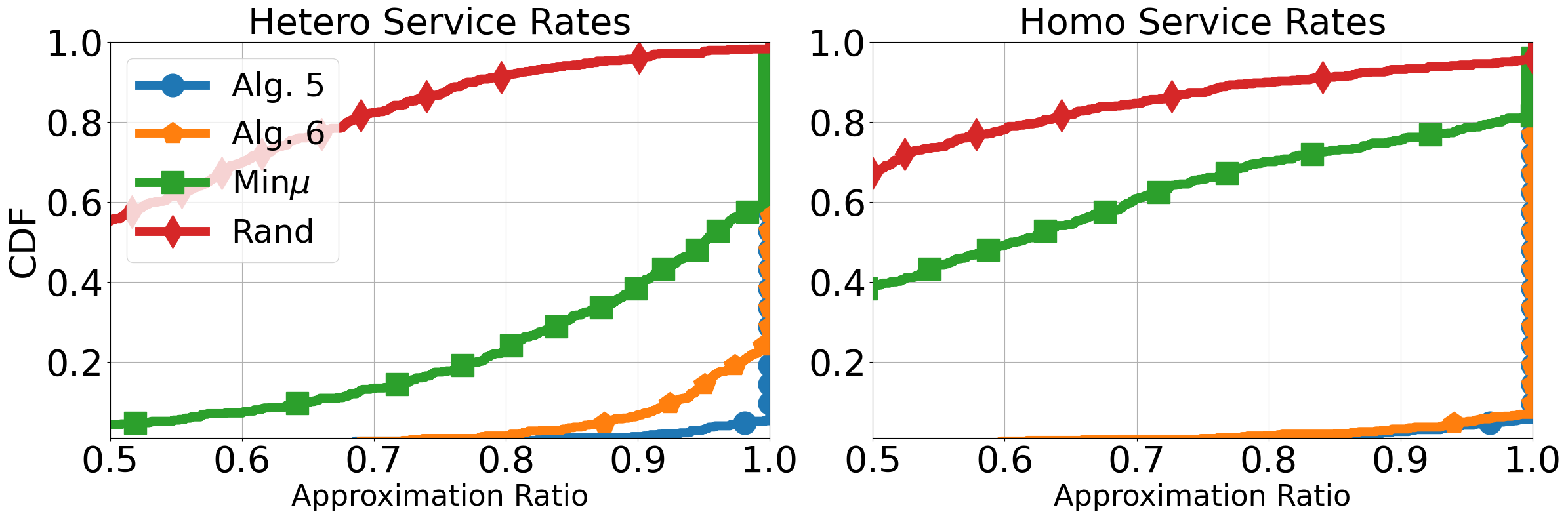}
\caption{CDFs of approximation ratio under density $0.3$ and $\mu/\lambda=2$ in $8\times 8$ networks}
\label{fig:Paper_approximation_ratio_density0.3_scale2.0_formal}
\end{figure}

We further evaluate over larger $16\times 16$ networks. The brute-force approach becomes prohibitive to simulate under $|\mathcal{V}_A|=16$. Therefore we measure the alternative metric \emph{loss ratio}, which is the ratio between the loss and total traffic arrival rate. We present the mean loss ratio results among all the tested instances under different settings in Table \ref{tab:approximation-max-loss-16} with the following observations. (i) {Min}$\mu$ achieves similar performance to our Algorithm \ref{Alg:max_loss_approx_multiplicative} and \ref{Alg:max_loss_approx_additive} under heterogeneous service rates, while far inferior under homogeneous service rates, matching the results in Fig.~\ref{fig:Paper_approximation_ratio_density0.3_scale2.0_formal}. (ii) Under a $(\mu, \lambda)$-ratio of $8$ with homogeneous $\boldsymbol{\mu}$, with high probability there is no way to induce overload by routing attack. However, our proposed Algorithm \ref{Alg:max_loss_approx_multiplicative} and \ref{Alg:max_loss_approx_additive} can still cause overload whenever such possibility exists, while Min$\mu$ and Rand are prone to miss. We further visualize the CDFs of loss ratio under density $0.25$ and a $(\mu,\lambda)$-ratio of $4$ in Fig.~\ref{fig:Paper_loss_ratio_m16_n16_density0.25_scale4.0_formal}, which echoes the results in Table \ref{tab:approximation-max-loss-16}. The evaluation reveals that the proposed Algorithm \ref{Alg:max_loss_approx_multiplicative} and \ref{Alg:max_loss_approx_additive} can effectively overload the networks with near-optimal performance.

\begin{table}[!htbp]
\centering
\begin{adjustbox}{width=0.95\columnwidth}
{\fontsize{10}{12}\selectfont
\begin{tabular}{c@{\hspace{0.5em}}ccccc}
\toprule
     &      & $\mu/\lambda=4$ & $\mu/\lambda=4$ & $\mu/\lambda=8$ & $\mu/\lambda=8$ \\
     &      & Den$=0.5$ & Den$=0.25$ & Den$=0.5$ & Den$=0.25$ \\
\midrule
Hetero & Alg.~\ref{Alg:max_loss_approx_multiplicative} &           0.79 &            0.53 &           0.67 &            0.37 \\
Hetero     & Alg.~\ref{Alg:max_loss_approx_additive} &           0.78 &            0.53 &           0.66 &            0.37 \\
Hetero     & Min$\mu$ &           0.76 &            0.49 &           0.64 &            0.34 \\
Hetero     & Rand &           0.14 &            0.14 &           0.07 &            0.07 \\
    \hline
Homo & Alg.~\ref{Alg:max_loss_approx_multiplicative} &           0.53 &            0.29 &           0.25 &            0.02 \\
Homo     & Alg.~\ref{Alg:max_loss_approx_additive} &           0.53 &            0.28 &           0.25 &            0.02 \\
Homo     & Min$\mu$ &           0.34 &            0.11 &           0.05 &            0.00 \\
Homo     & Rand &           0.01 &            0.01 &           0.00 &            0.00 \\
\bottomrule
\end{tabular}
}
\end{adjustbox}
\caption{Mean loss ratio in $16\times 16$ networks}
\label{tab:approximation-max-loss-16}
\end{table}

\begin{figure}[!htbp]
\centering
\includegraphics[width=1.0\linewidth]{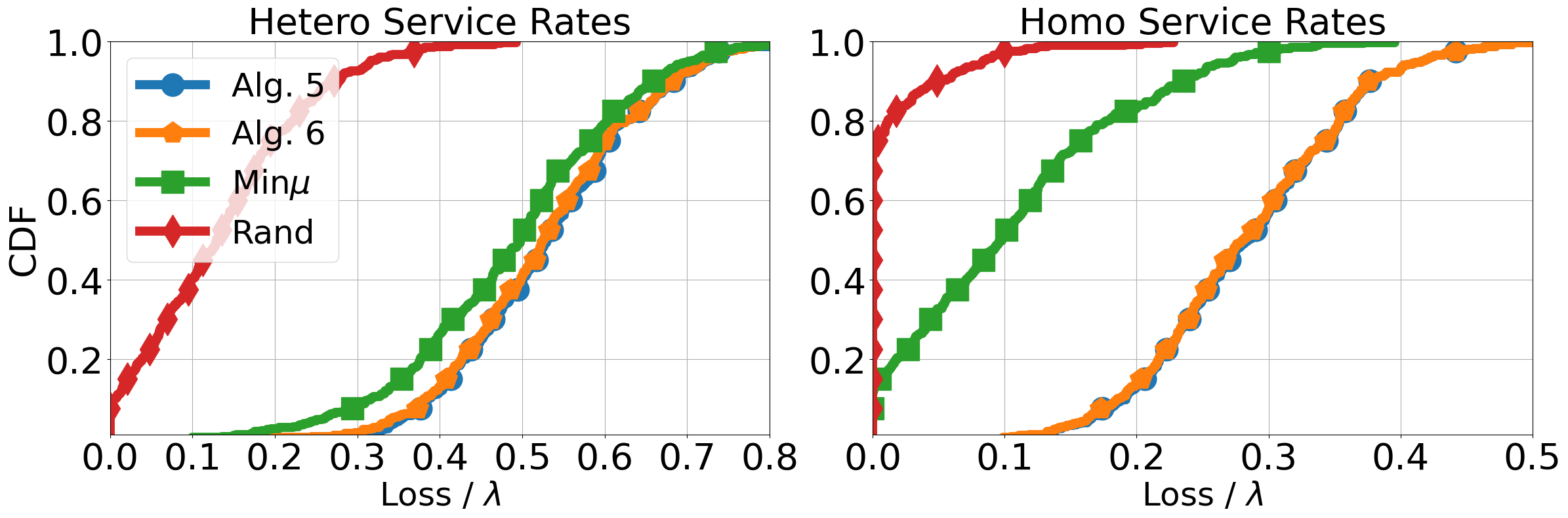}
\caption{CDFs of loss ratio under density $0.25$ and $\mu/\lambda=4$ in $16\times 16$ networks}
\label{fig:Paper_loss_ratio_m16_n16_density0.25_scale4.0_formal}
\end{figure}

In summary, the theoretical optimality of Algorithm \ref{Alg:optimal-metric-1-general} and the empirical near-optimality of Algorithm \ref{Alg:max_loss_approx_multiplicative} and \ref{Alg:max_loss_approx_additive} demonstrate that these proposed routing attack algorithms can accurately approximate the highest overload that a routing attack can induce given an arbitrary set of adversarial nodes.


\subsection{Optimal Node Selection}
\label{sec:simulation_optimal_node_selection}

Algorithm \ref{Alg:optimal_node_selection_parallel} is proved optimal for no-loss throughput minimization. Therefore, here we solely evaluate the performance of the heuristic Algorithm \ref{Alg:optimal_node_selection_max_loss} for loss maximization over different network settings. We consider $400$ randomly generated $12\times 12$ single-hop networks, with the link existence probability being $0.3$ and $(\mu, \lambda)$-ratio being $2$. The candidate node set $\mathcal{V}_{rand}$ includes all $12$ ingress nodes. We set to choose at most $K = 4$ nodes to attack. We consider heterogeneous and homogeneous service rates at the egress nodes as in Section \ref{sec:simulation-loss-maximization}. The total loss depends on the performance of both the node selection, and the routing attack algorithm based on the selected nodes to hijack. We evaluate the following 3 methods: (i) Optimal node selection $+$ Optimal routing attack, where the node selection is to brute-force all $\binom{12}{4}$ choices, and the optimal routing attack is done based on Proposition \ref{prop:boundary_property_maxloss}; (ii) Algorithm \ref{Alg:optimal_node_selection_max_loss} for node selection $+$ Optimal routing attack based on Proposition \ref{prop:boundary_property_maxloss}; (iii) Both node selection and routing attack output by Algorithm \ref{Alg:optimal_node_selection_max_loss}. We calculate the approximation ratio of the total loss under method (ii) and (iii) over the maximum total loss under method (i), and present the CDFs in Fig.~\ref{fig:Paper_m8_n8_k3_density0.3_scale2.0_formal}. Results demonstrate that with Algorithm \ref{Alg:optimal_node_selection_max_loss} for node selection, the adversary can achieve maximum loss in almost $80\%$ of the test cases, and at least $80\%$ of the maximum loss in more than $95\%$ of the test cases, under the optimal routing attack, which demonstrates the high performance of Algorithm \ref{Alg:optimal_node_selection_max_loss} in node selection in most of the network instances. Furthermore, directly applying the output routing attack solution of Algorithm \ref{Alg:optimal_node_selection_max_loss}, which is of polynomial time complexity for general $\mathcal{V}_{rand}$ and $K$, has performance close to applying the optimal routing attack: the adversary can achieve maximum loss in almost $70\%$ of the test cases, and at least $80\%$ of the maximum loss in more than $90\%$ of the test cases. The result demonstrates the good performance of Algorithm \ref{Alg:optimal_node_selection_max_loss} under most of the network settings, showing the high capability of routing attack to maximize network loss, and its usage for network providers to identify the critical nodes to be protected from the routing attack to avoid the risk of high overload. 

\begin{figure}[!htbp]
\centering
\includegraphics[width=1.0\linewidth]{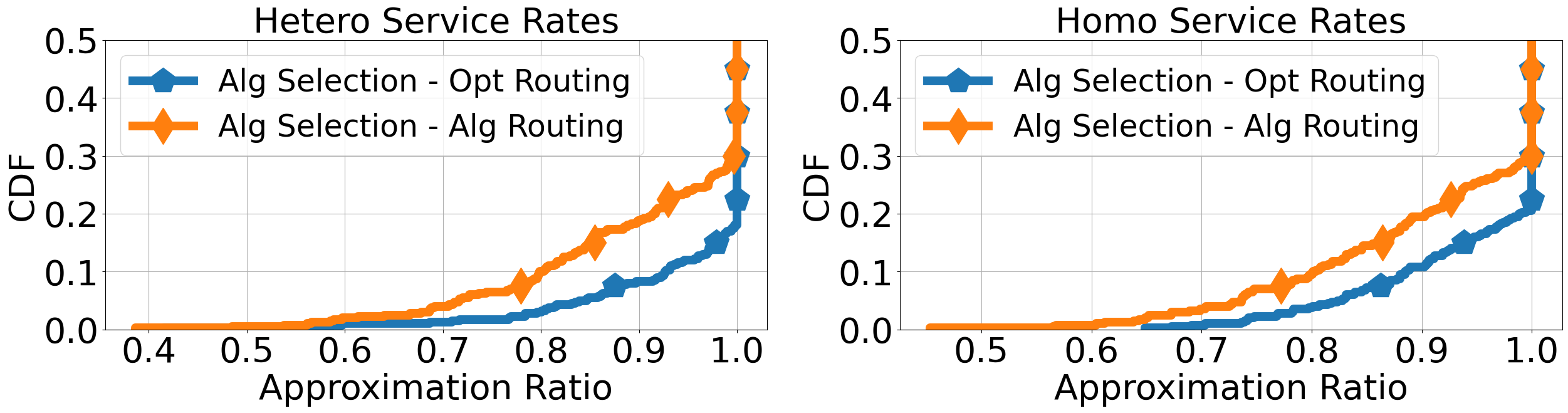}
\caption{CDFs of approximation ratio of network loss under density $0.3$ and a $(\mu, \lambda)$-ratio of $2$ in $16\times 16$ networks}
\label{fig:Paper_m8_n8_k3_density0.3_scale2.0_formal}
\end{figure}

\section{Conclusion}


In this paper, we quantify the threat of routing attacks for causing network overload.
We investigate the optimal routing attacks for {no-loss throughput minimization} and {loss maximization}.
We demonstrate that the no-loss throughput can be minimized in polynomial time in general multi-hop networks. We further develop a 2-approximation algorithm by only leveraging the downstream information of the adversarial nodes. We establish that loss maximization is NP-hard and propose two approximation algorithms with guaranteed performance in single-hop networks. Moreover, we address the adversary's optimal selection of nodes to conduct routing attacks and propose heuristic algorithms for this NP-hard problem. Our performance evaluation showcases the near-optimal performance of the proposed algorithms across a wide range of network settings. Future directions include deriving the performance guarantee of loss maximization in multi-hop networks, investigating the case where normal nodes can adjust their routing in response to routing attacks, 
and designing network control algorithms under routing attacks.


\bibliography{NodeInterdiction}

\begin{thebibliography}{10}
\providecommand{\url}[1]{#1}
\csname url@samestyle\endcsname
\providecommand{\newblock}{\relax}
\providecommand{\bibinfo}[2]{#2}
\providecommand{\BIBentrySTDinterwordspacing}{\spaceskip=0pt\relax}
\providecommand{\BIBentryALTinterwordstretchfactor}{4}
\providecommand{\BIBentryALTinterwordspacing}{\spaceskip=\fontdimen2\font plus
\BIBentryALTinterwordstretchfactor\fontdimen3\font minus
  \fontdimen4\font\relax}
\providecommand{\BIBforeignlanguage}[2]{{%
\expandafter\ifx\csname l@#1\endcsname\relax
\typeout{** WARNING: IEEEtran.bst: No hyphenation pattern has been}%
\typeout{** loaded for the language `#1'. Using the pattern for}%
\typeout{** the default language instead.}%
\else
\language=\csname l@#1\endcsname
\fi
#2}}
\providecommand{\BIBdecl}{\relax}
\BIBdecl

\bibitem{sermpezis2018survey}
P.~Sermpezis, V.~Kotronis, A.~Dainotti, and X.~Dimitropoulos, ``A survey among
  network operators on bgp prefix hijacking,'' \emph{ACM SIGCOMM Computer
  Communication Review}, vol.~48, no.~1, pp. 64--69, 2018.

\bibitem{sun2021securing}
Y.~Sun, M.~Apostolaki, H.~Birge-Lee, L.~Vanbever, J.~Rexford, M.~Chiang, and
  P.~Mittal, ``Securing internet applications from routing attacks,''
  \emph{Communications of the ACM}, vol.~64, no.~6, pp. 86--96, 2021.

\bibitem{mansfield2016ddos}
S.~Mansfield-Devine, ``Ddos goes mainstream: how headline-grabbing attacks
  could make this threat an organisation's biggest nightmare,'' \emph{Network
  Security}, vol. 2016, no.~11, pp. 7--13, 2016.

\bibitem{Securing2021}
\BIBentryALTinterwordspacing
Securing. (2021) Securing internet applications from routing attacks. [Online].
  Available:
  \url{https://cacm.acm.org/magazines/2021/6/252822-securing-internet-applications-from-routing-attacks/fulltext?mobile=false#FNA}
\BIBentrySTDinterwordspacing

\bibitem{al2016bgp}
B.~Al-Musawi, P.~Branch, and G.~Armitage, ``Bgp anomaly detection techniques: A
  survey,'' \emph{IEEE Communications Surveys \& Tutorials}, vol.~19, no.~1,
  pp. 377--396, 2016.

\bibitem{schlamp2016heap}
J.~Schlamp, R.~Holz, Q.~Jacquemart, G.~Carle, and E.~W. Biersack, ``Heap:
  reliable assessment of bgp hijacking attacks,'' \emph{IEEE Journal on
  Selected Areas in Communications}, vol.~34, no.~6, pp. 1849--1861, 2016.

\bibitem{cho2019bgp}
S.~Cho, R.~Fontugne, K.~Cho, A.~Dainotti, and P.~Gill, ``Bgp hijacking
  classification,'' in \emph{2019 Network Traffic Measurement and Analysis
  Conference (TMA)}.\hskip 1em plus 0.5em minus 0.4em\relax IEEE, 2019, pp.
  25--32.

\bibitem{sert2015impacts}
S.~A. Sert, A.~Yaz{\i}c{\i}, and A.~Cosar, ``Impacts of routing attacks on
  surveillance wireless sensor networks,'' in \emph{2015 International Wireless
  Communications and Mobile Computing Conference (IWCMC)}.\hskip 1em plus 0.5em
  minus 0.4em\relax IEEE, 2015, pp. 910--915.

\bibitem{song2017novel}
Y.~Song, S.~Gao, A.~Hu, and B.~Xiao, ``Novel attacks in ospf networks to poison
  routing table,'' in \emph{2017 IEEE International Conference on
  Communications (ICC)}.\hskip 1em plus 0.5em minus 0.4em\relax IEEE, 2017, pp.
  1--6.

\bibitem{nakibly2012persistent}
G.~Nakibly, A.~Kirshon, D.~Gonikman, and D.~Boneh, ``Persistent ospf attacks.''
  in \emph{NDSS}, 2012.

\bibitem{bang2022assessment}
A.~O. Bang, U.~P. Rao, P.~Kaliyar, and M.~Conti, ``Assessment of routing
  attacks and mitigation techniques with rpl control messages: A survey,''
  \emph{ACM Computing Surveys (CSUR)}, vol.~55, no.~2, pp. 1--36, 2022.

\bibitem{feldmann2016netco}
A.~Feldmann, P.~Heyder, M.~Kreutzer, S.~Schmid, J.-P. Seifert, H.~Shulman,
  K.~Thimmaraju, M.~Waidner, and J.~Sieberg, ``Netco: Reliable routing with
  unreliable routers,'' in \emph{2016 46th Annual IEEE/IFIP International
  Conference on Dependable Systems and Networks Workshop (DSN-W)}.\hskip 1em
  plus 0.5em minus 0.4em\relax IEEE, 2016, pp. 128--135.

\bibitem{wang2012routing}
C.~Wang, B.~Zhao, W.~Yu, C.~Wu, and Z.~Gong, ``Routing algorithm based on nash
  equilibrium against malicious attacks for dtn congestion control,'' in
  \emph{International Conference on Availability, Reliability, and
  Security}.\hskip 1em plus 0.5em minus 0.4em\relax Springer, 2012, pp.
  488--500.

\bibitem{li2014dynamic}
C.-p. Li, G.~S. Paschos, L.~Tassiulas, and E.~Modiano, ``Dynamic overload
  balancing in server farms,'' in \emph{2014 IFIP Networking Conference}.\hskip
  1em plus 0.5em minus 0.4em\relax IEEE, 2014, pp. 1--9.

\bibitem{georgiadis2006optimal}
L.~Georgiadis and L.~Tassiulas, ``Optimal overload response in sensor
  networks,'' \emph{IEEE Transactions on Information Theory}, vol.~52, no.~6,
  pp. 2684--2696, 2006.

\bibitem{como2012robust}
G.~Como, K.~Savla, D.~Acemoglu, M.~A. Dahleh, and E.~Frazzoli, ``Robust
  distributed routing in dynamical networks—part i: Locally responsive
  policies and weak resilience,'' \emph{IEEE Transactions on Automatic
  Control}, vol.~58, no.~2, pp. 317--332, 2012.

\bibitem{shah2011fluid}
D.~Shah and D.~Wischik, ``Fluid models of congestion collapse in overloaded
  switched networks,'' \emph{Queueing Systems}, vol.~69, no.~2, p. 121, 2011.

\bibitem{perry2016chattering}
O.~Perry and W.~Whitt, ``Chattering and congestion collapse in an overload
  switching control,'' \emph{Stochastic Systems}, vol.~6, no.~1, pp. 132--210,
  2016.

\bibitem{venkataramanan2013queue}
V.~Venkataramanan and X.~Lin, ``On the queue-overflow probability of wireless
  systems: A new approach combining large deviations with lyapunov functions,''
  \emph{IEEE transactions on information theory}, vol.~59, no.~10, pp.
  6367--6392, 2013.

\bibitem{zhang2022aequitas}
Y.~Zhang, G.~Kumar, N.~Dukkipati, X.~Wu, P.~Jha, M.~Chowdhury, and A.~Vahdat,
  ``Aequitas: admission control for performance-critical rpcs in datacenters,''
  in \emph{Proceedings of the ACM SIGCOMM 2022 Conference}, 2022, pp. 1--18.

\bibitem{poutievski2022jupiter}
L.~Poutievski, O.~Mashayekhi, J.~Ong, A.~Singh, M.~Tariq, R.~Wang, J.~Zhang,
  V.~Beauregard, P.~Conner, S.~Gribble \emph{et~al.}, ``Jupiter evolving:
  Transforming google's datacenter network via optical circuit switches and
  software-defined networking,'' in \emph{Proceedings of the ACM SIGCOMM 2022
  Conference}, 2022, pp. 66--85.

\bibitem{zheng2023traffic}
A.~X. Zheng, J.~Zhang, R.~Wang, and L.~Poutievski, ``How traffic analytics
  shapes traffic engineering, topology engineering, and capacity planning of
  jupiter,'' in \emph{2023 Optical Fiber Communications Conference and
  Exhibition (OFC)}.\hskip 1em plus 0.5em minus 0.4em\relax IEEE, 2023, pp.
  1--3.

\bibitem{wang2023topoopt}
W.~Wang, M.~Khazraee, Z.~Zhong, M.~Ghobadi, Z.~Jia, D.~Mudigere, Y.~Zhang, and
  A.~Kewitsch, ``$\{$TopoOpt$\}$: Co-optimizing network topology and
  parallelization strategy for distributed training jobs,'' in \emph{20th
  USENIX Symposium on Networked Systems Design and Implementation (NSDI 23)},
  2023, pp. 739--767.

\bibitem{ballani2020sirius}
H.~Ballani, P.~Costa, R.~Behrendt, D.~Cletheroe, I.~Haller, K.~Jozwik,
  F.~Karinou, S.~Lange, K.~Shi, B.~Thomsen \emph{et~al.}, ``Sirius: A flat
  datacenter network with nanosecond optical switching,'' in \emph{Proceedings
  of the Annual conference of the ACM Special Interest Group on Data
  Communication on the applications, technologies, architectures, and protocols
  for computer communication}, 2020, pp. 782--797.

\bibitem{fu2019network}
X.~Fu and E.~Modiano, ``Network interdiction using adversarial traffic flows,''
  in \emph{IEEE INFOCOM 2019-IEEE Conference on Computer Communications}.\hskip
  1em plus 0.5em minus 0.4em\relax IEEE, 2019, pp. 1765--1773.

\bibitem{tian2017articulation}
L.~Tian, A.~Bashan, D.-N. Shi, and Y.-Y. Liu, ``Articulation points in complex
  networks,'' \emph{Nature communications}, vol.~8, no.~1, pp. 1--9, 2017.

\bibitem{ferguson2021orion}
A.~D. Ferguson, S.~D. Gribble, C.-Y. Hong, C.~E. Killian, W.~Mohsin, H.~Muehe,
  J.~Ong, L.~Poutievski, A.~Singh, L.~Vicisano \emph{et~al.}, ``Orion: Google's
  software-defined networking control plane.'' in \emph{NSDI}, 2021, pp.
  83--98.

\bibitem{dixit2013impact}
A.~Dixit, P.~Prakash, Y.~C. Hu, and R.~R. Kompella, ``On the impact of packet
  spraying in data center networks,'' in \emph{2013 Proceedings IEEE
  INFOCOM}.\hskip 1em plus 0.5em minus 0.4em\relax IEEE, 2013, pp. 2130--2138.

\bibitem{zhou2014wcmp}
J.~Zhou, M.~Tewari, M.~Zhu, A.~Kabbani, L.~Poutievski, A.~Singh, and A.~Vahdat,
  ``Wcmp: Weighted cost multipathing for improved fairness in data centers,''
  in \emph{Proceedings of the Ninth European Conference on Computer Systems},
  2014, pp. 1--14.

\bibitem{al2008scalable}
M.~Al-Fares, A.~Loukissas, and A.~Vahdat, ``A scalable, commodity data center
  network architecture,'' \emph{ACM SIGCOMM computer communication review},
  vol.~38, no.~4, pp. 63--74, 2008.

\bibitem{singh2015jupiter}
A.~Singh, J.~Ong, A.~Agarwal, G.~Anderson, A.~Armistead, R.~Bannon, S.~Boving,
  G.~Desai, B.~Felderman, P.~Germano \emph{et~al.}, ``Jupiter rising: A decade
  of clos topologies and centralized control in google's datacenter network,''
  \emph{ACM SIGCOMM computer communication review}, vol.~45, no.~4, pp.
  183--197, 2015.

\bibitem{fu2019fundamental}
X.~Fu and E.~Modiano, ``Fundamental limits of volume-based network dos
  attacks,'' \emph{Proceedings of the ACM on Measurement and Analysis of
  Computing Systems}, vol.~3, no.~3, pp. 1--36, 2019.

\bibitem{como2012robust2}
G.~Como, K.~Savla, D.~Acemoglu, M.~A. Dahleh, and E.~Frazzoli, ``Robust
  distributed routing in dynamical networks--part ii: Strong resilience,
  equilibrium selection and cascaded failures,'' \emph{IEEE Transactions on
  Automatic Control}, vol.~58, no.~2, pp. 333--348, 2012.

\bibitem{zhang2021gemini}
M.~Zhang, J.~Zhang, R.~Wang, R.~Govindan, J.~C. Mogul, and A.~Vahdat, ``Gemini:
  Practical reconfigurable datacenter networks with topology and traffic
  engineering,'' \emph{arXiv preprint arXiv:2110.08374}, 2021.

\bibitem{da2016atlantic}
A.~S. da~Silva, J.~A. Wickboldt, L.~Z. Granville, and A.~Schaeffer-Filho,
  ``Atlantic: A framework for anomaly traffic detection, classification, and
  mitigation in sdn,'' in \emph{NOMS 2016-2016 IEEE/IFIP Network Operations and
  Management Symposium}.\hskip 1em plus 0.5em minus 0.4em\relax IEEE, 2016, pp.
  27--35.

\bibitem{altner2010maximum}
D.~S. Altner, {\"O}.~Ergun, and N.~A. Uhan, ``The maximum flow network
  interdiction problem: valid inequalities, integrality gaps, and
  approximability,'' \emph{Operations Research Letters}, vol.~38, no.~1, pp.
  33--38, 2010.

\bibitem{tassiulas1990stability}
L.~Tassiulas and A.~Ephremides, ``Stability properties of constrained queueing
  systems and scheduling policies for maximum throughput in multihop radio
  networks,'' in \emph{29th IEEE Conference on Decision and Control}.\hskip 1em
  plus 0.5em minus 0.4em\relax IEEE, 1990, pp. 2130--2132.

\end{thebibliography}
\bibliographystyle{IEEEtran}

\appendix

\subsection{Proof of Theorem \ref{thm:multiplicative}}

\begin{proof}
Based on Proposition \ref{prop:boundary_property_maxloss}, there exists one optimal routing attack where each adversarial node in $\mathcal{V}_A$ sends all its traffic to an egress node. Consider an optimal routing attack w.l.o.g. where the ingress node $s_i$ sends all traffic to the egress node $d_{j_i}$, i.e., $x_{s_id_{j_i}}=1$. Denote the total overload of the optimal attack as $\Delta_{OPT}$. Multiple ingress nodes may choose the same egress node (i.e., $d_{j_{i_1}}$ and $d_{j_{i_2}}$ may be the same for some $i_1 \neq i_2$), hence we remove the repeated elements in $\{d_{j_i}\}_{i=1}^m$ into $\{d_j\}_{j=1}^{L^{\prime}}$, where $L^{\prime}$ denotes the number of egress nodes that receive traffic from at least one ingress node. Furthermore, among $L^{\prime}$ egress nodes, there may exist nodes with no overload where total ingress is no greater than total egress. We further remove these nodes from $\{d_j\}_{j=1}^{L^{\prime}}$ into $\{d_j\}_{j=1}^{L}$ w.l.o.g., i.e. $L$ is the number of overloaded egress nodes under this optimal routing attack. Note that $L\leq |\mathcal{V}_A|$. In the following, we prove that the approximation ratio under Approach 1 in Algorithm \ref{Alg:max_loss_approx_multiplicative} is at least $1/L$, and that under Approach 2 is at least $L/|\mathcal{V}_A|$. With these proved, taking the solution that causes higher loss between these two methods causes approximation ratio $\max\{1/L, L/|\mathcal{V}_A|\} \geq 1/\sqrt{|\mathcal{V}_A|}$. 

\emph{Approach 1}: Denote the total overload under the output of Approach 1 as $\Delta_{1}$. The first step in Approach 1 is the find $j^*\leftarrow \argmax_{j}\sum_{(s_i,d_j)\in \mathcal{E}} \lambda_i - \mu_j$. If there is no overload at egress node $d_{j^*}$, then $\Delta_{1} =\Delta_{OPT}=0$, i.e., no overload can be caused under any routing attack. When there exists overload, then $\Delta_{1} \geq \sum_{(s_i,d_{j^*})\in \mathcal{E}} \lambda_i - \mu_{j^*} > 0$, while for any overloaded egress node in $\{d_j\}_{j=1}^{L}$, the overload is less than at $d_{j^*}$ under Approach 1. 
Therefore 
$$
\frac{\Delta_1}{\Delta_{OPT}} \geq 
\frac{\sum_{(s_i,d_{j^*})\in \mathcal{E}} \lambda_i - \mu_{j^*}}{L \left(\sum_{(s_i,d_{j^*})\in \mathcal{E}} \lambda_i - \mu_{j^*}\right)} = \frac{1}{L}.
$$

\emph{Approach 2}: We consider when Approach 2 runs one specific iteration to overload a single egress node $d_{j^*}\in \mathcal{V}_D$ according to the definition of $PSO$ in \eqref{eqn:PSO}, and we explain the idea based on the example in Fig.~\ref{fig:Paper_example_approximation_multiplicative_approach2}. Denote the set of connected ingress nodes that send all their traffic to $d_{j^*}$ in this iteration as $\mathcal{S}_{j^*} = \arg\max_{\mathcal{S}} PSO[d_{j^*}]$. In Fig.~\ref{fig:Paper_example_approximation_multiplicative_approach2}, $j^*=3$ and $\mathcal{S}_{3}= \{s_4,s_5,s_6\}$, where the routing is highlighted in blue\footnote{Algorithm will choose the blue routing when $\lambda_3$ is small so that it will lower down the overload-per-ingress.}. We then denote as  $\mathcal{D}_{j^*}$ the set of egress nodes to which the optimal routing attack routes all the traffic in $\mathcal{S}_{j^*}$. In Fig.~\ref{fig:Paper_example_approximation_multiplicative_approach2}, $\mathcal{D}_{j^*}=\mathcal{D}_{3}=\{d_2,d_4\}$, with the optimal routing attack highlighted in red. Then we evaluate the ratio between the loss at $d_{j^*}$ under Approach 2, denoted by $\delta_{d_{j^*}}^{(2)}$, and the total loss over $\mathcal{D}_{j^*}$ under optimal solution $\{\mathbf{x}_i^*\}_{s_i \in \mathcal{V}_A}$, denoted by $\sum_{j\in \mathcal{D}_{j^*}}\delta_{d_j}^{OPT}$, which can be lower bounded as follows
\begin{equation}
\begin{aligned}
&\frac{\delta_{d_{j^*}}^{(2)}}{\sum_{j\in\mathcal{D}_{j^*}}\delta_{d_j}^{OPT}}
= \frac{\left(\sum_{s_i \in \mathcal{S}_{j^*}}\lambda_i\right) - \mu_{j^*}}{\sum_{d_j\in \mathcal{D}_{j^*}}\left(\left(\sum_{s_i: x_{s_id_j}^*=1} \lambda_i \right) - \mu_j\right)}
\\& = \frac{|\mathcal{S}_{j^*}|\frac{\left(\sum_{s_i \in \mathcal{S}_{j^*}}\lambda_i\right) - \mu_{j^*}}{|\mathcal{S}_{j^*}|}}{\sum_{d_j\in \mathcal{D}_{j^*}}|\mathcal{S}_{j}^{OPT}|\frac{\left(\sum_{s_i\in |\mathcal{S}_{j}^{OPT}|} \lambda_i \right) - \mu_j}{|\mathcal{S}_{j}^{OPT}|}}
\\& \overset{(a)}{\geq} \frac{|\mathcal{S}_{j^*}|\frac{\left(\sum_{s_i \in \mathcal{S}_{j^*}}\lambda_i\right) - \mu_{j^*}}{|\mathcal{S}_{j^*}|}}{\sum_{d_j\in \mathcal{D}_{j^*}}|\mathcal{S}_{j}^{OPT}|\frac{\left(\sum_{s_i \in \mathcal{S}_{j^*}}\lambda_i\right) - \mu_{j^*}}{|\mathcal{S}_{j^*}|}} \\&= \frac{|\mathcal{S}_{j^*}|}{\sum_{d_j\in \mathcal{D}_{j^*}} |\mathcal{S}_{j}^{OPT}|} \overset{(b)}{\geq} \frac{|\mathcal{S}_{j^*}|}{|\mathcal{V}_A|} \overset{(c)}{\geq} \frac{L}{|\mathcal{V}_A|}
\end{aligned}
\end{equation}
where inequality $(a)$ holds due the optimality of $d_{j^*}$ among all egress nodes w.r.t. \eqref{eqn:PSO}, while inequality $(b)$ holds due to $\cup_{d_j\in \mathcal{D}_{j^*}} \mathcal{S}_{j}^{OPT} \subseteq \mathcal{V}_A$ (in the example of  Fig.~\ref{fig:Paper_example_approximation_multiplicative_approach2} $\cup_{d_j\in \mathcal{D}_{j^*}} \mathcal{S}_{j}^{OPT} = \mathcal{V}_A$), and $(c)$ holds since each ingress node sends all its traffic to a single egress node, thus the number of overloaded egress nodes $L$ under optimal attack is at most $|\mathcal{S}_{j^*}|$. In Fig.~\ref{fig:Paper_example_approximation_multiplicative_approach2}, $|\mathcal{S}_{j^*}|=|\mathcal{S}_{3}|=3$, $L=2$, and $|\mathcal{V}_A|=7$.

\begin{figure}[!htbp]
\centering
\includegraphics[width=0.85\linewidth]{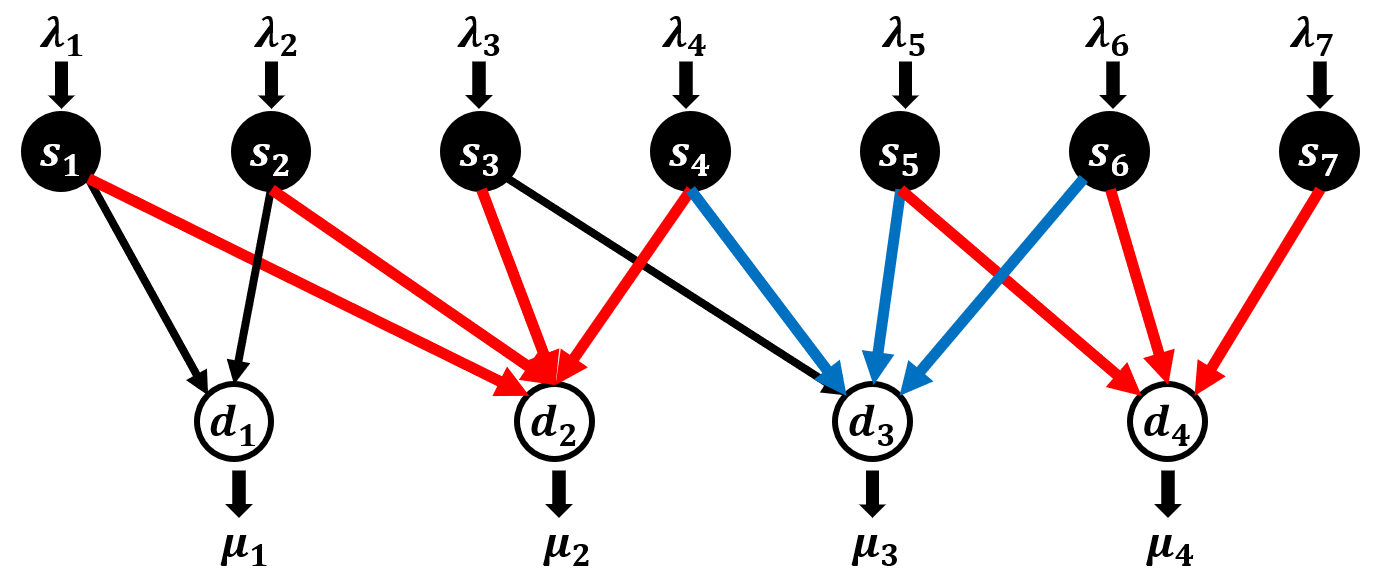}
\caption{Examples for Proof of Approach 2}
\label{fig:Paper_example_approximation_multiplicative_approach2}
\end{figure}

\end{proof}

\subsection{Proof of Theorem \ref{thm:additive}}

\begin{proof}
We conduct the proof through three steps: (i) $2\times 2$ single-hop networks; (ii) general single-hop networks while the optimal routing attack only overloads a single egress node; (iii) extension of results in previous step to general single-hop networks.

\emph{Case 1: $2\times 2$ networks}. Consider the $2\times 2$ example in Fig.~\ref{fig:Paper_example_approximation_additive} with $\mathcal{V}_S = \{s_1,s_2\}$ and $\mathcal{V}_D = \{d_1,d_2\}$, and general $\boldsymbol{\lambda}=(\lambda_1,\lambda_2)$ and $\boldsymbol{\mu}=(\mu_1,\mu_2)$, where we introduce a meta source and destination node equivalently. $s_1$ can only send traffic to $d_1$. The routing of $s_2$ under Algorithm \ref{Alg:max_loss_approx_additive} is determined by the comparison between $\mu_1/(x_1+x_2)=\mu_1$ and $\mu_2/x_2$: Sending all traffic to $d_1$ if $\mu_1 < \mu_2/x_2$ else $d_2$. If $\mu_1\leq\mu_2/x_2$, then $\Delta_{ALG}$, the loss by Algorithm \ref{Alg:max_loss_approx_additive} is $[\lambda - \mu_1]^+$, and in this case the optimal solution has the same loss $\Delta_{OPT}=\Delta_{ALG}$. If $\mu_1> \mu_2/x_2$, then the $\Delta_{ALG}= [\lambda x_1 - \mu_1]^+ + (\lambda x_2 - \mu_2) $ where $\lambda x_2 > \mu_2$ otherwise node $s_2$ will not route to $d_2$. The only possible case that $\Delta_{OPT} > \Delta_{ALG}$ is that the optimal solution is $x_{s_2d_1}=1$. In this case, we have the following gap
$$
\begin{aligned}
&\frac{\Delta_{OPT}-\Delta_{ALG}}{\lambda}
= \frac{(\lambda-\mu_1) - [\lambda x_1 - \mu_1]^+ - (\lambda x_2 - \mu_2)}{\lambda}
\\&= \frac{\mu_2 + \lambda x_1 - \mu_1 - [\lambda x_1 - \mu_1]^+}{\lambda}
\overset{(i)}{\leq} \frac{\mu_2}{\mu_1}x_1 \overset{(ii)}{\leq} x_1x_2 \overset{(iii)}{\leq} \frac{1}{4} 
\end{aligned}
$$
where $(i)$ holds as $\lambda = \frac{\mu_1}{x_1}$ maximizes the gap, which is
$x_1 - \frac{\mu_1-\mu_2}{\lambda}$ when $\lambda < \frac{\mu}{x_1}$ and $ 
\frac{\mu_2}{\lambda}$ and $\lambda \geq \frac{\mu}{x_1}$, 
and note that $\mu_1 > \mu_2/x_2 \geq \mu_2$.
$(ii)$ holds due to $\mu_1> \mu_2/x_2$, and $(iii)$ holds due to $x_1 + x_2 = 1$, thus $x_1x_2\leq 1/4$.

\emph{Case 2: $M\times N$ networks with the optimal routing overloading one egress node}. We extend the example in Fig.~\ref{fig:Paper_example_approximation_additive} to general $M=K$ ingress nodes and $N=K$ egress nodes w.l.o.g., where all the ingress nodes are connected to $d_1$, while $s_i$ is at least connected to $d_i$. We consider the extreme case that will cause biggest gap between the optimal solution and the output from Algorithm \ref{Alg:max_loss_approx_additive}, where the optimal routing is $x_{s_id_1}^*=1$, while the algorithm outputs the solution $x_{s_id_i}=1$, $\forall i=1,\cdots,K$, under the condition $\frac{\mu_{K}}{x_K} \leq \frac{\mu_{K-1}}{x_{K-1}} \cdots \leq \frac{\mu_2}{x_2} \leq \frac{\mu_1}{\sum_{i=1}^K x_i}=\mu_1$. This can be derived by induction based on Case 1. Then we have the bound
$$
\begin{aligned}
&\frac{\Delta_{OPT}-\Delta_{ALG}}{\lambda}
= \frac{\left(\lambda \sum_{i=1}^K x_i - \mu_1\right) - \sum_{j=1}^K (\lambda x_j - \mu_j)}{\lambda}
\\&= \frac{\lambda x_1 - \mu_1 - [\lambda x_1 - \mu_1]^+ + \sum_{j=2}^K \mu_j}{\lambda}
\\&\leq \frac{\sum_{j=2}^K \mu_j}{\mu_i}x_1
\leq x_1(\sum_{i=2}^Kx_i) = x_1(1-x_1) \leq \frac{1}{4}
\end{aligned}
$$

\emph{Case 3: Special case of $M\times N$ networks with the optimal routing attack overloading multiple egress nodes}. We now extend to multiple overloaded egress nodes under optimal routing attack. Here we first consider a special case in Fig.~\ref{fig:Paper_Additive_Approximation_Proof_2}, where the optimal routing attack overloads two egress nodes $d_1$ and $d_4$ highlighted in red, while Algorithm \ref{Alg:max_loss_approx_additive} outputs the routing attack as highlighted in blue, under the conditions that $\mu_2 / x_2, \mu_3 / x_3 \leq \mu_1 / (x_1 + x_2 + x_3)$ and $\mu_5 / x_5 \leq \mu_4 / (x_4 + x_5)$. Based on the results in Case 2, we can derive
$$
\begin{aligned}
&\frac{\Delta_{OPT}-\Delta_{ALG}}{\lambda}
= \frac{1}{\lambda}\left( \lambda x_1 - \mu_1 - [\lambda x_1  - \mu_1]^+  + \mu_2 + \mu_3\right)
\\& \qquad \qquad + \frac{1}{\lambda}\left( \lambda x_4 - \mu_4 - [\lambda x_4  - \mu_4]^+  + \mu_5 \right)
\\& \leq \frac{x_1+x_2+x_3}{\lambda (x_1+x_2+x_3)}\left( \lambda x_1 - \mu_1 - [\lambda x_1  - \mu_1]^+  + \mu_2 + \mu_3\right)
\\& \quad + \frac{x_4+x_5}{\lambda(x_4+x_5)}\left( \lambda x_4 - \mu_4 - [\lambda x_4  - \mu_4]^+  + \mu_5 \right)
\\& 
\overset{(a)}{\leq} \frac{1}{4}(x_1+x_2+x_3) + \frac{1}{4}(x_4+x_5)
= \frac{1}{4}
\end{aligned}
$$
where $(a)$ is based on the result from Case 2. We will use this result in the following discussion over general single-hop networks in Case 4.

\begin{figure}[!htbp]
\centering
\includegraphics[width=0.8\linewidth]{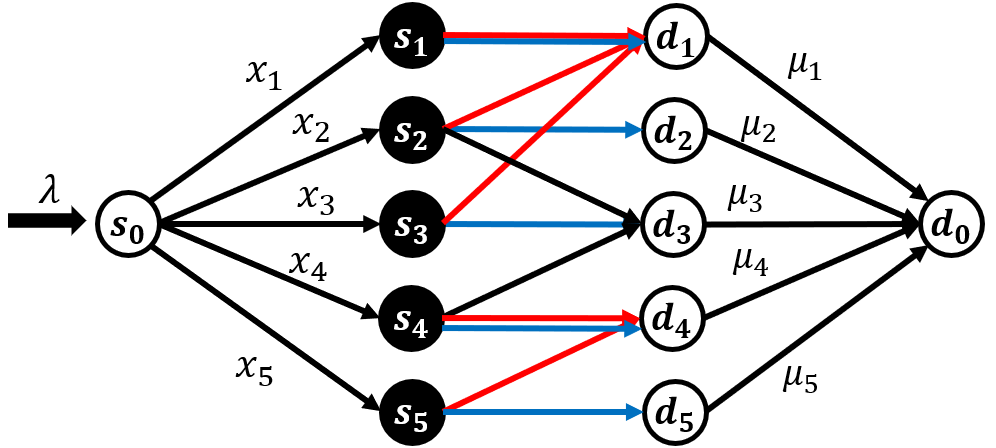}
\caption{Proof for Case 3 of Theorem \ref{thm:additive}}
\label{fig:Paper_Additive_Approximation_Proof_2}
\end{figure}

\emph{Case 4: General single-hop networks}. The optimal attack may overload multiple egress nodes. In this case, we can transform the original graph so that results of Case 2 above can be applied. We give an example in Fig.~\ref{fig:Paper_Additive_Approximation_Proof_3}. Suppose that the optimal routing attack as highlighted in red, while the routing attack from Algorithm \ref{Alg:max_loss_approx_additive} that minimizes no loss throughput $\lambda^*$ is to route traffic from $s_2, s_3$ and $s_4$ to $d_3$, highlighted in blue. Given the routing attack output by Algorithm \ref{Alg:max_loss_approx_additive}, denoted by $\mathbf{x}_{\mathcal{A}}^{ALG}$ we show the following transformation of network topology will keep the same overload as $\mathbf{x}_{\mathcal{A}}^{ALG}$: For each overloaded egress node $d_j$ under $\mathbf{x}_{\mathcal{A}}^{ALG}$, suppose that it receives traffic from $\mathcal{S}_j = \{s_i\}_{i:x_{ij}^{ALG}=1}$, then we decompose node $d_j$ into $|\mathcal{S}_j|$ nodes denoted by $\{d_j^{(k)}\}_{k=1}^{|\mathcal{S}_j|}$, each connect to the $k$-th ingress node in $\mathcal{S}_j$ denoted by $s^{(k)}$. The new service rate for node $d_j^{(k)}$ is $\mu_{j}^{(k)} = \frac{x_{s^{(k)}}}{\sum_{k=1}^{|\mathcal{S}_j|} x_{s^{(k)}}}\mu_j$. It is easy to verify that applying Algorithm \ref{Alg:max_loss_approx_additive} to the transformed network outputs a routing attack solution leads to same overload, as the minimum $\lambda^*$ that saturates $d_j^{(k)}$, $\forall k$ is the same, equal to the $\lambda^*$ that saturates $d_j$ in the original graph. For example in Fig.~\ref{fig:Paper_Additive_Approximation_Proof_3}, node $d_3$ is decomposed into 3 new nodes with new service rates denoted by $\mu_3^{(2)}, \mu_3^{(3)}$ and $\mu_3^{(4)}$.
Note that this transformation is only for proof, where in Algorithm \ref{Alg:max_loss_approx_additive} we use the original graph.

\begin{figure}[!htbp]
\centering
\includegraphics[width=1.0\linewidth]{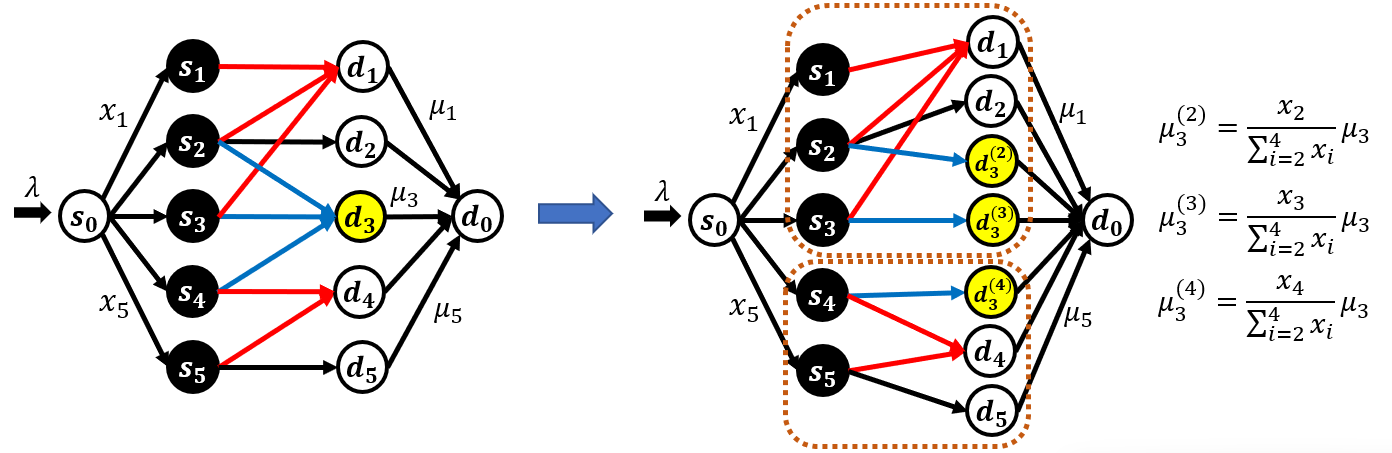}
\caption{Proof for Case 4 of Theorem \ref{thm:additive}}
\label{fig:Paper_Additive_Approximation_Proof_3}
\end{figure}

With the above transformation into multiple basic units in Case 1 (the dashed boxes in Fig.~\ref{fig:Paper_Additive_Approximation_Proof_3}), we can directly apply the analysis for Case 2, and obtain the upper bound of $1/4$.
\end{proof}

\subsection{Proof of Theorem \ref{thm:np_hard_node_selection}}

\begin{proof}
Under $K=O(1)$, the following brute-force algorithm outputs the optimal solution: Enumerate all $\binom{|\mathcal{V}_{cand}|}{K}$ combinations of nodes to attack, and under each combination apply brute force method for $\lambda^*$ minimization given by Theorem \ref{thm:boundary_property}, and apply Proposition \ref{prop:boundary_property_maxloss} for loss maximization. The time complexity are both $O\left(\binom{|\mathcal{V}_{cand}|}{K}\times NK^{d_{\max}} \right) = O(|\mathcal{V}_{cand}|^K K^{d_{\max}} N)$ where $d_{\max}$ is the maximum degree among the $K$ selected nodes to attack.

Under $K=O(N)$, we reduce Set Cover to the problem of node selection for $\lambda^*$-minimization, and the extension to loss maximization is trivial. Given an instance of Set Cover as done in the proof of Theorem \ref{thm:np_hard}, we construct the graph as in Fig.~\ref{fig:Paper_NP_hard_nodeselection}, which adds a node $T$ compared with Fig.~\ref{fig:Paper_NP_hard_maxloss} and the added links are highlighted. The candidate node set is $\{s_i\}_{i=1}^m \cup \{d_j\}_{j=1}^n$, and their default routing policies are all dispatching all the traffic to the intermediate node $T$. All links have unbounded capacity except link $(d_0,T)$ with $c_{d_0T}=1$. Without routing attack, the no-loss throughput $\lambda^*=\infty$. We show that if we can solve the decision problem of node selection: \emph{Can we reduce $\lambda^*$ from $\infty$ to $1$ by attacking $m+k$ nodes?}, then there exists a polynomial-time algorithm to decide if there exists $k$ sets in $\{d_i\}_{i=1}^n$ that can cover all elements corresponding to $\{s_i\}_{i=1}^m$. To reduce $\lambda^*$ to $1$, all traffic flows need to go through link $(d_0,T)$ to node $T$, thus all the $m$ ingress nodes must be attacked to adjust their routing to one of egress nodes instead of directly to $T$. Then if attacking $k$ more nodes in $\{d_j\}_{j=1}^n$ can reduce $\lambda^*$ to $1$, then it guarantees that all the ingress nodes route traffic to one of the $k$ attacked egress nodes, and these $k$ nodes routing all the traffic to $d_0$, which is equivalent to being able to using $K$ sets to cover all elements corresponding to $\{s_i\}_{i=1}^m$.

\begin{figure}[!htbp]
\centering
\includegraphics[width=0.85\linewidth]{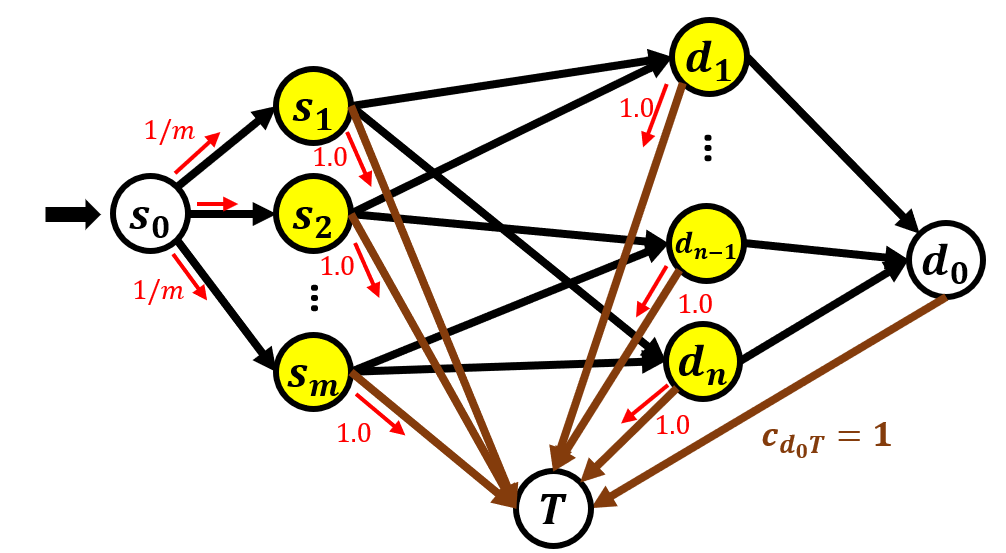}
\caption{NP-hardness of Node Selection Problem}
\label{fig:Paper_NP_hard_nodeselection}
\end{figure}

\end{proof}

\subsection{Proof of Theorem \ref{thm:node_selection_parallel}}

\begin{proof}
Algorithm \ref{Alg:optimal_node_selection_parallel} goes through each downstream link $(i,j)$ to $\mathcal{V}_{cand}$ (including $\mathcal{V}_{cand}$ themselves), and find the optimal choice of $K$ nodes to attack to minimize the arrival rate that saturates one of them $(i,j)$. Since all candidate nodes are parallel, we can separately evaluate the contribution of attacking a node $v\in \mathcal{V}_{cand}$ over $(i,j)$, denoted by $\Delta_v^{(i,j)}$, which is the difference between the arrival rate that saturates link $(i,j)$ with and without attack. We calculate $\Delta_v^{(i,j)}, \forall v \in \mathcal{V}_A^{up,i}$, and then sort the value over all $\mathcal{V}_A^{up,i}$ non-decreasingly. Denote $\Delta_v^i = \max_{(i,j)\in \mathcal{E}} \Delta_v^{(i,j)}$.
Then we can get the optimal choice to each node $i$ as follows, where the choices to each link starting from node $i$ are identical: If $|\mathcal{V}_A^{up,i}| > K$, we choose the top-$K$ nodes in $\mathcal{V}_A^{up,i}$ with highest $\Delta_v^{i}$ values; If $|\mathcal{V}_A^{up,i}| \leq K$, we choose all $|\mathcal{V}_A^{up,i}|$ candidate nodes to attack, where choosing $K-|\mathcal{V}_A^{up,i}|$ candidate nodes does not affect the saturation level of link $(i,j)$ since they are not upstream to node $i$. Denote the no-throughput loss under the choice $\mathcal{V}_A^{(i)}$ as $\lambda_{(i)}^*$, then the optimal node selection to attack is $\mathcal{V}_A^{(i^*)} = \arg\min_{i^*\in \mathcal{V}_A^{down}} \lambda_{(i)}^*$.
\end{proof}

\end{document}